\newtheorem{theorem}{Theorem}
\newtheorem{claim}{Claim}
\newtheorem{corollary}{Corollary}
\newtheorem{lemma}{Lemma}
\newtheorem{remark}{Remark}
\newenvironment{proof}[1][Proof]{\noindent\textbf{#1.} }{\ \rule{0.5em}{0.5em}}
\newcommand{\MMS}{\mathsf{MMS}}
\newcommand{\cV}{\mathcal{V}}
\newcommand{\cU}{\mathcal{U}}
\title{On Hill's Worst-Case Guarantee for Indivisible Bads}
	\author{ Bo Li$^1$ \hspace{30pt} Herv\'e Moulin$^2$ \hspace{30pt}  Ankang Sun$^3$ \hspace{30pt} Yu Zhou$^1$\\\small
		$^1$Department of Computing, The Hong Kong Polytechnic University, Hong Kong, China\\\small
		\texttt{comp-bo.li@polyu.edu.hk, csyzhou@comp.polyu.edu.hk}\\ \small
		$^2$Adam Smith Business School, University of Glasgow, Glasgow, UK\\ \small
		\texttt{herve.moulin@glasgow.ac.uk}\\ \small
		$^3$ Warwick Business School, University of Warwick, Warwick, UK\\\small
		\texttt{phd18as@mail.wbs.ac.uk}
}}
\begin{document}

	\maketitle
	
	\begin{abstract}
		When allocating objects among agents with equal rights, people often evaluate the fairness of an allocation rule by comparing their received utilities to a benchmark \textit{share} -- a function only of her own valuation and the number of agents. This share is called a \textit{guarantee} if for any profile of valuations there is an allocation ensuring the share of every agent. When the objects are indivisible goods, Budish [J. Political Econ., 2011] proposed MaxMinShare, i.e., the least utility of a bundle in the best partition of the objects, which is unfortunately not a guarantee. Instead, an earlier pioneering work by Hill [Ann. Probab., 1987] proposed for a share the worst-case MaxMinShare over all valuations with the same largest possible single-object value. Although Hill's share is more conservative than the  MaxMinShare, it is an actual guarantee and its computation is elementary, unlike that of the MaxMinShare which involves solving an NP-hard problem. We apply Hill’s approach to the allocation of indivisible bads (objects with disutilities or costs), and characterise the tight closed form of the worst-case MinMaxShare for a given value of the worst bad. We argue that Hill's share for allocating bads is effective in the sense of being close to the original MinMaxShare value, and there is much to learn about the guarantee an agent can be offered from the disutility of her worst single bad. Furthermore, we prove that the monotonic cover of Hill's share is the best guarantee that can be achieved in Hill's model for all allocation instances. 
	\end{abstract}

	\newpage

	\section{Introduction}
	The task is to fairly allocate a given pile of indivisible objects among 
	agents with equal rights but different preferences. Since
	the very beginning of the fair division literature \citep{Steinhaus49}, allocation rules have been evaluated in part by the worst-case utility they
	guarantee to each participant over all possible utility profiles of the other agents. 
	The higher the guarantee the safer it is for an agent clueless about
	the others' utilities and actions to participate in the allocation process
	defined by the rule.
	% [\textit{FN: alternative interpretation Babaioff: I
		% leave you the keys, and the guarantee is our contract}].
	% \bo{keep working on this.}
	
	Formally, the \textit{guarantee} offered by an allocation rule is a mapping from any
	utility function to the corresponding worst-case utility for an agent. 
	This function only depends upon the number of other agents (but not on the
	particulars of the agents) and the domain where their utilities come from.
	When the objects are divisible and desirable (i.e., \textit{goods}), and utilities are
	additive, the optimal (largest feasible) guarantee is $\frac{1}{n}\cdot v_{i}(M)$,
	where $M$ is the set of goods, $n$ is the number of agents, and $i$ is a generic agent with utility function $v_i$ \cite{dubins1961cut}. 
	% \bo{An allocation that satisfies the guarantee of $\frac{1}{n}\cdot v_{i}(M)$ is called proportional \citep{Steinhaus49}. }
	But in all 
	the important practical contexts where the objects are indivisible while utilities
	remain additive, the search for maximal guarantees (those that cannot be
	improved over the entire domain of utilities) cannot be that simple. The
	difficulty is obvious when we consider the ``one diamond and several
	worthless rocks'' example: unless we throw away the diamond,
	all agents but
	one end up with a negligible fraction of $v_{i}(M)$.

	To capture exactly the diamond effect when indivisible goods are distributed,
	% These results are closely related to the shared-based fairness notions, and particularly, 
	the concept of MaxMinShare \citep{budish2011combinatorial} has been intensely studied over the last decade 
	\citep{DBLP:journals/talg/AmanatidisMNS17,DBLP:journals/jacm/KurokawaPW18,DBLP:conf/sigecom/HuangL21}. 
	An agent's MaxMinShare is motivated by an imaginary divide-and-choose experiment: the agent gets the chance to partition the objects into $n$ bundles, but is the last one to choose one bundle. 
	Then, the agent's MaxMinShare is the
	utility of her worst share in the best $n$-partition of the objects.
	MaxMinShare bears some disadvantages.
	On the one hand, the definition is not trivial and computing its value involves solving an NP-hard problem.
	On the other hand, %MaxMinShare is not a guarantee; 
	% By definition, $V_{n}(\alpha )$ is a lower bound on the MMS of every utility
	% in $\mathcal{V}_{n}(\alpha )$. 
	in some rare cases, the MaxMinShare is not a feasible
	guarantee \citep{DBLP:conf/sigecom/ProcacciaW14};
	so far the best-known approximation is that a $(\frac{3}{4}+o(1))$ fraction of MaxMinShare can
	be guaranteed and implemented in polynomial time \citep{DBLP:conf/sigecom/GhodsiHSSY18,DBLP:journals/ai/GargT21}. 
	
	Back to 1980s, \citet{hill1987partitioning} also investigated how the indivisibility of the objects affect the agent's guaranteed share {by restricting} attention to additive utility functions $v$ such that $v(M)=1$
	(without loss of generality) and the most valuable object of $v$ is worth {$\alpha$, $0 < \alpha < 1$}; we
	write $\mathcal{V}(\alpha )$ for this subdomain of additive valuations.
	Hill proposed to study the worst-case MaxMinShare among all valuations in $\mathcal{V}(\alpha)$, which is referred to as the {\em Hill's
		share} throughout this paper. 
	In \citep{hill1987partitioning}, Hill computed for every $n\geq 2$ a function $V_{n}:[0,1] \to [0,\frac{1}{n}]$, which lower-bounds Hill's share.
	By definition, $V_{n}(\alpha )$ is also a lower bound on the MaxMinShare of every utility in $\mathcal{V}(\alpha)$.
	Depending on $\alpha 
	$ the guarantee $V_{n}(\alpha )$ may or may not improve upon the $%
	\frac{3}{4}$-approximate MaxMinShare guarantee,
	% (more on this below -- \textit{i e numerical simulations --)}, 
	but its great advantage is that whether a given allocation
	meets the guarantee for a given utility is immediately verifiable.
	%[\textit{% please check terminology}], 
	%whereas computing its MaxMinShare is NP-hard.
	% (\textit{correct terminology?}).
	% but its advantage is that whether a given allocation
	% meets the share for a given utility is immediately verifiable.
	Furthermore, Hill proved that if every
	agent's utility is in $\mathcal{V}(\alpha )$, it is always possible to
	simultaneously give each agent a share worth at least $V_{n}(\alpha )$, i.e., $V_{n}(\cdot)$ is a guarantee.
	\citet{DBLP:conf/wine/MarkakisP11} proved a stronger result: {the share $V_n(\alpha_i)$ where $\max_{e\in M}v_{i}(e) = \alpha _{i}$}
	is a bona fide guarantee over the full domain of additive and
	nonnegative utilities. Moreover, an allocation implementing these individual
	guarantees can be computed in polynomial time.
	\citet{DBLP:journals/tcs/GourvesMT15} found that $V_{n}(\alpha )$ is not the tight characterisation of Hill's share and proved a tighter function.
	An interesting fact is that the tight function is not monotone in $\alpha$, but its exact computation is still open.
	
	All the aforementioned work, as well as the majority of fair division literature, focuses on the allocation of goods, and the mirror problem of bads (undesirable 
	objects like chores, liabilities when a partnership is dissolved, etc.; see \cite{DBLP:journals/mp/LenstraST90}) is not as well understood as that of goods, which motivates the current work.

	\subsection{Our Problem and Results}

	We apply Hill's approach to the allocation of indivisible bads 
	% (undesirable 
	% objects like chores, liabilities when a partnership is dissolved, etc.; see \cite{DBLP:journals/mp/LenstraST90}) 
	and prove a set of results parallel to those just
	mentioned. % -- as well as to \citep{DBLP:journals/tcs/GourvesMT15}, on which more below --. 
	The diamond effect now becomes the ``chore from hell'' effect where the 
	\textit{dis}utility is concentrated in a single bad, and now $\mathcal{V}(\alpha )$ collects all disutility functions where the value of the
	worst bad equals $\alpha $, maintaining the normalisation $v(M)=1$.
	
	Our results for bads resemble those just mentioned for goods, and in
	addition, they make the connection between Hill's share and MinMaxShare
	({\textit{the largest disutility of a share in the best partition of the bads}}).
	To be more precise, we compute first the tight characterisation of Hill's share, refined to problems with a given number $m$ of bads, i.e., the exact upper bound $%
	\Delta _{n}^{\oplus }(\alpha ;m)$ of the MinMaxShare in the domain $\mathcal{%
		V}(\alpha; m)$, where ${\cV}(\alpha; m)$ contains the valuations over $m$ objects with the highest disutility being $\alpha$.
	This result is stated in Theorem \ref{thm:homogeneous}.
	% and the closed form of $\Delta _{n}^{\oplus }(\alpha ;m)$ is given in Definition \ref{def:w:m}.
	If $m$ is not restricted, i.e., ${\cV}(\alpha) = \bigcup_{m} {\cV}(\alpha; m)$ and $\Delta _{n}^{\oplus }(\alpha) = \max_m \Delta _{n}^{\oplus }(\alpha ;m)$, we illustrate the function $\Delta _{n}^{\oplus }(\alpha)$  for $n=2,3$ in Fig. \ref{fig:result1:homo}. 
	%\bo{See Theorem \ref{thm:homogeneous}.}
	Just like \citet{DBLP:journals/tcs/GourvesMT15} observed for the problem of goods, this function is not monotone in $\alpha $.
	In passing, we tighten the bounds proposed by \citet{hill1987partitioning} 
	and \citet{DBLP:journals/tcs/GourvesMT15} for the
	worst-case MaxMinShare in the two-agent problem of goods; see Remark \ref{rem:n=2}.

	\begin{figure}[htbp]
		\centering
		\begin{minipage}[t]{0.48\textwidth}
			\centering
			\includegraphics[width=7.5cm]{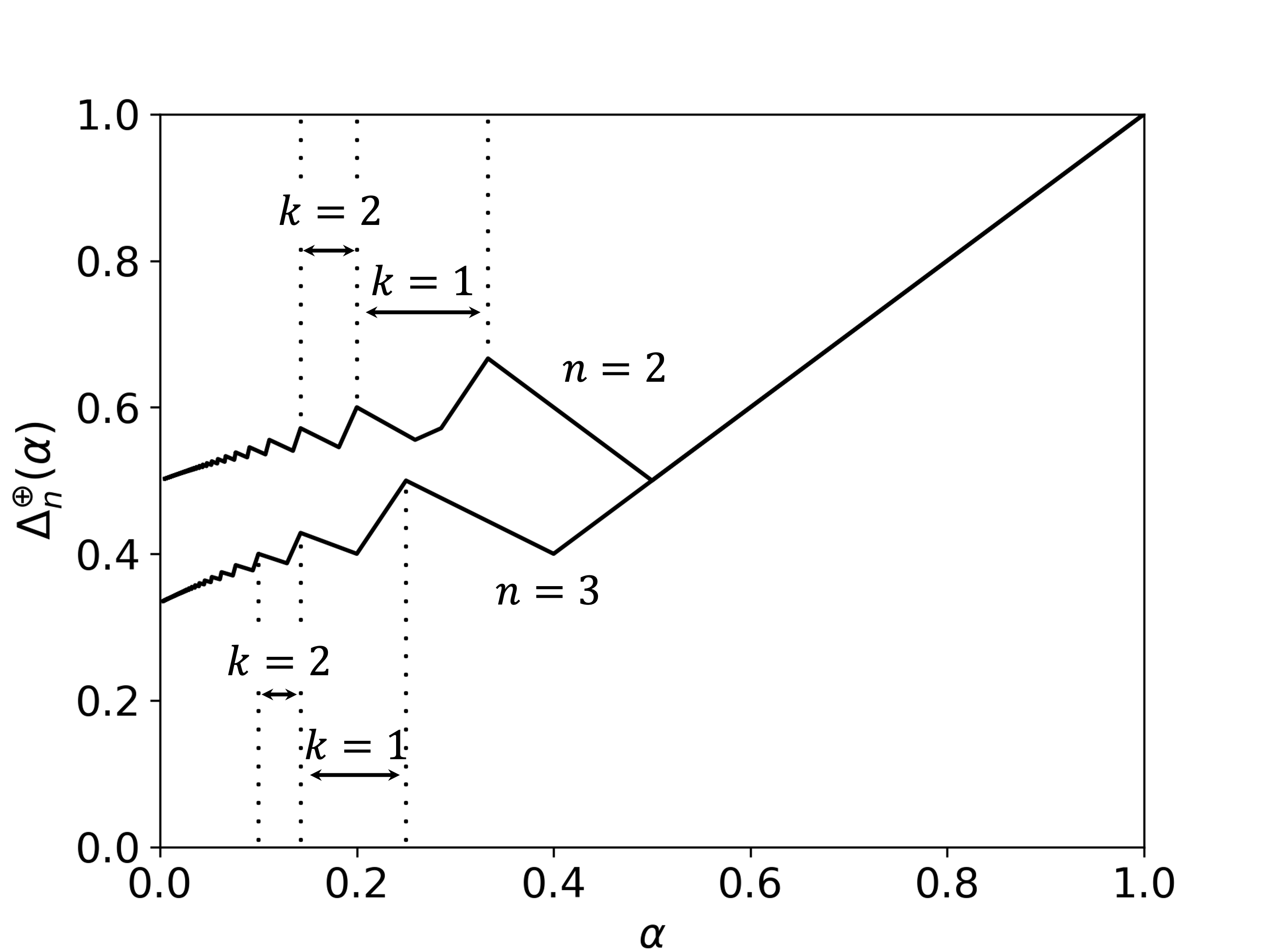}
			\caption{Hill's share $\Delta _{n}^{\oplus }(\alpha)$ when $n=2$\\
				and $3$ and $m$ is not restricted.}
			\label{fig:result1:homo}
		\end{minipage}
		\begin{minipage}[t]{0.48\textwidth}
			\centering
			\includegraphics[width=7.5cm]{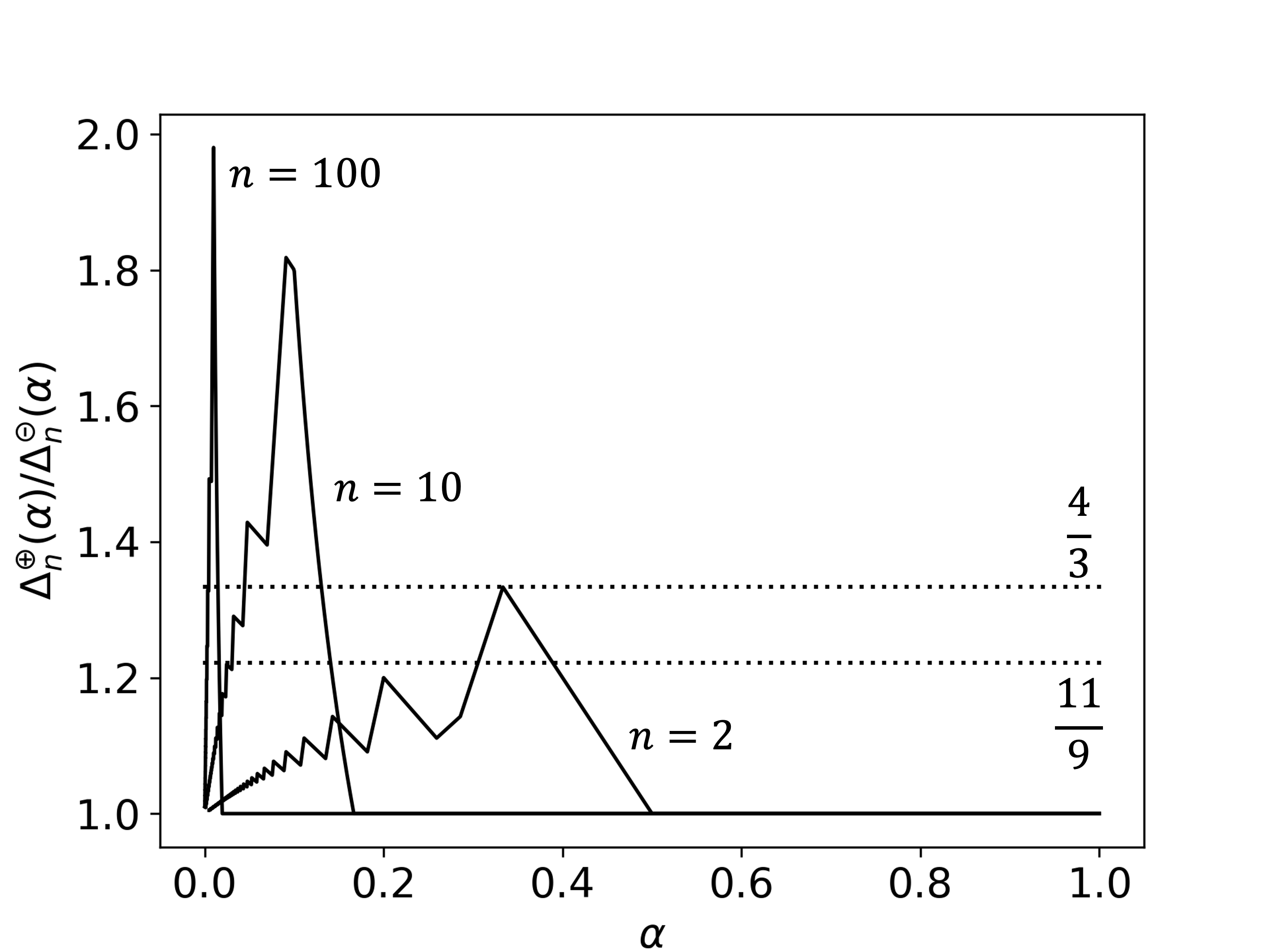}
			\caption{The ratio between the upper and lower bounds of the MinMaxShare of valuations in $\mathcal{V}(\alpha)$.
				$4/3$ and $11/9$ are two fractions of the MinMaxShare known to be achievable. }
			\label{fig:ratio:theoretical}
		\end{minipage}
	\end{figure}

	Compared to the MinMaxShare, Hill's share $\Delta _{n}^{\oplus } (\alpha;m)$ is immediately verifiable, whereas deciding whether (a multiple of) the MinMaxShare is met at a given allocation involves solving an NP-hard problem.
	Moreover, {the function $\alpha \to \Delta_n^{\oplus}(\alpha; m)$ relating the guaranteed share to the disutility of the worst bad (relative to total disutility) is a transparent hard design constraint of which all participants should be aware.}
	% the formula of $\Delta _{n}^{\oplus }(\alpha;m)$ also explicitly shows how the agents' valuations affect their share in the problem, ans thus improves the transparency of an allocation rule. 
	Although $\Delta _{n}^{\oplus }(\alpha;m)$ seems more conservative than the MinMaxShare of a specific disutility function, we argue that $\Delta _{n}^{\oplus }(\alpha;m)$ is approximately as effective as MinMaxShare.
	First, $\Delta _{n}^{\oplus }(\alpha;m)$ is at most twice the MinMaxShare of every disutility in ${\cV}(\alpha;m)$.
	We plot the exact ratio of $\Delta _{n}^{\oplus }(\alpha)$ and the best MinMaxShare of disutilities in $\cV(\alpha)$ for every $\alpha$ in Fig. \ref{fig:ratio:theoretical} when $n=2, 10$ and $100$. 
	As we can see, although the largest ratio may reach 2 (only happens when $n$ is large), for most values of $\alpha$, the ratio is not far from 1.
	% As shown by Fig. \ref{fig:ratio:theoretical}, if $n$ is small, the ratio is closer to 1.
	In particular, $\Delta _{n}^{\oplus }(\alpha)$ outperforms the fractions of the MinMaxShare known to be
	implementable ($\frac{4}{3}$ by \citet{DBLP:journals/teco/BarmanK20} and $%
	\frac{11}{9}$ by \citet{DBLP:conf/sigecom/HuangL21}) for most $\alpha$ no matter what values $n$ has.
	Besides the above worst-case comparison, in Section \ref{sec:experiments}, we conduct numerical experiments with synthetic and real-world data to illustrate the real distances between  Hill’s share and MinMaxShare.
	The experiments show that Hill's share is actually very close  to (e.g., within 1.1 fraction of) the MinMaxShare for the majority of the  instances.

	Finally, we obtain the main result of this work -- a counterpart for bads of Hill's guarantee for goods improved by \citet{DBLP:conf/wine/MarkakisP11}. Letting $V_{n}(\alpha ;m)$ denote the monotonic cover of $\Delta_{n}^{\oplus }(\alpha; m)$ with respect to $\alpha$, Theorem \ref{thm:heterogeneous} shows that the share $V_n(\alpha_i; m)$ is a guarantee over the full domain of additive disutilities with $m$ bads. We also provide an algorithm to implement this guarantee in polynomial time. To the best of our knowledge no other similarly simple guarantee for allocating bads has been identified.

	\subsection{More Relevant Literature}
	\label{sec:relatedworks}

	The properties known as proportionality up to one object (Prop1) and up
	to any object (PropX) offer different relaxations of the equal share $%
	\frac{1}{n}v_{i}(M)$ when the objects (goods or bads) are indivisible  \cite{moulin2019fair, DBLP:journals/orl/AzizMS20}. 
	These relaxations require the equal share to be satisfiable if at most one object is added or removed. 
	Like the Hill's
	guarantees for goods or bads we discuss here, they are immediately
	verifiable, but unlike these they are not always preserved by Pareto
	improvements, a serious limitation of their implementation.
	They also do not provide agents with any guaranteed utility or disutility. 
	The same remark applies to the popular ex-post tests no envy up to one
	object, or up to any object \cite{DBLP:conf/sigecom/LiptonMMS04,budish2011combinatorial,DBLP:journals/teco/CaragiannisKMPS19}.
	Another easily verifiable test is the truncated proportional share (ATP) bound of \cite{DBLP:conf/wine/BabaioffEF22},
	but unlike Hill's guarantees, it improves upon the MaxMinShare for goods so it is not a feasible guarantee.
	
	Intuitively, the allocation of undesirable bads is the mirror image of that of goods. 
	However, adapting the results is not a simple matter of switching signs.
	{For instance when objects are indivisible the approximations Prop1 and PropX behave quite differently for goods and bads \cite{moulin2019fair, DBLP:journals/orl/AzizMS20}.\footnote{%
			See also \cite{DBLP:journals/scw/BogomolnaiaMSY19} for the competitive equilibrium from
			equal incomes when objects are divisible.} Our results confirm this general observation: in our case the general allure of
		the critical functions $\Delta _{n}^{\oplus}$ is the same for goods and for bads, but the details and the proofs are quite different.}
	We refer the readers to recent surveys by \cite{moulin2019fair} and \cite{DBLP:journals/sigecom/Aziz22} for a more detailed discussion 
	on the fair division of indivisible goods and bads.
	In particular, \cite{DBLP:journals/sigecom/Aziz22} explicitly listed computing Hill's guarantee for bads as an open problem.

	\section{Preliminaries}
	\label{sec:pre}

	For any positive integer $k$, let $[k]=\{1,\ldots,k\}$.
	We consider allocating $m$ indivisible objects, denoted by $M=[m]$, among $n$ agents,
	%the agents are labeled as $i,j,\ldots$; 
	and let $\mathcal{A}dd(M)$ be the domain made of the nonnegative additive disutility functions $v$ on object set $M$, normalised without loss of generality, as follows%
	\[
	v(S)=\sum_{e\in S}v(\{e\})\mbox{ for all } S\subseteq M \mbox{ and }v(M)=1.
	\]
	Following the convention of the literature, disutility functions are also called valuations.
	For simplicity, we write $v(e)$ to represent $v(\{e\})$ for each $e\in M$.
	For any $\alpha \in \lbrack 0,1]$, the subdomain $\mathcal{V}(\alpha ;m) \subseteq \mathcal{A}dd(M)$ is defined by the property $\max_{e\in M}v(e)=\alpha $ and 
	$\mathcal{U}(\alpha ;m)$ by $v(e)\leq \alpha $ for all $e\in M$.
	According to the definitions, $\cV (\alpha;m)\subseteq \cU(\alpha;m)$
	for any valid pair of $\alpha$ and $m$.
	% \ak{[Would it be better to mention $\cV (\alpha;m)\subseteq \cU(\alpha;m)$? so that provide more intuition to readers.]}
	Note that, since the functions are all normalised, $\mathcal{V}(\alpha ;m)$ is only well
	defined if $\alpha \times m\geq 1$, equivalently for $m\geq m_{\ast }=\lceil 
	\frac{1}{\alpha }\rceil $ (the upper integer part of $\frac{1}{\alpha }$).
	
	An allocation, denoted by $ \mathbf{A} = (A_1,\ldots,A_n)$, is a partition of $M$ into $n$ disjoint subsets of objects; 
	note that some of these subsets can be empty.
	% An $n$-allocation of $M$ is a list $\{A_{1},\ldots ,A_{n}\}$
	% of mutually
	% disjoint subsets of $M$
	% % \ak{[probably say ``with their union covering $[m]$'']}
	% covering $m$; note that some of these subsets can be
	% empty. 
	The set of all allocations is denoted by $\mathcal{X}_{n}(M)$.
	The
	MinMaxShare 
	(MMS),
	when there are $n$ agents,
	% \ak{[We below quite often use MinMaxShare. Need to make it consistent.]}
	of the disutility $v\in \mathcal{A}dd(M)$ is defined as%
	\[
	\MMS_{n}(v)=\min_{ \mathbf{A} \in \mathcal{X}_{n}(M)}\max_{1\leq \ell \leq n}v(A_{\ell }).
	\]
	%As only the cardinality $m$ of $M$ matters in the computations below, we
	%replace $M$ by $m$ from now on. 
	% \bo{I am not sure if we need to have argument $m$ in the definition of $\MMS$.}
	% \ak{Let's delete $m$ as $m$ is actually hidden in $v$.}
	% \ak{[The minimisaton in the above definition takes over every allocation $\mathbf{A}$.]}
	% In this paper,
	% $n$-allocation (resp. $n$-MMS) and allocation (resp. MMS) are used interchangeably.
	We next define the upper and lower bounds of MinMaxShare among all disutilities in $\mathcal{V}(\alpha;m)$,
	\begin{align*}
		\Delta _{n}^{\oplus }(\alpha ;m)&=\max_{v\in \mathcal{V}(\alpha
			;m)}\MMS_{n}(v); \mbox{ and }\\ 
		\Delta _{n}^{\circleddash }(\alpha ;m)&=\min_{v\in 
			\mathcal{V}(\alpha ;m)}\MMS_{n}(v).  %\label{1}
	\end{align*}
	The upper bound $\Delta _{n}^{\oplus }(\alpha ;m)$ (i.e., the worst-case MinMaxShare) is called {\em Hill's share}, and we use these terms interchangeably in this paper.

	It is not difficult to obtain the below formula of $\Delta_n^{\circleddash}(\alpha; m)$, whose formal proof is in the appendix. 
	\begin{lemma}
		\label{lem:bestMMS_m}
		Given $0 < \alpha < 1$, $n \ge 2$, and $m \ge \lceil \frac{1}{\alpha} \rceil$, $\Delta_{n}^{\circleddash}(\alpha; m)$ is as follows: 
		\begin{align*}
			\Delta _{n}^{\circleddash}(\alpha; m) = 
			\left\{\begin{array}{ll} 
				\alpha, &\text{ if } \alpha > \frac{1}{n}, \\
				\frac{1}{n}, &\text{ if } \alpha = \frac{1}{kn}, \text{ or } \frac{1}{(k+1)n} < \alpha < \frac{1}{kn} \text{ and } m \ge kn + n\\
				k\alpha + \frac{1-kn\alpha}{m-kn}, &\text{ if } \frac{1}{(k+1)n} < \alpha < \frac{1}{kn} \text{ and } m \le kn + n - 1
			\end{array}\right.
		\end{align*}
		for some integer $k \ge 1$. 
	\end{lemma}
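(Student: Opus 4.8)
The plan is to establish matching lower and upper bounds on $\Delta_n^{\circleddash}(\alpha;m)=\min_{v\in\cV(\alpha;m)}\MMS_n(v)$. Two elementary lower bounds hold for every $v\in\cV(\alpha;m)$ and every allocation $\mathbf{A}=(A_1,\dots,A_n)$: the bundle containing the worst bad has disutility at least $\alpha$, and since $\sum_\ell v(A_\ell)=1$ some bundle has disutility at least $\tfrac1n$; hence $\MMS_n(v)\ge\max\{\alpha,\tfrac1n\}$. This already settles the first regime ($\alpha>\tfrac1n$, value $\alpha$) and supplies the bound $\tfrac1n$ needed in the second regime. The matching upper bounds come from exhibiting, in each regime, one ``most divisible'' valuation $v^*\in\cV(\alpha;m)$ together with an allocation attaining the claimed value. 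When $\alpha>\tfrac1n$, put the worst bad alone and spread the remaining mass $1-\alpha\le(n-1)\alpha$ into $n-1$ bundles of disutility at most $\alpha$. When the claimed value is $\tfrac1n$, build $n$ bundles each of disutility exactly $\tfrac1n$ (feasible because $\alpha=\tfrac1{kn}$ with $m\ge kn$, or $\alpha<\tfrac1{kn}$ with $m\ge kn+n$, leaves enough bads to place about $k{+}1$ bads per bundle). In the third regime, take $kn$ bads of value $\alpha$ and $m-kn$ bads of value $\delta:=\frac{1-kn\alpha}{m-kn}$, put $k$ heavy bads in every bundle and one light bad in $m-kn\le n-1$ of them, giving maximum bundle $k\alpha+\delta$; that $\MMS_n(v^*)$ is not smaller follows from a short pigeonhole on the $kn$ heavy bads. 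These constructions are routine.

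The crux is the lower bound in the third regime: for $\frac{1}{(k+1)n}<\alpha<\frac{1}{kn}$ and $m\le kn+n-1$ one must show $\MMS_n(v)\ge k\alpha+\delta=:L$ for every $v\in\cV(\alpha;m)$, i.e. that every allocation has some bundle of disutility at least $L$. Here the universal bound $\max\{\alpha,\tfrac1n\}$ is strictly too weak (one checks $nL>1$, so $L>\tfrac1n$), and the naive per-bundle estimates do not close, because when $m-kn\le n-2$ it is genuinely possible to have several bundles with $k{+}1$ bads. The device I plan to use, which exploits the value cap on all bundles at once, is the single inequality
\[
\sum_{\ell=1}^n \min\{\,t_\ell\alpha,\;L\,\}\ \le\ 1,\qquad t_\ell:=|A_\ell|,\ \textstyle\sum_\ell t_\ell=m .
\]
Granting it, suppose for contradiction that $v(A_\ell)<L$ for all $\ell$. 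Since $m-kn\ge1$, at least one bundle has $t_\ell\ge k+1$, and for it $v(A_\ell)<L=\min\{t_\ell\alpha,L\}$ strictly; meanwhile $v(A_\ell)\le t_\ell\alpha$ and $v(A_\ell)<L$ give $v(A_\ell)\le\min\{t_\ell\alpha,L\}$ for every bundle. Summing, $1=\sum_\ell v(A_\ell)<\sum_\ell\min\{t_\ell\alpha,L\}\le1$, a contradiction, whence $\MMS_n(v)\ge L$.

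It remains to prove the displayed inequality, which is the main technical point. I use that $t\mapsto\min\{t\alpha,L\}$ equals $t\alpha$ for $t\le k$ and equals $L$ for $t\ge k+1$; the switch is clean because $(k+1)\alpha\ge L\iff\delta\le\alpha\iff m\alpha\ge1$, which always holds on the domain. Writing $b$ for the number of bundles with at least $k+1$ bads and $s$ for the number of bads they contain, the left-hand side equals $\alpha(m-s)+bL$. For fixed $b$ this decreases in $s$, so it is maximised by the smallest feasible $s=\max\{(k+1)b,\ m-k(n-b)\}$; substituting and then optimising over $b$ shows the maximum is attained at the balanced profile with $m-kn$ bundles of $k+1$ bads and $n-(m-kn)$ bundles of $k$ bads, where the value is exactly $kn\alpha+(m-kn)\delta=1$. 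This ``balancing maximises'' step, together with the monotonicity in $b$ that $\delta\le\alpha$ guarantees, is what I expect to require the most care; the remaining regimes, including the boundary $\alpha=\tfrac1{kn}$ and the degenerate equality $\delta=\alpha$, then follow by specialisation.
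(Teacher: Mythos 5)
Your proposal is correct. For the first two regimes and for all the witness constructions it is essentially the paper's proof (the universal bounds $\MMS_n(v)\ge\max\{\alpha,\frac{1}{n}\}$ plus the same near-equal-split valuations, padded with zero-disutility objects), but your treatment of the crux --- the lower bound $\MMS_n(v)\ge L:=k\alpha+\frac{1-kn\alpha}{m-kn}$ when $\frac{1}{(k+1)n}<\alpha<\frac{1}{kn}$ and $m\le kn+n-1$ --- is genuinely different, and in fact more solid than the paper's. The paper lets $x$ be the number of bundles with at most $k$ objects, asserts $x\ge kn+n-m$ for every $n$-partition, and averages the remaining disutility $1-k\alpha x$ over the other $n-x$ bundles. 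That assertion is false for unbalanced partitions: with $n=3$, $k=1$, $m=4$ and bundle sizes $(2,2,0)$ one gets $x=1<2=kn+n-m$, because a single near-empty bundle can absorb a shortfall of up to $k+1$ objects; pure counting only yields $x\ge\lceil\frac{kn+n-m}{k+1}\rceil$, which is too weak to push the averaging through (in this example it gives the bound $\frac{1-\alpha}{2}$, strictly below $L=1-2\alpha$ for $\alpha<\frac{1}{3}$). The lemma itself survives because such unbalanced partitions force a large bundle for a different reason, but the paper supplies no such argument. Your truncation inequality $\sum_{\ell}\min\{t_\ell\alpha,L\}\le 1$ sidesteps the issue by handling all size profiles uniformly: the switch point $(k+1)\alpha\ge L\iff m\alpha\ge 1$ is valid on the whole domain, the two branches of your optimization do close (for $b\le m-kn$ the bound is $kn\alpha+b\delta\le kn\alpha+(m-kn)\delta=1$; for $b\ge m-kn+1$ it is $m\alpha-b(\alpha-\delta)\le 1-(\alpha-\delta)\le 1$), and the strict-inequality bookkeeping in your contradiction argument is sound. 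So the comparison is: the paper's averaging is shorter but, as written, has a genuine gap; your route costs a small discrete optimization over $(b,s)$ and delivers a complete proof of the hard regime.
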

	
	Computing $\Delta _{n}^{\oplus }(\alpha ;m)$ is non-trivial, as shown in Section \ref{sec::upper-bound-minmaxshare}, 
	but the following lemma, proved in the appendix, presents two simple properties. 
	\begin{lemma}
		\label{lem:worstMMS_property}
		(1) $\Delta _{n}^{\oplus }(\alpha ;m)$ is weakly decreasing in $n$;
		(2) $\Delta _{n}^{\oplus }(\alpha ;m)$ is weakly increasing in $m$ from $\lceil \frac{1}{\alpha} \rceil $ to $\lceil \frac{2}{\alpha} \rceil - 1$ and constant thereafter. 
	\end{lemma}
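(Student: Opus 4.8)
The plan is to prove both parts by manipulating a worst-case valuation and then transporting the resulting inequality through the $\max$ over the common domain $\cV(\alpha;m)$. The recurring principle is that $\MMS_n(v)$ is monotone under three elementary operations on a fixed instance — adding an agent, adding a null object, and merging two objects — and that each operation keeps the instance inside the relevant domain.

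For part (1) I fix an arbitrary $v\in\cV(\alpha;m)$ and show $\MMS_n(v)$ is itself weakly decreasing in $n$. Given an optimal $n$-partition $\mathbf{A}=(A_1,\dots,A_n)$ attaining $\MMS_n(v)$, appending an empty bundle produces a valid $(n+1)$-partition whose largest bundle disutility is unchanged, so $\MMS_{n+1}(v)\le\max_{\ell}v(A_\ell)=\MMS_n(v)$. Because $\cV(\alpha;m)$ is the same set for every $n$, taking the maximum over the domain preserves the inequality and yields $\Delta_{n+1}^{\oplus}(\alpha;m)\le\Delta_n^{\oplus}(\alpha;m)$.

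For part (2) I treat the two directions separately. The weakly increasing direction holds for every $m\ge\lceil 1/\alpha\rceil$: starting from a maximizer $v\in\cV(\alpha;m)$, adjoin one extra object of disutility $0$ to obtain $v'\in\cV(\alpha;m+1)$ (the largest single value is still $\alpha$ and the total is still $1$). A null object can be placed in any bundle without changing any bundle's disutility, so $v$ and $v'$ admit exactly the same attainable vectors of bundle-disutilities; hence $\MMS_n(v')=\MMS_n(v)$ and $\Delta_n^{\oplus}(\alpha;m+1)\ge\MMS_n(v')=\Delta_n^{\oplus}(\alpha;m)$. For the reverse inequality under the hypothesis $m\ge\lceil 2/\alpha\rceil-1$, I use a merging operation: merging two objects into one restricts the set of admissible $n$-partitions to those that keep the merged objects together while leaving every bundle's disutility unchanged, so it can only \emph{raise} the $\MMS$ value. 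Concretely, take a maximizer $v'\in\cV(\alpha;m+1)$, fix an object $e^{*}$ with $v'(e^{*})=\alpha$, and among the remaining $m$ objects (of total disutility $1-\alpha$) merge the two smallest; their combined disutility is at most $\frac{2(1-\alpha)}{m}$, and $\frac{2(1-\alpha)}{m}\le\alpha$ holds precisely when $m\ge 2/\alpha-2$, which is implied by $m\ge\lceil 2/\alpha\rceil-1$. The merged object then has disutility at most $\alpha$ while $e^{*}$ still witnesses the maximum, so the resulting $v$ lies in $\cV(\alpha;m)$, and the merging monotonicity gives $\Delta_n^{\oplus}(\alpha;m)\ge\MMS_n(v)\ge\MMS_n(v')=\Delta_n^{\oplus}(\alpha;m+1)$. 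Combining the two directions forces equality, so the function is constant for $m\ge\lceil 2/\alpha\rceil-1$.

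The main obstacle is the reverse direction in part (2): one must exhibit two objects whose combined disutility does not exceed $\alpha$ while \emph{simultaneously} retaining a witness of the maximum value $\alpha$, and confirm that this is possible exactly in the regime $m\ge\lceil 2/\alpha\rceil-1$. Protecting the worst bad $e^{*}$ and bounding the two smallest of the remaining objects by their average $\frac{2(1-\alpha)}{m}$ is what makes the threshold $\lceil 2/\alpha\rceil-1$ emerge; the delicate point is that one should average over the objects other than $e^{*}$ rather than over all objects, since using the two globally smallest objects would not by itself guarantee that an object of value $\alpha$ survives the merge.
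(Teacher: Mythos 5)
Your proposal is correct, and parts of it — the empty-bundle argument for (1) and the null-object padding for the increasing half of (2) — are exactly the paper's proof. Where you genuinely diverge is the ``constant thereafter'' half. The paper also merges two objects, but it locates the mergeable pair \emph{inside a bundle of an optimal MinMax allocation}: a counting claim shows that when $m\ge\lceil 2/\alpha\rceil-1$, some bundle of any allocation contains two objects of combined disutility at most $\alpha$, and merging those two preserves $\MMS_n$ \emph{exactly} (both directions), since the optimal allocation already keeps them together. You instead protect the witness $e^*$, merge the two smallest of the remaining $m$ objects (bounded by twice their average, $\frac{2(1-\alpha)}{m}\le\alpha$), and invoke only the one-sided fact that merging can never decrease $\MMS_n$. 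Your route buys simplicity and robustness: it never reasons about the structure of an optimal allocation, and it quietly sidesteps an edge case in the paper's counting claim, which as stated breaks for bundles with at most one object (e.g.\ when $n$ exceeds the number of objects, every bundle can be a singleton and the claim is vacuously false; that regime must be handled separately, where $\MMS_n(v)=\alpha$ trivially). What the paper's route buys is a pointwise statement — every $v\in\cV(\alpha;m+1)$ has a counterpart in $\cV(\alpha;m)$ with \emph{equal} $\MMS_n$ — so no appeal to attainment of the supremum is needed. Your phrasing does lean on the existence of a maximizer in $\cV(\alpha;m+1)$; this is justified by the compactness argument in Section \ref{sec:pre}, and in any case your merge applies to an arbitrary $v'\in\cV(\alpha;m+1)$ (yielding $\MMS_n(v)\ge\MMS_n(v')$ pointwise), so the comparison of suprema goes through even without it.
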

	
	By the second property in Lemma \ref{lem:worstMMS_property}, and also following \cite{hill1987partitioning,DBLP:conf/wine/MarkakisP11,DBLP:journals/tcs/GourvesMT15}, we also consider the case when $m$ is not restricted, or equivalently, $m = \infty$.
	Let $\mathcal{V}(\alpha) = \bigcup_{m} \mathcal{V}(\alpha,m)$ and $\mathcal{U}(\alpha) = \bigcup_{m}\mathcal{U}(\alpha ;m)$.
	Accordingly, we have 
	% \ak{$\MMS_n(v)$ is not defined.}
	% \begin{equation}
		
		% \end{equation}
	\begin{align}
		\Delta _{n}^{\oplus }(\alpha)&=\max_{v\in \mathcal{V}(\alpha)}\MMS_{n}(v); \mbox{ and } \nonumber  \\ %\label{2}
		\Delta _{n}^{\circleddash }(\alpha)&=\min_{v\in 
			\mathcal{V}(\alpha)}\MMS_{n}(v).  \nonumber
	\end{align}
	By Lemma \ref{lem:bestMMS_m}, $\Delta _{n}^{\circleddash }(\alpha) = \max\{\alpha, 1/n\}$.

	Hill's share $\Delta_n^{\oplus}(\alpha; m)$ (and $\Delta_n^{\oplus}(\alpha)$) behave much like the MinMaxShare in the following senses. 
	First, for any $v\in \mathcal{V}(\alpha; m)$ there is an allocation $(A_1, \ldots, A_n)$ such that $v(A_i) \le \Delta_n^{\oplus}(\alpha; m)$ for all $i$. 
	This follows from the definition of the MinMaxShare plus that $\Delta_n^{\oplus}(\alpha; m)$ is an upper bound of the MinMaxShare. 
	Second, the \textit{max} in the definition of $\Delta_n^{\oplus}(\alpha; m)$ is achieved by some $v^* \in \mathcal{V}(\alpha; m)$; that is, $\Delta_n^{\oplus}(\alpha; m) = \MMS_n(v^*)$. 
	This is because $\mathcal{V}(\alpha; m)$ is a compact set and all the functions are continuous. 
	Then we know that for any allocation $(B_1,\ldots,B_n)$ there is some $i$ such that $v^*(B_i) \ge \Delta_n^{\oplus}(\alpha; m)$. 
	Note that these two facts have nothing to do with what the function $\Delta_n^{\oplus}(\alpha; m)$ actually looks like and they can be easily adapted to $\Delta_n^{\oplus}(\alpha)$.

	\section{Characterising Hill's Share}\label{sec::upper-bound-minmaxshare}

	\subsection{Main Result}
	
	We now characterise Hill's share, i.e., the exact upper bound of the MinMaxShare values, $\Delta _{n}^{\oplus }(\alpha ;m)$ and  $\Delta _{n}^{\oplus }(\alpha)$.
	% Before introducing the result, we first make some notations. 
	For any integers $n\ge 2$ and $k\ge 0$, define the following real intervals:
	
	$$
	\begin{aligned}
		% I(n,k) &= \left[ \frac{k+2}{(k+1)((k+1) n + 1)}, \frac{1}{kn+1} \right] \\
		D(n,k) &= \left( \frac{1}{kn+n+1}, \frac{k+2}{n(k+1 )^2  + k + 2 } \right]\\
		I(n,k) &= 
		\left( \frac{k+2}{n(k+1 )^2  + k + 2 }, \frac{1}{kn+1} \right].
	\end{aligned}
	$$
	It is not hard to check that all the intervals are well-defined, non-overlapping, and $\bigcup_{k\ge 0} (D(n,k) \cup I(n,k)) = (0,1]$.

	Our first main theorem gives the tight characterisation of Hill's share. % for disutilities in $\cV(\alpha;m)$.

	\begin{theorem}
		\label{thm:homogeneous}
		For any $0< \alpha <1$, $n\ge 2$, and $m\ge \lceil\frac{1}{\alpha} \rceil$,  %$\Delta _{n}^{\oplus }(\alpha ;m)$ is the following function:
		\begin{align}
			\Delta_n^{\oplus} ( \alpha; m ) = 
			\left\{\begin{array}{ll} 
				\frac{k+2}{k+1}\cdot\frac{1-\alpha}{n}, & \text { if } \alpha \in D(n,k) \text { and }  m \geq kn + n + 1,\\ 
				( k + 1 )\alpha, & \text { if } \alpha \in D(n,k)
				\text { and }  m \leq kn + n, \\
				(k+1)\alpha, & \text { if } 
				\alpha \in I(n,k)
			\end{array}\right.
		\end{align}
		for any integers $ n \geq 2$ and $ k \geq 0$ except $ n=2$ and simultaneously $ k = 1$.
		If $ n = 2$ and $ k = 1$,
		$\Delta _ 2^{\oplus} ( \frac{1}{3}; 3) = \frac{2}{3}$, $\Delta_ 2^{\oplus} ( \alpha; 4) = 2 \alpha$ for $\alpha \in [\frac{1}{4}, \frac{1}{3}]$, and $\Delta _ 2^{\oplus} ( \alpha; 5)$ is as follows:
		\begin{align}
			\Delta _ 2^{\oplus} ( \alpha; 5)  = 
			\left\{\begin{array}{ll} 
				\frac{3 - 3\alpha}{ 4 }, & \text { if } \alpha \in ( \frac{1}{5}, \frac{3}{11} ],\\ 
				2\alpha, & \text { if } \alpha \in ( \frac{3}{11}, \frac{1}{3} ], \\
			\end{array}\right.
		\end{align}
		and for $ m \geq 6 $,
		
		\begin{align}
			\Delta_ 2^{\oplus} ( \alpha; m) = 
			\left\{\begin{array}{ll} 
				\frac{3 - 3\alpha}{ 4 }, & \text { if } \alpha \in (\frac{1}{5}, \frac{7}{27}] \\ 
				\alpha + \frac{2 - 2\alpha}{5}, & \text { if } 
				\alpha \in (\frac{7}{27}, \frac{2}{7}] \\
				2\alpha, & \text { if } \alpha \in ( \frac{2}{7}, \frac{1}{3}].
			\end{array}\right.
		\end{align}
	\end{theorem}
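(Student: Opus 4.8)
The plan is to prove the two inequalities $\Delta_n^{\oplus}(\alpha;m)\le T$ and $\Delta_n^{\oplus}(\alpha;m)\ge T$ separately, where $T$ is the claimed closed form in the relevant regime. It helps to read $\MMS_n(v)$ as the optimal makespan of scheduling the $m$ ``jobs'' with processing times $\{v(e)\}_{e\in M}$ on $n$ identical machines: the lower bound then reduces to exhibiting a single hard instance $v^\ast$, and the upper bound to producing, for \emph{every} instance, a partition in $\mathcal{X}_n(M)$ whose heaviest bundle has disutility at most $T$. Since $\mathcal{V}(\alpha;m)$ is compact, the maximiser $v^\ast$ exists, so the lower bound can be phrased as a concrete construction.

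For the lower bound I would restrict attention to (essentially) two-valued valuations: a few objects of value $\alpha$ together with a block of objects of a common medium value $\beta\le\alpha$, the rest being null, and then optimise over the number of value-$\alpha$ objects and over $\beta$ (which is pinned down by normalisation). The engine is a pigeonhole count: any allocation places at least $\lceil m'/n\rceil$ of the $m'$ positive objects on some bundle, and when those objects are near-equal this forces a bundle of disutility at least $\lceil m'/n\rceil\beta$. At the left endpoint $\alpha=\tfrac{1}{(k+1)n+1}$ of $D(n,k)$ the extremal instance is exactly $(k+1)n+1$ equal objects of value $\alpha$, giving $(k+2)\alpha=T$; at the right endpoint $\alpha=\tfrac{1}{kn+1}$ of $I(n,k)$ it is $kn+1$ equal objects, giving $(k+1)\alpha$; and throughout $I(n,k)$ the same $kn+1$ value-$\alpha$ objects (feasible because $\alpha\le\tfrac{1}{kn+1}$) force $(k+1)\alpha$ by pigeonhole. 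In the interior of $D(n,k)$ one object must be inflated to $\alpha$ and the residual mass spread over a medium block; jointly optimising the split yields $\tfrac{k+2}{k+1}\cdot\tfrac{1-\alpha}{n}$, and I would check the two pieces agree at the common boundary $\alpha=\tfrac{k+2}{n(k+1)^2+k+2}$, where both evaluate to $\tfrac{(k+1)(k+2)}{n(k+1)^2+k+2}$. The $m$-threshold enters precisely here: the interior construction needs at least $(k+1)n+1=kn+n+1$ positive objects, so whenever $m\le kn+n$ one cannot beat the pigeonhole value $(k+1)\alpha$, and together with Lemma~\ref{lem:worstMMS_property}(2) this accounts for the split in the statement.

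For the upper bound I would design an explicit packing rule and show it never exceeds $T$. The naive argument---fill $n$ bins of capacity $T$ greedily and observe that a stuck object of size $s\le\alpha$ forces every bin above $T-\alpha$---already fails in $D(n,k)$, because $n(T-\alpha)<1$ there, so the leftover capacity alone yields no contradiction. The refined argument must exploit that any bin loaded above $T-\alpha$ contains several objects close to $\alpha$ (indeed $(k+2)\alpha>T$ strictly inside $D(n,k)$, so at most $k+1$ near-$\alpha$ objects sit in one bin), allowing one to charge the overflow to those large objects and rebalance. Concretely I would process objects in decreasing order, distribute the near-$\alpha$ objects as evenly as possible (at most $k+1$ per bundle) and then fill with the smaller objects by a least-loaded rule, maintaining the invariant that no bundle crosses $T$. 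The most natural way to reproduce the factor $\tfrac{k+2}{k+1}$ recursively is an induction on $k$ that removes a full layer of the $n$ largest objects and renormalises the residual instance of total $1-n\alpha$; a short check shows this renormalisation sends the left endpoint of $D(n,k)$ exactly to the left endpoint of $D(n,k-1)$ and carries the target $T_k(\alpha)-\alpha$ onto $(1-n\alpha)\,T_{k-1}(\tfrac{\alpha}{1-n\alpha})$, which is encouraging.

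The main obstacle is exactly this tight upper bound in $D(n,k)$: matching the non-integer factor $\tfrac{k+2}{k+1}$ requires tracking the interplay between the number of near-$\alpha$ objects and the residual mass rather than any single crude capacity bound, and the delicate step is that the peeled layer may consist of objects strictly below $\alpha$, so the renormalised instance need not land cleanly in $D(n,k-1)$ and one must invoke monotonicity of the target in $\alpha$. This same bookkeeping is what forces the $m$-dependent cases, and it is precisely where the generic analysis breaks for $n=2$, $k=1$, since two machines leave too little slack; there I would abandon the recursion and compute $\Delta_2^{\oplus}(\alpha;m)$ directly for $m=3,4,5$ and for $m\ge 6$ by enumerating the finitely many extremal two-valued instances, matching the piecewise formulas in the statement.
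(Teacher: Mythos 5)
Your lower-bound half is sound and coincides with the paper's: two-valued instances (a few objects at $\alpha$ plus a block of equal medium objects) and a pigeonhole count are exactly how the paper certifies tightness in $D(n,k)$, $I(n,k)$, and the restricted-$m$ and $n=2,k=1$ cases. The genuine gap is the upper bound, i.e.\ showing that \emph{every} $v\in\mathcal{V}(\alpha;m)$ satisfies $\MMS_n(v)\le T$, which is where essentially all of the difficulty of Theorem~\ref{thm:homogeneous} lies, and neither of your two strategies for it closes. The layer-peeling recursion bounds the makespan by $\alpha+(1-s)\,\Delta_n^{\oplus}\bigl(\beta/(1-s)\bigr)$, where $s\le n\alpha$ is the peeled mass and $\beta$ the next-largest object; your consistency check assumes $s=n\alpha$ (and even then the two sides agree only at the left endpoint of $D(n,k)$), and your proposed patch for $s<n\alpha$ --- ``invoke monotonicity of the target in $\alpha$'' --- is unavailable, because $\Delta_n^{\oplus}$ is \emph{not} monotone in $\alpha$ (the paper stresses this); its monotone cover strictly exceeds $\Delta_n^{\oplus}$ on the interior of $D(n,k)$, which is precisely where tightness is needed. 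Concretely, take one object of size $\alpha$ at the right endpoint of $D(n,k)$ and the remaining mass in tiny objects: the combination bound gives at least $\alpha+\frac{1-\alpha}{n}$, which already exceeds $T=\frac{k+2}{k+1}\cdot\frac{1-\alpha}{n}$ there (since $\alpha>\frac{1-\alpha}{n(k+1)}$ at that endpoint), so the recursion as stated cannot prove the theorem even though the instance itself is harmless. The ``refined greedy with charging and rebalancing'' is named but never defined, and you flag it yourself as the main obstacle. Two further spots repeat the same conflation of lower and upper bounds: for $m\le kn+n$ you argue the unrestricted construction ``does not fit,'' which shows nothing about other instances on at most $kn+n$ objects; and for $n=2,k=1$, enumerating two-valued instances again only yields lower bounds --- one must rule out \emph{all} valuations.

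The missing idea --- the paper's --- is to avoid constructing any packing at all. Suppose some $v$ violates the bound and take a \emph{lexicographically optimal} MinMax allocation $(A_1,\dots,A_n)$ with $v(A_1)\ge\cdots\ge v(A_n)$. Lexicographic optimality yields an exchange property: for any $S_1\subseteq A_1$ and $S_j\subseteq A_j$ with $v(S_1)>v(S_j)$, one has $v(S_1)-v(S_j)\ge v(A_1)-v(A_j)$, since otherwise swapping $S_1$ and $S_j$ improves the allocation lexicographically. This single claim (and its contrapositive) produces lower bounds on every $v(A_j)$ --- by comparing $A_j$ against carefully chosen subsets of $A_1$, whose existence uses $|A_1|\ge k+2$ --- and summing them exceeds $v(M)=1$, a contradiction. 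The same device gives the cardinality counts $|A_1|\ge k+2$, $|A_j|\ge k+1$ that settle the restricted-$m$ regime (one needs $kn+n+1>m$ objects), and it drives the delicate $n=2,k=1$ case analysis. Without this exchange argument, or an equally strong substitute that controls every bundle of an optimal allocation, your plan does not yield a proof of the upper bound.
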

	
	Theorem \ref{thm:homogeneous} directly implies the result when the number of objects is not restricted, as shown in the following corollary.
	
	%\bo{Corollary 1 has been rewritten. }
	\begin{corollary}
		\label{coro:homogeneous}
		For any $0< \alpha <1$, $n\ge 2$,  
		% $\Delta _{n}^{\oplus }(\alpha ;m) = W_n(\alpha; m)$.
		$\Delta _{n}^{\oplus}(\alpha) = \max\limits_{m \geq \lceil \frac{1}{\alpha} \rceil} \Delta_n^{\oplus}(\alpha; m)$. 
	\end{corollary}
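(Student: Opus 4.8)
The plan is to reduce the corollary to the set-theoretic fact that a supremum over a union of sets equals the supremum of the suprema over the pieces, and then to invoke Lemma~\ref{lem:worstMMS_property}(2) to guarantee that the resulting outer optimisation over $m$ is attained at a finite $m$. The only content beyond bookkeeping is the existence of a maximiser, since the domain $\cV(\alpha)$ is a union over infinitely many $m$ and hence not compact.

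First I would record the decomposition. Since $\cV(\alpha)=\bigcup_{m\ge \lceil 1/\alpha\rceil}\cV(\alpha;m)$, applying the functional $v\mapsto \MMS_n(v)$ gives
$$\sup_{v\in \cV(\alpha)}\MMS_n(v)=\sup_{m\ge \lceil 1/\alpha\rceil}\;\sup_{v\in \cV(\alpha;m)}\MMS_n(v)=\sup_{m\ge \lceil 1/\alpha\rceil}\Delta_n^{\oplus}(\alpha;m),$$
which holds purely because the right-hand index runs over all members of the union. This identifies the two quantities as suprema; it remains to promote both to maxima.

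Second, for each fixed admissible $m$ the inner supremum is attained: as noted in the preliminaries, $\cV(\alpha;m)$ is compact and $\MMS_n$ is continuous, so $\Delta_n^{\oplus}(\alpha;m)=\max_{v\in\cV(\alpha;m)}\MMS_n(v)$ is a genuine maximum. For the outer supremum over the infinitely many values of $m$, I would appeal to Lemma~\ref{lem:worstMMS_property}(2): the map $m\mapsto \Delta_n^{\oplus}(\alpha;m)$ is weakly increasing for $\lceil 1/\alpha\rceil\le m\le \lceil 2/\alpha\rceil-1$ and constant for all $m\ge \lceil 2/\alpha\rceil-1$. Hence the supremum over $m$ equals the value at $m^{*}=\lceil 2/\alpha\rceil-1$, so it is finite and attained, and $\sup_{m}\Delta_n^{\oplus}(\alpha;m)=\max_{m\ge\lceil 1/\alpha\rceil}\Delta_n^{\oplus}(\alpha;m)$. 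Chaining this with the display above yields $\Delta_n^{\oplus}(\alpha)=\max_{m\ge\lceil 1/\alpha\rceil}\Delta_n^{\oplus}(\alpha;m)$, with a maximiser $v^{*}\in\cV(\alpha;m^{*})$.

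The statement carries essentially no genuine difficulty; the single point that needs care is precisely that the outer optimisation ranges over infinitely many $m$, so existence of a maximiser is not automatic from compactness of the individual $\cV(\alpha;m)$. This is exactly what the eventual-constancy half of Lemma~\ref{lem:worstMMS_property}(2) supplies, collapsing the outer maximisation to a finite range. No new estimate or case analysis beyond Theorem~\ref{thm:homogeneous} and Lemma~\ref{lem:worstMMS_property} is required.
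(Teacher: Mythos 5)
Your proof is correct, and it fleshes out the one-line justification the paper itself offers right after the statement (``Corollary~\ref{coro:homogeneous} is a special case of Theorem~\ref{thm:homogeneous} when $m$ is sufficiently large, by Lemma~\ref{lem:worstMMS_property}''). However, it is genuinely not the route the paper's own proof takes, and the difference is worth noting. In the paper the logical dependency runs in the opposite direction: Section~\ref{sec:homo:proof:road} proves the corollary \emph{directly} --- by contradiction on a lexicographic MinMax allocation of a hypothetical $v\in\cV(\alpha)$ with $\MMS_n(v)>\Delta_n^{\oplus}(\alpha)$, via the exchange claims (Claims~\ref{clm:diff}--\ref{claim:subset}), a case analysis over $D(n,k)$ and $I(n,k)$, and explicit tight instances --- and the appendix then completes Theorem~\ref{thm:homogeneous} by inheriting the large-$m$ cases from precisely this analysis and treating small $m$ separately. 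So within the paper's own structure one cannot cite Theorem~\ref{thm:homogeneous} to prove the corollary without circularity. Your proposal escapes this trap only because your argument never actually uses Theorem~\ref{thm:homogeneous}: it consumes nothing beyond the sup-over-union decomposition, compactness of each $\cV(\alpha;m)$, and the eventual constancy in Lemma~\ref{lem:worstMMS_property}(2), whose appendix proof (padding with zero-disutility objects and merging light pairs) is independent of the characterisation, so your derivation of the identity is non-circular and complete. What your route buys is brevity and a clean isolation of the bookkeeping from the hard content; what it does not deliver is the closed form of $\Delta_n^{\oplus}(\alpha)$, which is the substance the paper actually establishes under this corollary's banner and which still requires the direct argument of Section~\ref{sec:homo:proof:road} (or Theorem~\ref{thm:homogeneous}, whose proof contains it).
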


	Actually, Corollary \ref{coro:homogeneous} is a special case of Theorem \ref{thm:homogeneous} when $m$ is sufficiently large (e.g., $m \ge \lceil \frac{2}{\alpha} \rceil - 1$ by Lemma \ref{lem:worstMMS_property}).
	Recall we illustrated $\Delta_2^{\oplus}(\alpha)$ and $\Delta_3^{\oplus}(\alpha)$ in Fig. \ref{fig:result1:homo}.
	We observe two interesting and somewhat unintuitive facts about Theorem \ref{thm:homogeneous}.
	First, $\Delta_n^{\oplus}(\cdot)$ is not monotone in $\alpha$, just like \citet{DBLP:journals/tcs/GourvesMT15} observed for the problem with goods.
	To characterise $\Delta _{n}^{\oplus }(\alpha ;m)$, we want to understand the worst-case disutility in $\cV(\alpha;m)$, for which the objects can be hardly partitioned into bundles with similar disutilities.  
	Intuitively, when the single-object disutility gets larger, it becomes harder to find such a balanced partition.
	However, this turns out to be imprecise.
	Second, the case of $n=2$ makes a difference from $n\ge 3$. When $n=2$ and $k=1$, there are three steps in $\Delta _{n}^{\oplus }(\cdot)$: the worst-case MinMaxShare has two increasing intervals with different slops following a decreasing interval.
	For all the other values of $n$ and $k$, there are two intervals with one decreasing and the other increasing.

	\begin{remark}
		\label{rem:n=2}
		When $n=2$ the problem of bads and that of goods are the same, since maximising the minimum bundle by partitioning the objects into two bundles is equivalent to minimising the maximum bundle. 
		For $n=2$, \citet{DBLP:journals/tcs/GourvesMT15} provided a lower bound of the MaxMinShare for goods which is not tight.
		It can be verified that $1-\Delta_2^{\oplus}(\alpha)$ is strictly larger than their bound when $\alpha \in (\frac{1}{5}, \frac{3}{10})$ (Definition 2 in \citep{DBLP:journals/tcs/GourvesMT15}).
		Thus, as a byproduct,  Corollary \ref{coro:homogeneous} improves the result in \cite{DBLP:journals/tcs/GourvesMT15} for goods with $n=2$ by giving the tight worst-case bound, i.e.,
		\[
		\min_{v\in \mathcal{V}(\alpha)}\max_{ \mathbf{A} \in \mathcal{X}_{2}(M)}\min_{1\leq \ell \leq 2}v(A_{\ell }) = 1- \Delta_2^{\oplus}(\alpha).
		\]
		% For any $v\in \cV(\alpha)$, there exists an allocation $(A_1,A_2)$ such that
		% \[
		% v(A_i) \ge 1- \Delta_2^{\oplus}(\alpha) \text{ for both $i =1,2$}.
		% \]
		% There exists $v'\in \cV(\alpha)$ such that $1-\Delta_2^{\oplus}(\alpha)$ is the best possible bound, i.e., for any allocation $(B_1,B_2)$,
		% \[
		% \min_{i =1,2} v'(B_i) \le 1 - \Delta_2^{\oplus}(\alpha).
		% \]
		%Note that the result in Corollary \ref{coro:goods:2} 
		In Remark \ref{remark:heter}, we show how to extend this result to two non-identical disutilities. 
	\end{remark}

	\subsection{Roadmap for the Proof of Theorem~\ref{thm:homogeneous}}
	\label{sec:homo:proof:road}
	
	As we have discussed, after $m$ reaches a certain value (e.g., $m \ge \lceil \frac{2}{\alpha} \rceil - 1$ by Lemma \ref{lem:worstMMS_property}), Hill's share does not increase anymore, and thus Corollary \ref{coro:homogeneous} is a special case of Theorem \ref{thm:homogeneous} when $m$ is sufficiently large.
	% By Lemma \ref{lem:worstMMS_property}, one can enlarge the instance, in the sense of increasing $m$,
	% without affecting the value of agents through adding objects at which every agent values zero.
	% That is, the Hill's share for the restricted $m$ and for the unrestricted $m$ are identical when the restricted $m$ is at least the number of objects of the worst-case instance.
	% (explicitly presented in Section \ref{sec:proof:lem:without-m} \ak{Shall we delete the content within the bracket?}).
	Therefore, in this subsection, we first prove Corollary \ref{coro:homogeneous}, and
	in the appendix, we carefully
	discuss Hill's share when $m$ is not sufficiently large, which will complete the proof of Theorem~\ref{thm:homogeneous} accordingly. 
	% Such a proof route also provides a byproduct:
	% a tight guarantee for the setting of the unrestricted
	% $m$. \ak{Shall we delete the last sentence?}
	Further, we also defer the proof of case $n=2$ and $k=1$ to the appendix, which makes a difference from the other cases and requires a more involved analysis.
	% Recall that Hill's share with $n=2$ and $k = 1$ makes a difference from the other cases.
	% As one can tell from 
	% the pattern of 

	% Following the proof route, we are first 
	% concerned with the
	% unrestricted case.
	We prove Corollary \ref{coro:homogeneous} by contradiction,
	and assume that there exists a disutility $v \in \mathcal{V(\alpha)}$ whose MinMaxShare is larger than $\Delta_n^{\oplus}(\alpha)$. 
	Let $\mathbf{A} = (A_1, \ldots, A_n)$ be a lexicographical MinMax allocation of $v$;
	%that gives \bo{the lexicographical MinMaxShare of $v$; 
		that is, the largest disutility of bundles in $\mathbf{A}$ is the minimised over all allocations, and among these allocations the second largest disutility is minimised, and so on. 
		Without loss of generality, assume  $v(A_1)\ge\cdots\ge v(A_n)$ and $v(A_1) = \MMS_n(v) > \Delta_n^{\oplus}(\alpha)$. 
		Let $E_\alpha$ denote the subset of objects whose disutilities are exactly $\alpha$, i.e., $E_\alpha = \{e\in M \mid v(e)=\alpha\}$. 
		It can be verified that $\Delta_n^{\oplus}(\alpha) \ge (k+1)\alpha$ (this is also illustrated in Fig. \ref{fig:result1:homo}), which gives $v(A_1) > (k+1)\alpha$. 
		Moreover, since $v(e) \le \alpha$ for any $e \in M$, $|A_1| \ge k+2$. 
		We have the following property.
		% present two claims on $\mathbf{A}$,  which emphasise that the largest value of $\mathbf{A}$ is the smallest over all the allocations. 
		
		\begin{claim}\label{clm:diff}
			Letting $j$ be an agent in $N\setminus\{1\}$, for any $S_1\subseteq A_1$ and $S_j\subseteq A_j$ such that $v(S_1) > v(S_j)$, $v(S_1) - v(S_j) \ge v(A_1) - v(A_j)$.
		\end{claim}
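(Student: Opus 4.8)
The plan is to prove the claim by an exchange argument that exploits the \emph{lexicographic} optimality of $\mathbf{A}$. First I would dispose of the trivial case: if $v(A_1) = v(A_j)$, then $v(A_1) - v(A_j) = 0$, and since the hypothesis $v(S_1) > v(S_j)$ already forces $v(S_1) - v(S_j) > 0$, the desired inequality holds at once. So the only case requiring work is $v(A_1) > v(A_j)$, which I would handle by contradiction.

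Suppose $v(A_1) > v(A_j)$ but $v(S_1) - v(S_j) < v(A_1) - v(A_j)$. The idea is to swap $S_1$ and $S_j$ between the two bundles, forming $A_1' = (A_1 \setminus S_1) \cup S_j$ and $A_j' = (A_j \setminus S_j) \cup S_1$ and leaving all other bundles untouched. Since $A_1$ and $A_j$ are disjoint with $S_1 \subseteq A_1$ and $S_j \subseteq A_j$, this is again a valid partition of $M$, and additivity gives $v(A_1') = v(A_1) - (v(S_1) - v(S_j))$ and $v(A_j') = v(A_j) + (v(S_1) - v(S_j))$.

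From $v(S_1) > v(S_j)$ I would immediately obtain $v(A_1') < v(A_1)$, and from the contradiction hypothesis $v(S_1) - v(S_j) < v(A_1) - v(A_j)$ I would obtain $v(A_j') < v(A_1)$ as well. Hence both modified bundles fall strictly below the old maximum $v(A_1)$, while every other bundle is unchanged and was already at most $v(A_1)$. Consequently the collection of bundles attaining the maximum value $v(A_1)$ shrinks by at least one (we drop $A_1$ from this top level and add nothing to it), so the sorted disutility vector of the new allocation is lexicographically strictly smaller than that of $\mathbf{A}$, contradicting the lexicographic MinMax optimality of $\mathbf{A}$. This establishes the claim.

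The main point to be careful about is precisely the lexicographic (as opposed to plain MinMax) comparison: when there are ties at the top level, i.e.\ $v(A_2) = v(A_1)$, the new global maximum may still equal $v(A_1)$, so the improvement is invisible at the level of $\max_\ell v(A_\ell)$ alone. What the swap actually achieves is a strict decrease in the \emph{multiplicity} of bundles at the top value, which is exactly what the lexicographic refinement detects. This is why the argument requires a lexicographic MinMax allocation rather than an arbitrary MinMax allocation, and it is the only step where more than routine arithmetic is needed.
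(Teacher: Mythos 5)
Your proof is correct and takes essentially the same approach as the paper's: the identical swap of $S_1$ and $S_j$ producing two bundles strictly below $v(A_1)$ while leaving all others unchanged, contradicting the lexicographic MinMax optimality of $\mathbf{A}$. Your explicit treatment of the tie case $v(A_2)=v(A_1)$ (via the shrinking multiplicity of bundles at the top value) simply spells out the detail the paper leaves implicit when it invokes lexicographic, rather than plain MinMax, optimality.
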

		\begin{proof}
			For the sake of contradiction, we assume that there exist $S'_1\subseteq A_1$ and $S'_{j'}\subseteq A_{j'}$ such that $v(S'_1) > v(S'_{j'})$ and $v(S'_1) - v(S'_{j'}) < v(A_1) - v(A_{j'})$.
			Then we construct another allocation $\mathbf{B} = (B_1, \ldots, B_n)$ by exchanging $S'_1$ and $S'_{j'}$, 
			i.e., $B_1 = A_1 \setminus S'_1 \cup S'_{j'}$, $B_{j'} = A_{j'} \setminus S'_{j'} \cup S'_1$ and $B_j = A_j$ for any $j \in N \setminus \{1, {j'}\}$. 
			It follows that $v(B_1) < v(A_1)$, $v(B_{j'}) < v(A_1)$ and $v(B_j) = v(A_j)$ for any $j \in N \setminus \{1, {j'}\}$, which contradicts the assumption that $\mathbf{A}$ is a lexicographical MinMax allocation of $v$.
		\end{proof}
		%\ak{How about moving the proof of Obs 1 to the appendix?}
		
		The contraposition of Claim \ref{clm:diff} gives the following. 
		\begin{claim}\label{clm:v_Aj}
			Letting $j$ be an agent in $N\setminus\{1\}$,  for any $S_1\subseteq A_1$ and $S_j\subseteq A_j$ such that $v(A_j \setminus S_j \cup S_1) < v(A_1)$, $v(S_j) \ge v(S_1)$.
		\end{claim}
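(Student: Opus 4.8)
The plan is to derive Claim~\ref{clm:v_Aj} as the logical contrapositive of Claim~\ref{clm:diff}, after translating the set-based hypothesis into a statement about disutility differences via additivity. First I would fix $S_1 \subseteq A_1$ and $S_j \subseteq A_j$ and rewrite the bundle obtained by the exchange. Since $A_1$ and $A_j$ are disjoint, $S_1$ is disjoint from $A_j$, and since $S_j \subseteq A_j$, additivity of $v$ gives
\[
v(A_j \setminus S_j \cup S_1) = v(A_j) - v(S_j) + v(S_1).
\]
Hence the hypothesis $v(A_j \setminus S_j \cup S_1) < v(A_1)$ is exactly equivalent to the inequality $v(S_1) - v(S_j) < v(A_1) - v(A_j)$.

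With this translation in hand, I would argue by contradiction. Suppose the conclusion fails, i.e., $v(S_j) < v(S_1)$, so that $v(S_1) > v(S_j)$. Then $S_1$ and $S_j$ satisfy precisely the premise of Claim~\ref{clm:diff}, which yields $v(S_1) - v(S_j) \ge v(A_1) - v(A_j)$. This directly contradicts the strict inequality obtained in the previous step, so we must have $v(S_j) \ge v(S_1)$, as claimed.

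The argument is essentially purely logical, so I do not anticipate a genuine obstacle; the only point demanding care is the additivity rewriting of $v(A_j \setminus S_j \cup S_1)$, where one must confirm that $S_1 \cap A_j = \emptyset$ (guaranteed by $S_1 \subseteq A_1$ together with $A_1 \cap A_j = \emptyset$), so that no disutility is double-counted and the equivalence between the two formulations of the hypothesis is exact.
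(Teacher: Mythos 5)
Your proposal is correct and follows exactly the paper's route: the paper derives Claim \ref{clm:v_Aj} as the contraposition of Claim \ref{clm:diff} (stated there in one line), and your argument simply spells out the details of that contraposition, including the additivity rewriting $v(A_j \setminus S_j \cup S_1) = v(A_j) - v(S_j) + v(S_1)$ that makes the equivalence precise.
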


		As a warm-up,
		we start from the case with 
		large $\alpha$, where $k = 0$, and distinguish two subcases depending on the domain of $\alpha$.
		% and unrestricted $m$ of which the proof is presented in the following section.
		
		% \subsubsection{Case 1: $n\ge 2$ and $k=0$}
		
		% \
		
		\medskip
		
		\noindent{\bf Case 1: $n\ge 2$ and $k=0$}
		
		\medskip
		
		{\bf Subcase 1.1: $\alpha\in D(n,0)$}
		
		\medskip
		
		%\textbf{(1) Subcase:} $\alpha\in D(n,0)$ 
		
		When $\alpha\in D(n,0)$, $\frac{1}{n+1}< \alpha\le \frac{2}{n+2}$ and $v(A_1) > \Delta_n^{\oplus}(\alpha)= \frac{2-2\alpha}{n}$.
		If $E_\alpha \cap A_1 \neq \emptyset$, there exists $e^* \in A_1$ such that $v(e^*) = \alpha < v(A_1)$. 
		Then Claim \ref{clm:v_Aj} gives a lower bound of $v(A_j)$ for any $j\in N\setminus\{1\}$, i.e., $v(A_j) \ge v(e^*) = \alpha$. 
		Summing up these lower bounds leads to the following contradiction
		\[
		1 = \sum_{j\in N}v(A_j) > \frac{2-2\alpha}{n} + (n-1)\cdot\alpha  = \frac{(n+1)(n-2)\alpha + 2}{n} \ge 1,
		\]
		where the last inequality is because $\alpha > \frac{1}{n+1}$ and $n \ge 2$. 
		
		Therefore, $E_\alpha \cap A_1 = \emptyset$. 
		Then by the definition of $\mathcal{V}(\alpha)$, there must exist $j' \in N\setminus\{1\}$ such that $E_\alpha \cap A_{j'} \neq \emptyset$, and thus $v(A_{j'}) \ge \alpha$. 
		Recall that $|A_1| \ge k+2 = 2$, this implies there exists $S \subseteq A_1$ such that $v(A_1) > v(S) \ge \frac{1}{2}v(A_1) > \frac{1-\alpha}{n}$. 
		According to Claim \ref{clm:v_Aj}, $v(A_j) \ge v(S) > \frac{1-\alpha}{n}$ holds for any $j \in N \setminus \{1, j'\}$. 
		As a result, 
		\[
		1 = \sum_{j\in N}v(A_j) > \frac{2-2\alpha}{n} + \alpha + (n-2)\cdot\frac{1-\alpha}{n} = 1,
		\]
		which is also a contradiction. Therefore, $v(A_1) > \Delta_n^{\oplus}(\alpha)$ never holds when $\alpha\in D(n,0)$. 
		
		For the other direction, the disutility function for this subcase (see Table \ref{tab:k=0-D}) contains one object with disutility $\alpha$ and $n$ objects with disutility $\frac{1-\alpha}{n}$. 
		Since $\frac{1}{n+1} < \alpha \le \frac{2}{n+2}$, it follows that $\frac{1-\alpha}{n} < \alpha \le 2\cdot \frac{1-\alpha}{n}$. 
		Clearly, the MinMaxShare of this disutility function is $2\cdot \frac{1-\alpha}{n} = \Delta_n^{\oplus}(\alpha)$. 
		
		\begin{table}[H]
			\centering
			\begin{tabular}{c|c}
				\hline
				Object Disutility & Quantity \\
				\hline
				$\alpha$ & 1 \\
				$\frac{1-\alpha}{n}$ & $n$ \\
				\hline
			\end{tabular}
			\caption{Disutility function for Subcases 1.1 and 1.2.}
			\label{tab:k=0-D}
		\end{table}
		
		{\bf Subcase 1.2:  $\alpha\in I(n,0)$}
		
		\medskip
		
		When $\alpha\in I(n,0)$, by similar reasonings, we can show that $v(A_1) > \Delta_n^{\oplus}(\alpha)$ does not hold, either. 
		In this subcase, $\frac{2}{n+2}< \alpha\le 1$ and $\Delta_n^{\oplus}(\alpha)= \alpha$.
		If $E_\alpha \cap A_1 \neq \emptyset$, there exists $e^* \in A_1$ such that $v(e^*) = \alpha < v(A_1)$ and Claim \ref{clm:v_Aj} gives a lower bound of $v(A_j)$ for any $j\in N\setminus\{1\}$, i.e., $v(A_j) \ge v(e^*) = \alpha$. 
		Summing up these lower bounds leads to the following contradiction
		\[
		1 = \sum_{j\in N}v(A_j) > n\alpha > \frac{2n}{n+2} \ge 1, 
		\]
		where the last inequality is because $n \ge 2$. 
		
		Therefore, it must hold that $E_\alpha \cap A_1 = \emptyset$ and moreover, there exists $j' \in N\setminus\{1\}$ with $E_\alpha \cap A_{j'} \neq \emptyset$.
		Thus, $v(A_{j'}) \ge \alpha$. 
		Since $|A_1| \ge k+2 = 2$, there exists $S \subseteq A_1$ such that $v(A_1) > v(S) \ge \frac{1}{2}v(A_1) > \frac{\alpha}{2}$. 
		According to Claim \ref{clm:v_Aj}, $v(A_j) \ge v(S) > \frac{\alpha}{2}$ holds for any $j \in N \setminus \{1, j'\}$. 
		As a result, 
		\[
		1 = \sum_{j\in N}v(A_j) > \alpha + \alpha + (n-2)\cdot\frac{\alpha}{2} = \frac{n+2}{2} \alpha > 1,
		\]
		which is also a contradiction. 
		
		For the other direction,
		the disutility function for this subcase also
		contains one object with disutility $\alpha$ and $n$ objects with disutility $\frac{1-\alpha}{n}$ (see Table \ref{tab:k=0-D}). 
		Since $\frac{2}{n+2} < \alpha \le 1$, it follows that $2\cdot \frac{1-\alpha}{n} < \alpha \le 1$. 
		Clearly, the MinMaxShare of this disutility function is $\alpha = \Delta_n^{\oplus}(\alpha)$. Up to here, the proof regarding the case of $k=0$ is completed.
		
		\medskip
		
		Next, we consider the general case of $ k\ge 1$ excluding $n=2$ and $k=1$.

		\medskip
		
		\noindent{\bf Case 2: $n\ge 3$ and $k\ge 1$ or $n\ge 2$ and $k\ge 2$}
		
		\medskip
		
		For this case, we again start with the subcases when $\alpha \in D(n,k)$. 
		Recall that when $\alpha \in D(n,k)$, $\alpha \in (\frac{1}{(k+1)n+1}, \frac{k+2}{n(k+1)^2+k+2}]$ and $v(A_1) > \Delta_n^{\oplus}(\alpha)=\frac{k+2}{k+1} \cdot \frac{1-\alpha}{n}$.
		
		\medskip
		
		{\bf Subcase 2.1: $\alpha \in D(n,k)$ and $E(\alpha) \cap A_j = \emptyset$ for any $j \in N \setminus \{1\}$}

		\medskip

		% \textbf{(1) Subcase:} $\alpha \in D(n,k)$ and $E(\alpha) \cap A_j = \emptyset$ for any $j \in N \setminus \{1\}$
		
		In this subcase, all the objects with disutility $\alpha$ are in $A_1$,
		and thus $v(e) < \alpha$ for any $e \in A_j$ and $j \in N \setminus \{1\}$. 
		Due to the normalization, there exists an agent $j_0$ who receives disutility at most $\frac{1-v(A_1)}{n-1}$,
		which gives the following lower bound of the difference between the disutilities that agents $1$ and $j_0$ receive
		\[
		v(A_1) - v(A_{j_0}) \ge \frac{n}{n-1}v(A_1) - \frac{1}{n-1} > \frac{1-(k+2)\alpha}{(n-1)(k+1)}. 
		\] 
		It can be shown that the rightmost-hand side of the above inequality is no less than $\frac{\alpha}{2}$, which is equivalent to $\alpha \le \frac{2}{(k+1)n+k+3}$. 
		Since $\alpha \le \frac{k+2}{n(k+1)^2+k+2}$, it suffices to show $\frac{2}{(k+1)n+k+3} \ge \frac{k+2}{n(k+1)^2+k+2}$,
		which holds since
		\[
		\frac{2}{(k+1)n+k+3} - \frac{k+2}{n(k+1)^2+k+2} = \frac{(k+1)(nk-k-2)}{((k+1)n+k+3)(n(k+1)^2+k+2)} \ge 0,
		\]
		where the last inequality is because $n\ge 3$ and $k\ge 1$, or $n\ge 2$ and $k\ge 2$. 
		
		Therefore, $v(A_1) - v(A_{j_0}) > \frac{\alpha}{2}$. 
		Let $e^*$ be an object in $A_1$ with disutility $\alpha$. 
		Since $v(A_1) > (k+1)\alpha > \alpha$, for any $S \subseteq A_{j_0}$ with disutility smaller than $\alpha$, 
		Claim \ref{clm:diff} actually gives a tighter bound of its disutility, i.e., $v(S) \le v(e^*) - (v(A_1) - v(A_{j_0})) < \frac{\alpha}{2}$. 
		Thus, $v(e) < \frac{\alpha}{2}$ for any $e \in A_{j_0}$.
		Besides, according to Claim \ref{clm:v_Aj}, $v(A_{j_0}) \ge v(e^*) = \alpha$. 
		These two facts together imply that there exists $S' \subseteq A_{j_0}$ such that $v(S') \in [\frac{\alpha}{2}, \alpha)$, which is a contradiction to Claim \ref{clm:diff}.
		
		\medskip
		
		%\textbf{(2) Subcase:}
		
		{\bf Subcase 2.2: $\alpha \in D(n, k)$ and $E(\alpha) \cap A_{j'} \neq \emptyset$ for some $j' \in N \setminus \{1\}$}
		
		\medskip
		
		In this subcase, some objects with disutility $\alpha$ are in $A_{j'}$. 
		Before diving into the proof for this subcase, we present the following claim, 
		which shows the existence of a subset of $A_1$ whose disutility is within a specific range. 
		
		\begin{claim} \label{claim:subset}
			There exists a subset $S \subseteq A_1$ such that $\frac{k}{k+2}v(A_1) \le v(S) < v(A_1) - \alpha$. 
		\end{claim}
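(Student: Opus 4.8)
The plan is to pass to the complementary formulation. Writing $V=v(A_1)$ for brevity, the two desired bounds $\tfrac{k}{k+2}V\le v(S)<V-\alpha$ become, after putting $T=A_1\setminus S$, the single requirement $\alpha<v(T)\le\tfrac{2}{k+2}V$: indeed $v(S)\ge\tfrac{k}{k+2}V\Leftrightarrow v(T)\le\tfrac{2}{k+2}V$ and $v(S)<V-\alpha\Leftrightarrow v(T)>\alpha$. So it suffices to exhibit a subset of $A_1$ whose disutility lies in the half-open window $(\alpha,\tfrac{2}{k+2}V]$. I will use only three facts already established just before the claim: every object has disutility at most $\alpha$, $V>(k+1)\alpha$, and $|A_1|\ge k+2$.

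The main step is a dichotomy on the two objects $e,e'$ of smallest disutility among the objects currently in $A_1$ (whose number I call $q\ge k+2$). If $v(e)+v(e')>\alpha$, I take $T=\{e,e'\}$: being the two smallest of $q$ objects, their sum is at most twice the average, so $v(T)\le\tfrac{2}{q}V\le\tfrac{2}{k+2}V$, while $v(T)>\alpha$ by hypothesis; then $S=A_1\setminus T$ is as required, with $v(S)=V-v(T)\ge\tfrac{k}{k+2}V$. Otherwise $v(e)+v(e')\le\alpha$, and I merge $e,e'$ into a single pseudo-object of disutility $v(e)+v(e')\le\alpha$. This yields an instance with one fewer object, the same total $V$, all disutilities still at most $\alpha$, and still at least $k+2$ objects (the merge branch is entered only when $q>k+2$). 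I then recurse; any subset produced in the merged instance lifts to a subset of $A_1$ of the same disutility by replacing the pseudo-object with $\{e,e'\}$, so its value still lies in $(\alpha,\tfrac{2}{k+2}V]$. The recursion is well-founded because its base case is forced: once only $k+2$ objects remain, the $k$ largest contribute at most $k\alpha$, so the two smallest already sum to $V-(\text{top }k)\ge V-k\alpha>(k+1)\alpha-k\alpha=\alpha$, which is exactly the first branch. Hence the process always terminates in the first branch.

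The subtleties are purely bookkeeping: one must check that the merged pseudo-object never exceeds $\alpha$ (guaranteed precisely by the branch hypothesis $v(e)+v(e')\le\alpha$) and that the object count stays $\ge k+2$ along the recursion (guaranteed because the merge branch is entered only when $q>k+2$). I do not expect a genuine obstacle, but the one point that needs care is why a naive single greedy sweep fails: growing a prefix until it just exceeds $\alpha$ can overshoot $\tfrac{2}{k+2}V$ when $V$ is only slightly above $(k+1)\alpha$, because the crossing object may be as large as $\alpha>\tfrac{2}{k+2}V-\alpha$. Routing the argument through the two-smallest/merge dichotomy is what sidesteps this. It is also reassuring that the bound is tight exactly when $A_1$ consists of $k+2$ equal objects, where $v(S)=\tfrac{k}{k+2}V$ is attained, matching the shape of the worst-case instances used elsewhere in the proof.
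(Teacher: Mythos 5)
Your proof is correct, and it takes a genuinely different route from the paper's. The paper splits into $k=1$ (either a single object of disutility at least $\tfrac{1}{3}v(A_1)$, or a most-balanced $2$-partition of $A_1$) and $k\ge 2$, and the $k\ge 2$ argument is not local to $A_1$: it first establishes $v(e)>\tfrac{1}{k+2}\alpha$ for every $e\in A_1$ by invoking Claim~\ref{clm:v_Aj} (hence the lexicographic MinMax structure of the whole allocation), the normalisation $\sum_{j}v(A_j)=1$, and the specific range of $\alpha$ — which is exactly why the paper must re-prove the claim separately for $\alpha\in I(n,k)$ in Subcase~2.4. It then takes $S^*=\arg\min_{S\subseteq A_1,\,v(S)>\alpha}v(S)$ and shows $v(S^*)\le\tfrac{2}{k+2}v(A_1)$ by contradiction. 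Your merge-and-recurse argument on the two smallest (pseudo-)objects instead uses only the three local facts $v(e)\le\alpha$, $v(A_1)>(k+1)\alpha$, and $|A_1|\ge k+2$: the key inequality $v(e)+v(e')\le\tfrac{2}{q}v(A_1)$ for the two smallest of $q$ items, the observation that merging preserves all invariants (merged item $\le\alpha$, count stays $\ge k+2$ since the merge branch requires $q>k+2$), and the forced base case at $q=k+2$ are all verified correctly. This buys you two things the paper's proof lacks: uniformity in $k$ (no case split between $k=1$ and $k\ge2$) and independence from the range of $\alpha$ and from the rest of the allocation, so your single argument covers both $\alpha\in D(n,k)$ and $\alpha\in I(n,k)$ at once and would let the paper delete the separate re-proof in Subcase~2.4. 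The only thing the paper's route yields that yours does not is the intermediate lower bound $v(e)>\tfrac{1}{k+2}\alpha$ on individual objects in $A_1$, but that fact is used nowhere else, so your approach is a clean simplification.
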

		\begin{proof}[Proof of Claim \ref{claim:subset}]
			When $k = 1$, if there exists $e \in A_1$ such that $v(e) \ge \frac{1}{3}v(A_1)$, recall that $v(A_1) > (k+1)\alpha = 2\alpha$, Claim \ref{claim:subset} holds since $v(e) \le \alpha < v(A_1) - \alpha$. 
			If $v(e) < \frac{1}{3}v(A_1)$ for any $e \in A_1$, denote by $(A_1^1, A_1^2)$ one 2-partition of $A_1$ that minimises the disutility difference between the two bundles among all the 2-partitions. 
			Without loss of generality, we assume $v(A_1^1) \le v(A_1^2)$, then $v(A_1^1) \le \frac{1}{2}v(A_1) < v(A_1) - \alpha$. 
			Besides, $v(A_1^1) \ge \frac{1}{3}v(A_1)$ holds. 
			Otherwise, $v(A_1^2) - v(A_1^1) = v(A_1) - 2v(A_1^1) > \frac{1}{3}v(A_1)$, implying that moving an object from $A_1^2$ to $A_1^1$ returns another 2-partition of $A_1$ that has a smaller disutility difference, which contradicts the definition of $(A_1^1, A_1^2)$. 
			
			When $k \ge 2$, we first show that $v(e) > \frac{1}{k+2}\alpha$ for any $e \in A_1$. 
			If not, $v(A_1) > v(A_1 \setminus \{e\}) \ge v(A_1) - \frac{1}{k+2}\alpha$. 
			Then Claim \ref{clm:v_Aj} gives $v(A_j) \ge v(A_1) - \frac{1}{k+2}\alpha$ for any $j \in N \setminus \{1\}$. 
			Summing up these lower bounds gives the following inequality
			\[
			1 = \sum_{j \in N} v(A_j) \ge v(A_1) + (n-1)v(A_1) - \frac{n-1}{k+2}\alpha > \frac{k+2}{k+1} - \frac{(k+2)^2 + (k+1)(n-1)}{(k+1)(k+2)}\alpha.
			\]
			It can be shown that the rightmost-hand side is at least 1, which constitutes a contradiction. 
			This is equivalent to show that $\alpha \le \frac{k+2}{(k+2)^2 + (k+1)(n-1)}$. 
			Since $\alpha \le \frac{k+2}{n(k+1)^2+k+2}$, it suffices to show that $\frac{k+2}{(k+1)^2 + (k+1)(n-1)} \ge \frac{k+2}{n(k+1)^2+k+2}$,
			which holds since 
			\[
			n(k+1)^2+k+2 - ((k+2)^2 + (k+1)(n-1)) = (k+1)(nk - k -1) \ge 0,
			\]
			where the last inequality is because $n \ge 2$ and $k \ge 1$. 
			
			We then let $S^* = \arg \min_{S \subseteq A_1, v(S) > \alpha}v(S)$ which is guaranteed to exist since $v(A_1) > (k+1)\alpha > \alpha$, and show by contradiction that $v(S^*) \le \frac{2}{k+2}v(A_1)$. 
			This gives $\frac{k}{k+2}v(A_1) \le v(A_1 \setminus S^*) < v(A_1) - \alpha$. 
			We assume for the sake of contradiction that $v(S^*) > \frac{2}{k+2}v(A_1)$. 
			Then the definition of $S^*$ gives the following lower bound of $v(e)$ for any $e \in S^*$
			\[
			v(e) > v(S^*) - \alpha > \frac{2}{k+2}v(A_1) - \alpha > \frac{k}{k+2}\alpha \ge \frac{1}{2}\alpha,
			\]
			where the second last inequality is because $v(A_1) > (k+1)\alpha$ and the last inequality is because $k \ge 2$. 
			This lower bound implies that $S^*$ contains exactly 2 objects. 
			Otherwise (i.e., $|S^*| \ge 3$), for any subset $S' \subseteq S^*$ that contains exactly 2 objects, $\alpha < v(S') < v(S^*)$ holds, which contradicts the definition of $S^*$. 
			
			Therefore, we can denote $S^* = \{e^l, e^s\}$ and assume without loss of generality that $v(e^l) \ge v(e^s)$. 
			Accordingly, $v(e^l) \ge \frac{1}{2}v(S^*) > \frac{1}{k+2}v(A_1) > \frac{k+1}{k+2}\alpha$. 
			Recall that $v(e) > \frac{1}{k+2}\alpha$ holds for any $e \in A_1$.
			These two facts together imply that the total disutility of $e^l$ and any other object in $A_1$ is larger than $\alpha$.
			From the definition of $S^*$, we know that $e^s \in \arg \min_{e \in A_1}v(e)$,
			which gives $v(e) \ge v(e^s) > \frac{k}{k+2}\alpha$ for any $e \in A_1$. 
			Let $S'$ be the subset of $A_1$ that contains the two objects with the smallest disutilities, 
			the following inequality leads to a contradiction to the definition of $S^*$
			\[
			\alpha \le \frac{2k}{k+2}\alpha < v(S') \le \frac{2}{k+2}v(A_1) < v(S^*), 
			\]
			where the first inequality is because $k \ge 2$ and the second last inequality is because $|A_1| \ge k+2$. 
		\end{proof}
		
		We are now ready to reveal the contradiction in the subcase. 
		Denote by $e^*$ one object in $A_{j'}$ that has disutility $\alpha$ and by $S$ a subset of $A_1$ that satisfies Claim \ref{claim:subset},
		Claim \ref{clm:v_Aj} gives $v(A_{j'} \setminus \{e^*\}) \ge v(S) \ge \frac{k}{k+2}v(A_1)$; that is, $v(A_{j'}) \ge \frac{k}{k+2}v(A_1) + \alpha$. 
		For any $j \in N \setminus \{1, j'\}$, recall that $|A_1| \ge k+2$ which implies that there exists $S' \subseteq A_1$ such that $v(A_1) > v(S') \ge \frac{k+1}{k+2}v(A_1)$,
		Claim \ref{clm:v_Aj} gives $v(A_j) \ge v(S') \ge \frac{k+1}{k+2}v(A_1)$. 
		Summing up these lower bounds leads to the following contradiction
		\begin{align*}
			1 = \sum_{j \in N} v(A_j) &\ge v(A_1) + \frac{k}{k+2}v(A_1) + \alpha + (n-2)\cdot \frac{k+1}{k+2}v(A_1) \\
			&= \frac{n(k+1)}{k+2}v(A_1) + \alpha > 1 - \alpha + \alpha = 1.
		\end{align*}
		
		For the other direction, 
		% let us consider an instance 
		the disutility function for the subcases when $\alpha \in D(n, k)$ 
		(see Table \ref{tab:general-D}) containing one object with disutility $\alpha$ and $n(k+1)$ objects with disutility $\frac{1-\alpha}{n(k+1)}$. 
		Since $\alpha > \frac{1}{kn+n+1}$, it follows that $\alpha > \frac{1-\alpha}{n(k+1)}$. 
		Besides, it can be verified that $\alpha < \frac{2-2\alpha}{n(k+1)}$, which is equivalent to $\alpha < \frac{2}{nk+n+2}$.
		Since $\alpha \le \frac{k+2}{n(k+1)^2+k+2}$, it suffices to show $\frac{k+2}{n(k+1)^2+k+2} < \frac{2}{n(k+1)+2}$, which holds since
		\[
		\frac{2}{n(k+1)+2} - \frac{k+2}{n(k+1)^2+k+2} = \frac{nk(k+1)}{(n(k+1)+2)(n(k+1)^2+k+2)} > 0. 
		\]
		By the pigeonhole principle, there exists a bundle that contains at least $k+2$ objects in any allocation. 
		This implies that the MinMaxShare of this disutility function is $(k+2)\cdot \frac{1-\alpha}{n(k+1)}$, 
		which happens in the allocation where one bundle contains $k+2$ objects with disutility $\frac{1-\alpha}{n(k+1)}$, 
		one bundle contains $k$ objects with disutility $\frac{1-\alpha}{n(k+1)}$ and one object with disutility $\alpha$,
		and each of the other bundles contains $k+1$ objects with disutility $\frac{1-\alpha}{n(k+1)}$. 
		
		\begin{table}[H]
			\centering
			\begin{tabular}{c|c}
				\hline
				Object Disutility & Quantity \\
				\hline
				$\alpha$ & 1 \\
				$\frac{1-\alpha}{n(k+1)}$ & $n(k+1)$ \\
				\hline
			\end{tabular}
			\caption{Disutility function for subcases $\alpha \in D(n, k)$ with $n\ge 3$ and $k\ge 1$, or $n\ge 2$ and $k\ge 2$.}
			\label{tab:general-D}
		\end{table}
		
		Next we consider the subcases when $\alpha \in I(n,k)$. 
		Recall that when $\alpha \in I(n, k)$, $\alpha \in (\frac{k+2}{n(k+1)^2+k+2}, \frac{1}{kn+1}]$ and $v(A_1) > \Delta_n^{\oplus}(\alpha)=(k+1)\alpha$.  
		
		\medskip
		
		{\bf Subcase 2.3:  $\alpha \in I(n,k)$ and $E(\alpha) \cap A_j = \emptyset$ for any $j \in N \setminus \{1\}$}
		
		\medskip
		
		For this subcase, we first derive a lower bound of $v(A_j)$ for any $j \in N \setminus \{1\}$, i.e., $v(A_j) \ge (\frac{(k+1)^2}{k+2} + \frac{1}{(n-1)(k+2)})\alpha$. 
		Letting $D = \frac{(k+1)^2}{k+2} + \frac{1}{(n-1)(k+2)}$,
		we assume for the sake of contradiction that $v(A_{j'}) < D\alpha$ for some $j' \in N \setminus \{1\}$. 
		It can be verified that $k < D < k+1$, where the first inequality is equivalent to $n > 0$, and the second inequality is equivalent to $(n-1)(k+1) > 1$.
		Denote by $e^*$ one object in $A_1$ with disutility $\alpha$ and by $e'$ any object in $A_j$, we have
		\[
		v(A_{j'} \setminus (A_{j'} \setminus \{e'\}) \cup (A_1 \setminus \{e^*\})) = v(A_1 \setminus \{e^*\} \cup \{e'\}) < v(A_1). 
		\]
		Then from Claim \ref{clm:v_Aj}, $v(A_{j'} \setminus \{e'\}) \ge v(A_1 \setminus \{e^*\})$, which gives
		\[
		v(e') \le v(A_{j'}) - v(A_1) + v(e^*) < D\alpha - (k+1)\alpha + \alpha = (D-k)\alpha. 
		\]
		However, we next show that the disutility of some object in $A_{j'}$ must be larger than $(D-k)\alpha$, which leads to a contradiction. 
		To achieve this, we denote $S^* \in \arg \min_{S\subseteq A_{j'}, v(S) > (D-1)\alpha}v(S)$,
		whose existence is guaranteed since Claim \ref{clm:v_Aj} gives $v(A_{j'}) \ge v(A_1 \setminus \{e^*\}) > k\alpha > (D-1)\alpha$. 
		Notice that
		\[
		v(A_{j'} \setminus S^* \cup (A_1 \setminus \{e^*\})) < D\alpha - (D-1)\alpha + v(A_1) - \alpha = v(A_1), 
		\]
		from Claim \ref{clm:v_Aj}, $v(S^*) \ge v(A_1 \setminus \{e^*\}) > k\alpha$. 
		Then the definition of $S^*$ implies that the disutility of any object in $S^*$ is at least 
		\[
		v(S^*) - (D-1)\alpha > (k-D+1)\alpha \ge (D-k)\alpha,
		\]
		where the last inequality is equivalent to $D-k-\frac{1}{2} = \frac{k+2-kn}{2(n-1)(k+2)} \le 0$, which holds when $n\ge 3$ and $k\ge 1$, or $n\ge 2$ and $k\ge 2$. 
		
		Therefore, $v(A_j) \ge (\frac{(k+1)^2}{k+2} + \frac{1}{(n-1)(k+2)})\alpha$ holds for any $j \in N \setminus \{1\}$.
		Summing up these lower bounds leads to the following contradiction
		\[
		1 = \sum_{j \in N}v(A_j) > (k+1)\alpha + (n-1)\cdot (\frac{(k+1)^2}{k+2} + \frac{1}{(n-1)(k+2)})\alpha = \frac{n(k+1)^2+k+2}{k+2}\alpha > 1. 
		\]
		
		\medskip
		
		\textbf{Subcase 2.4: $\alpha \in I(n,k)$ and $E(\alpha) \cap A_{j'} \neq \emptyset$ for some $j' \in N \setminus \{1\}$}
		
		\medskip
		
		The proof is similar to that of Subcase 2.2. 
		First, it can be verified that Claim \ref{claim:subset} still holds. 
		
		%\medskip
		
		\noindent\begin{proof}[Proof of Claim \ref{claim:subset} for $\alpha \in I(n,k)$]
			Notice that Claim \ref{claim:subset} holds as long as $k=1$, thus, we can focus on $k\ge 2$. 
			We first show that $v(e) > \frac{1}{k+2}\alpha$ for any $e \in A_1$. 
			If not, $v(A_1 \setminus \{e\}) \ge v(A_1) - \frac{1}{k+2}\alpha$. 
			Then Claim \ref{clm:v_Aj} gives $v(A_j) \ge v(A_1) - \frac{1}{k+2}\alpha$ for any $j \in N \setminus \{1\}$. 
			Summing up these lower bounds gives the following formula
			\[
			1 = \sum_{j \in N} v(A_j) \ge v(A_1) + (n-1)v(A_1) - \frac{n-1}{k+2}\alpha > \frac{n(k+1)(k+2)-n+1}{k+2}\alpha.
			\]
			It can be shown that the rightmost-hand side of the above inequality is at least 1, which is a contradiction. 
			This is equivalent to show that $\alpha \ge \frac{k+2}{n(k+1)(k+2)-n+1}$. 
			Since $\alpha \ge \frac{k+2}{n(k+1)^2+k+2}$, it suffices to show that $\frac{k+2}{n(k+1)(k+2)-n+1} \le \frac{k+2}{n(k+1)^2+k+2}$,
			which holds since 
			\[
			n(k+1)(k+2)-n+1 - (n(k+1)^2+k+2) = nk - k - 1 \ge 0
			\]
			where the last inequality is because $n \ge 2$ and $k \ge 1$. 
			
			We then let $S^* = \arg \min_{S \subseteq A_1, v(S) > \alpha}v(S)$, which is guaranteed to exist since $v(A_1) > (k+1)\alpha > \alpha$.
			By the same proof as the counterpart in the proof of Claim \ref{claim:subset} for $\alpha \in D(n,k)$, we can show that $v(S^*) \le \frac{2}{k+2}v(A_1)$, which gives $\frac{k}{k+2}v(A_1) \le v(A_1 \setminus S^*) < v(A_1) - \alpha$. 
		\end{proof}
		
		\medskip

		We are now ready to reveal the contradiction in this subcase. 
		Denote by $e^*$ one object in $A_{j'}$ that has disutility $\alpha$ and by $S$ a subset of $A_1$ that satisfies Claim \ref{claim:subset},
		Claim \ref{clm:v_Aj} gives $v(A_{j'} \setminus \{e^*\}) \ge v(S) \ge \frac{k}{k+2}v(A_1)$; that is, $v(A_{j'}) \ge \frac{k}{k+2}v(A_1) + \alpha$. 
		For any $j \in N \setminus \{1, j'\}$, recall that $|A_1| \ge k+2$ which implies that there exists $S' \subseteq A_1$ such that $v(A_1) > v(S') \ge \frac{k+1}{k+2}v(A_1)$,
		Claim \ref{clm:v_Aj} gives $v(A_j) \ge v(S') \ge \frac{k+1}{k+2}v(A_1)$. 
		Summing up these lower bounds leads to the following contradiction
		\begin{align*}
			1 = \sum_{j \in N} v(A_j) &\ge v(A_1) + \frac{k}{k+2}v(A_1) + \alpha + (n-2)\cdot \frac{k+1}{k+2}v(A_1) \\
			&= \frac{n(k+1)}{k+2}v(A_1) + \alpha > \frac{n(k+1)^2+k+2}{k+2}\alpha > 1.
		\end{align*}
		
		For the other direction, 
		the disutility function for the subcases when $\alpha \in I(n, k)$ 
		(See Table \ref{tab:general-I}) containing $kn+1$ objects with disutility $\alpha$ and $n-1$ objects with disutility $\frac{1-(nk+1)\alpha}{n-1}$. 
		It can be verified that $\alpha > \frac{1-(kn+1)\alpha}{n-1}$, which is equivalent to $\alpha > \frac{1}{(k+1)n}$.
		Since $\alpha > \frac{k+2}{n(k+1)^2+k+2}$, it suffices to show $\frac{k+2}{n(k+1)^2+k+2} \ge \frac{1}{(k+1)n}$, which holds since
		\[
		\frac{k+2}{n(k+1)^2+k+2} - \frac{1}{(k+1)n} = \frac{(k+1)n - k - 2}{(n(k+1)^2+k+2)(k+1)n} \ge 0,
		\]
		where the inequality is because $n\ge 3$ and $k\ge 1$, or $n\ge 2$ and $k\ge 2$. 
		By the pigeonhole principle, there exists a bundle that contains at least $k+1$ objects with disutility $\alpha$. 
		This implies that the MinMaxShare of this disutility function is $(k+1)\alpha$,
		which happens in the allocation where one bundle contains $k+1$ objects with disutility $\alpha$, and each of the other bundles contains $k$ objects with disutility $\alpha$ and one object with disutility $\frac{1-(nk+1)\alpha}{n-1}$. 
		
		\begin{table}[H]
			\centering
			\begin{tabular}{c|c}
				\hline
				Object Disutility & Quantity \\
				\hline
				$\alpha$ & $kn+1$ \\
				$\frac{1-(nk+1)\alpha}{n-1}$ & $n-1$\\
				\hline
			\end{tabular}
			\caption{Disutility function for subcases $\alpha \in I(n, k)$ with $n\ge 3$ and $k\ge 1$, or $n\ge 2$ and $k\ge 2$.}
			\label{tab:general-I}
		\end{table}
		
		Up to here, we computed Hill's share
		for unrestricted $m$, except for $n=2$ and $k=1$.
		Moving to the setting of restricted $m$,
		Hill's share can be computed by
		similar approaches with a more involved discussion.
		The remaining proof of Theorem~\ref{thm:homogeneous} is regulated to the appendix.

		\section{Hill's Guarantee for Indivisible Bads}
		
		\subsection{Main Result}
		
		In this section, we prove the counterpart result of Hill's guarantee for indivisible bads.
		Consider the general case, where each one of the $n$ agents now has an arbitrary disutility $v_{i}$ in $%
		\mathcal{A}dd(M)$ (by convention $m=|M|$). 
		%Following \citep{hill1987partitioning}, g
		Given $m$ and $n$%
		, a \textit{guarantee} specifies an upper bound $\Gamma _{n}(v_{i};m)$ on
		agent $i$'s disutility when she shares the $m$ bads with $n-1$ other agents
		of unknown disutilities in $\mathcal{A}dd(M)$. By construction the mapping $%
		\Gamma _{n}$ is the same for every agent $i$. As part of its definition, a
		guarantee must be feasible: for any profile $(v_{i})_{i=1}^{n}\in \lbrack 
		\mathcal{A}dd(M)]^{n}$ there exists an allocation $(A_{1},\ldots ,A_{n})$
		of $M$ such that%
		\begin{equation}
			v_{i}(A_{i})\leq \Gamma _{n}(v_{i};m)\mbox{ for all } 1\leq i\leq n.
			\label{2}
		\end{equation}
		
		We know from \cite{DBLP:conf/aaai/AzizRSW17} and \cite{DBLP:conf/wine/FeigeST21} that the MinMaxShare value $%
		\MMS_{n}(v_{i})$ is not a guarantee because at some (rare!) profiles no
		allocation meets all inequalities in (\ref{2}). By applying Inequalities~(\ref{2})
		to an arbitrary guarantee $\Gamma _{n}$ at the unanimous profile $v_{i}=v$
		for all $i$, we see it is lower bounded by the MinMaxShare:%
		\[
		\Gamma _{n}(v;m)\geq \MMS_{n}(v)\mbox{ for all }v\in \mathcal{A}dd(M).
		\]
		
		In this section, we show that the monotone hull of $\Delta_n^{\oplus}$ serves as the best guarantee in Hill's model.
		Recall that $\cU(\alpha;m)$ contains all the disutility functions $v(\cdot)$ on objects $M$ such that $\max_{e\in M}v(e) \le \alpha$, 
		and $\cU(\alpha) = \bigcup_m \cU(\alpha;m)$.
		For simplicity in the presentation and analysis, we ignore the restriction of the number of objects $m$ in this section, 
		and the result can be extended to the setting with parameter $m$ using the same approach in Section \ref{sec::upper-bound-minmaxshare}.
		The definition of $\cU(\alpha)$ is the same as in
		\citep{hill1987partitioning,DBLP:conf/wine/MarkakisP11,DBLP:journals/tcs/GourvesMT15}.
		%\citet{hill1987partitioning} gave the tight characterisation of the worst-case max-min value for every $\alpha$ regarding $\cU(\alpha)$, and \citet{DBLP:conf/wine/MarkakisP11} designed a polynomial-time algorithm to compute such allocations. 
		%Our second contribution is the counterpart result for chores.
		%In this section, we compute the best possible guarantee for valuations in $\cU(\alpha)$:
		Note that $\cU(\alpha') \subseteq \cU(\alpha)$ if $\alpha'\le \alpha$, and the difference between $\cV(\alpha)$ and $\cU(\alpha)$ is that the disutilities in $\cU(\alpha)$ do not require that there must be one object with disutility $\alpha$. 
		It is straightforward that the tight guarantee regarding $\cU(\cdot)$ must be monotone non-decreasing since any worst-case disutility in $\cU(\beta)$ is also a disutility in $\cU(\alpha)$ for $\beta \le \alpha$.
		We write $V_{n}$ the monotone hull of $\Delta_n^{\oplus}$
		\[
		V_{n}(\alpha )=\max_{0\leq \beta \leq \alpha }\Delta_n^{\oplus}(\beta ),
		\]%
		as illustrated in Fig. \ref{fig:result1:hete} when $n=2,3$. In more detail, we have the following formula of $V_{n}$:
		
		% \begin{definition}\label{def:heterogeneous}
			% Define function $ V _ n : [0,1] \rightarrow [\frac{1}{n}, 1]$ as the unique non-decreasing function satisfying
			$$
			V _ n ( \alpha) = 
			\left\{\begin{array}{ll} 
				\frac{k+2}{(k+1)n + 1}, & \text { if } 
				\alpha \in NI(n,k) \\
				(k+1)\alpha, & \text { if } \alpha \in I(n,k)
			\end{array}\right.
			$$
			where for any integer $k \geq 0 $,
			$$
			NI(n,k) = \left( \frac{1}{(k+1)n + 1}, \frac{k+2}{(k+1)((k+1)n + 1)} \right)
			$$
			and
			$$
			I(n,k) = \left[ \frac{k+2}{(k+1)((k+1) n + 1)}, \frac{1}{kn+1} \right].
			$$
			
			%\end{definition}
			
			\begin{figure}
				\centering
				\includegraphics[width=0.55\textwidth]{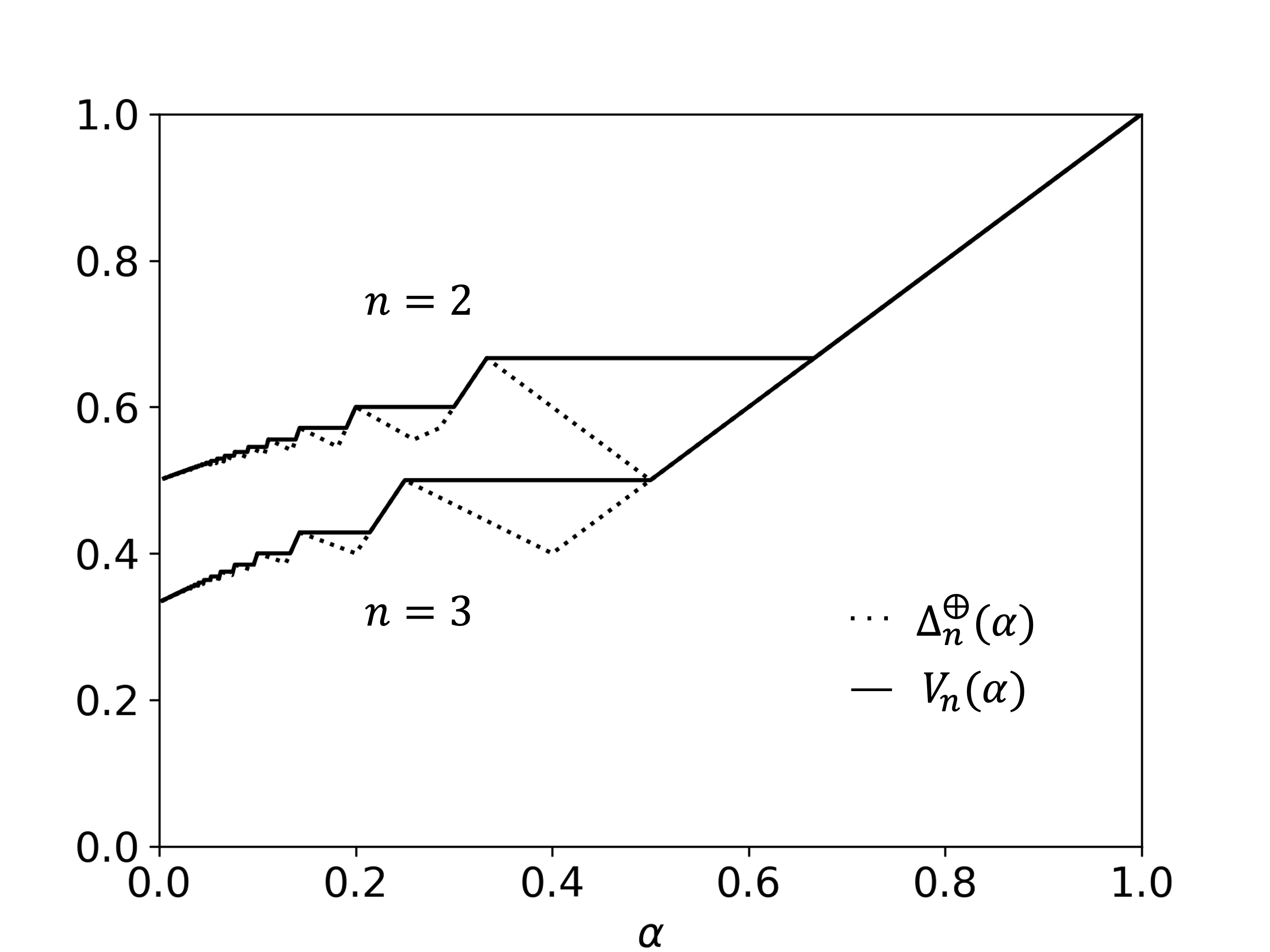}
				\caption{The characterisation for Heterogeneous Agents 
					% \bo{Bo: The figure needs to be revised since we do not have $W$ now.}
				}
				\label{fig:result1:hete}
			\end{figure}
			
			%Accordingly, by ironing the function $W_n(\cdot)$, i.e., setting
			%\[
			%V_n(\alpha) = \sup_{\beta \le \alpha}W_n(\beta),
			%\]
			%we have the definition of $V_n(\alpha)$, as shown in Figure \ref{fig:result1:hete}.
			By Theorem \ref{thm:homogeneous} and the construction of $V_n(\cdot)$, $V_n(\cdot)$ provides the tight bound
			of the worst-case MinMaxShare regarding $\cU(\cdot)$. 
			% \bo{Can we get this conclusion directly by Theorem 1?}
			% \ak{if only claim tight worst-case bound for MMS, 
				% then I feel the implication is straightforward.
				% But we may not be able to say $V_n()$ is trivially a guarantee?}
			We further prove that {$V_n(\cdot)$ is a guarantee and moreover}
			an allocation satisfying
			$V_n(\cdot)$ can be found in polynomial time.

			\begin{theorem} \label{thm:heterogeneous}
				$\Gamma _{n}(v)=V_{n}(\max_{e\in M}v(e))$ defines a canonical guarantee.
				That is, given any $0\le \alpha_i \le 1$ and $v_i\in \cU(\alpha_i)$ for $i = 1,\ldots, n$, there exists an allocation $(A_1,\ldots,A_n)$ with
				\[
				v_i(A_i) \le V_n(\alpha_i) \text{ for all $i = 1,\ldots, n$}
				\]
				and such an allocation
				can be computed in polynomial time. 
				\textcolor{black}{Moreover, for any $0\leq \alpha \leq 1$,
					there exists $\{v'_i\}_{i=1}^n$ with $v ^ {\prime} _ i \in \cU(\alpha)$ for any $i \in [n]$ such that $V_n(\alpha)$ is the best possible guarantee}, i.e., for any allocation $(B_1,\ldots,B_n)$, 
				\[
				\text{there exists } i\in N \text{ such that } v'_i(B_i) \ge V_n(\alpha).
				\]
				% \redcomment{How about ``No other function $f\neq V_n$ with $f(\alpha)\leq V_n(\alpha)$ for all $\alpha$
					% can be a guarantee
					% with respect to the domain $\cU(\alpha)$''?}
			\end{theorem}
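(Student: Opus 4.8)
The statement splits into two independent halves: feasibility (with polynomial-time implementation) of $\Gamma_n(v)=V_n(\max_{e\in M}v(e))$, and its optimality. I would dispatch optimality first, since it is essentially immediate from results already in hand. Fix $\alpha\in[0,1]$ and let $\beta^\ast\le\alpha$ be the point at which the monotone hull attains its value, i.e.\ $\Delta_n^{\oplus}(\beta^\ast)=V_n(\alpha)$; inspecting the closed forms shows such a $\beta^\ast$ always exists and is attained (on $I(n,k)$ one takes $\beta^\ast=\alpha$, and on $NI(n,k)$ one takes the left endpoint $\tfrac{1}{(k+1)n+1}$, which lies in $I(n,k+1)$ and there gives $\Delta_n^{\oplus}(\beta^\ast)=(k+2)\beta^\ast=\tfrac{k+2}{(k+1)n+1}=V_n(\alpha)$). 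By the attainment fact recorded after the definition of $\Delta_n^{\oplus}$, the maximum defining $\Delta_n^{\oplus}(\beta^\ast)$ is realised by some $v^\ast\in\cV(\beta^\ast)\subseteq\cU(\beta^\ast)\subseteq\cU(\alpha)$ with $\MMS_n(v^\ast)=V_n(\alpha)$. Taking the unanimous profile $v'_i=v^\ast$ for all $i$, the definition of $\MMS_n$ forces every allocation $(B_1,\dots,B_n)$ to contain a bundle with $v^\ast(B_i)\ge\MMS_n(v^\ast)=V_n(\alpha)$, which is exactly the claimed impossibility.

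For feasibility I would argue by induction on the number of agents $n$, peeling off one agent at a time. The base case $n=1$ is immediate: $\MMS_1(v)=v(M)=1$ gives $\Delta_1^{\oplus}\equiv1$, hence $V_1\equiv1$, and the lone agent takes all of $M$ at disutility $1$. For the inductive step the plan is to identify one agent $i$ together with a bundle $A$ to hand her such that (a)~$v_i(A)\le V_n(\alpha_i)$, and (b)~after deleting $A$ the residual instance on $M'=M\setminus A$ with the remaining $n-1$ agents is again solvable by the induction hypothesis. Since $V_n$ is stated for normalised disutilities, step~(b) requires renormalising each surviving $v_j$ to $v_j/v_j(M')$ and then verifying the scaling inequality
\[
v_j(M')\cdot V_{n-1}\!\left(\frac{\max_{e\in M'}v_j(e)}{v_j(M')}\right)\ \le\ V_n(\alpha_j)\qquad\text{for every }j\neq i,
\]
so that the bundle $A_j$ returned by the recursive call satisfies $v_j(A_j)\le v_j(M')\,V_{n-1}(\cdot)\le V_n(\alpha_j)$. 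Concretely I would generate the candidate bundle by a $\MovingKnife$ sweep and use a bipartite $\Matching$ (justified through Hall's condition) to decide which agent actually receives it; both primitives run in polynomial time, and unwinding the recursion assembles the full allocation in polynomial time.

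The crux, and the step I expect to be the main obstacle, is proving that a pair $(i,A)$ meeting both (a) and (b) always exists, since the two requirements pull in opposite directions: $A$ must be cheap for its receiver $i$ yet expensive enough for every other agent $j$ that $v_j(M')$ shrinks sufficiently to validate the displayed scaling inequality. This is precisely where the domain restriction — that under $v_j$ every single object is worth at most $\alpha_j$ — must be exploited, presumably through a pigeonhole/averaging estimate over the $\MovingKnife$ pieces, and where the piecewise definition of $V_n$ forces a case analysis according to which interval $I(n,k)$ or $NI(n,k)$ each $\alpha_j$ and each rescaled argument falls into. Establishing the scaling inequality uniformly across these pieces, and deducing that the averaging bound implies Hall's condition for the $\Matching$ step, is the technical heart of the argument; the surrounding bookkeeping (attainment of the maxima, and the degenerate cases such as $v_j(M')=0$, where $v_j(A_j)=0$ trivially suffices) is routine.
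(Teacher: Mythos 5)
Your optimality argument is correct and matches the paper's: the paper also certifies tightness with a unanimous profile built from a worst-case instance attaining $\Delta_n^{\oplus}(\beta^{*})$ at the point $\beta^{*}\le\alpha$ where the monotone hull is attained, and your identification of $\beta^{*}$ (namely $\beta^{*}=\alpha$ on $I(n,k)$, and the left endpoint $\frac{1}{(k+1)n+1}\in I(n,k+1)$ on $NI(n,k)$, where $\Delta_n^{\oplus}(\beta^{*})=(k+2)\beta^{*}=\frac{k+2}{(k+1)n+1}$) is exactly right, and in fact more explicit than the paper's one-line reference to its Section~\ref{sec::upper-bound-minmaxshare} instances.

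The feasibility half, however, is a plan rather than a proof, and the part you defer is the entire content of the theorem. You correctly set up the induction on $n$, the peeling of one agent via a moving knife, the renormalisation, and the scaling inequality $v_j(M')\cdot V_{n-1}\bigl(\max_{e\in M'}v_j(e)/v_j(M')\bigr)\le V_n(\alpha_j)$, but you give no argument that a pair $(i,A)$ satisfying it exists, deferring this to an unspecified ``pigeonhole/averaging estimate'' plus a Hall-type matching. The paper closes exactly this gap with two concrete lemmas, neither of which appears in your proposal. First (Lemma~\ref{lem:Ci}): after the reduction to ordered instances, every agent's knife set is a prefix of one common object order, the bundle $A$ handed out is the prefix of an agent who crossed her threshold last, and for every remaining agent $j$ one proves the quantitative bound $v_j(A)\ge\frac{1-V_n(\alpha_j)}{n-1}$. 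The mechanism is: since $V_n(\alpha_j)\ge(k+1)\alpha_j$ and single objects cost at most $\alpha_j$, agent $j$'s own threshold prefix has at least $k+1$ objects, so the next object costs at most $v_j(A)/(k+1)$, giving $v_j(A)>\frac{k+1}{k+2}V_n(\alpha_j)$, which is then compared to $\frac{1-V_n(\alpha_j)}{n-1}$ separately on $I(n,k)$ and $NI(n,k)$. Second (Lemma~\ref{lem:n2n-1}): the rescaled parameter $\alpha_j/\bigl(1-\frac{1-V_n(\alpha_j)}{n-1}\bigr)$ lands in the same-indexed interval $I(n-1,k)$ resp.\ $NI(n-1,k)$, which makes your scaling inequality hold with equality at the critical value of $v_j(A)$; combined with the observation that $(1-C)\,V_{n-1}(\alpha/(1-C))$ is non-increasing in $C$, the induction then closes. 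Without these three ingredients (the $\frac{1-V_n}{n-1}$ bound, interval preservation under rescaling, and the monotonicity in $C$), the scaling inequality is unverified and the induction does not go through. Note also that the matching/Hall step you anticipate is unnecessary: after the ordered-instance reduction all knife sets are nested prefixes, so the recipient is simply any agent who crossed her threshold in the final round.
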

			
			As for $\Delta_n^{\oplus}(\cdot)$ in Theorem 1 the two key features of this guarantee are: its
			computation is elementary and it does not depend on
			the number of bads to allocate.
			As far as we know, no other similarly simple guarantee for the allocation of bads has been
			identified.
			
			%\medskip

			\begin{remark}\label{remark:heter}
				By Theorem \ref{thm:heterogeneous}, $V_{n}(\alpha)$ is the best guarantee for disutilities in $\cU(\alpha)$,  and thus we get the tight counterpart result of \cite{hill1987partitioning} for bads. 
				However, it may not be the best in the model of \citet{DBLP:journals/tcs/GourvesMT15}, i.e., for disutilities in $\cV(\alpha)$.
				For example, when $n=2$, we can show that $\Delta_2^{\oplus}(\max_{e\in M}v_i(e))$ is a tighter guarantee in the later model.
				% Remark \ref{rem:n=2} can be generalized to two different valuations $v_1$ and $v_2$.
				% Due to the equivalence between goods and bads, we use bads to discuss the reason.  
				Given two disutility functions $v_1$ and $v_2$, without loss of generality, suppose $\Delta_2^{\oplus}(\max_{e} v_1(e)) \le \Delta_2^{\oplus}(\max_{e} v_2(e))$.
				Then we find the MinMax partition of $v_1$ 
				%Corollary \ref{coro:goods:2}  using valuation $v_1$ 
				so that the disutilities of both bundles are no greater than $\Delta_2^{\oplus}(\max_{e} v_1(e))$ to agent 1. 
				We ask agent 2 to choose a better bundle whose disutility must be no greater than $\frac{1}{2}$ and thus no greater than $\Delta_2^{\oplus}(\max_{e} v_2(e))$ to agent 2.
				% by Remark \ref{rem:n=2}, we know that $\Delta_2^{\oplus}(\max_{e\in M}v_i(e))$ is a tighter guarantee, but 
				It is still open whether $\Delta_n^{\oplus}(\max_{e\in M}v_i(e))$ is a guarantee or not when $n\ge 3$ in \citet{DBLP:journals/tcs/GourvesMT15}'s model, which is an interesting future research direction. 
			\end{remark}

			\subsection{Proof of Theorem \ref{thm:heterogeneous}}
			% \yu{lower bound of Theorem 2?}
			% \ak{The tightness comes from theorem 1?}
			
			To show that one can compute an allocation satisfying the required bound in Theorem \ref{thm:heterogeneous}, we derive a variation of the moving-knife algorithm. 
			When the objects are goods and divisible, \citet{dubins1961cut} proved that such an algorithm (also known as Dubins-Spanier moving knife algorithm) gives the optimal worst-case bound, i.e., every agent gets value for at least $\frac{1}{n}$.
			\citet{DBLP:conf/wine/MarkakisP11} further proved that a variation of this algorithm also guarantees the optimal worst-case bound for indivisible goods.
			In a nutshell, towards proving Theorem \ref{thm:heterogeneous}, we first use the reduction proved in \citep{DBLP:journals/aamas/BouveretL16,DBLP:conf/sigecom/HuangL21} to restrict our attention to the ordered instances when agents have the same ranking over all objects,
			which significantly simplifies our analysis.
			Then we show that using $V_n(\cdot)$ to set the parameters in the moving-knife algorithm
			always returns an allocation ensuring the bound in Theorem \ref{thm:heterogeneous}.
			% Moreover, we also use the reduction proved in \citep{DBLP:journals/aamas/BouveretL16,DBLP:conf/sigecom/HuangL21} to simplify our analysis so that witho
			
			The following lemma says that it suffices to only focus on the ordered instances. 
			
			\begin{lemma}[\citep{DBLP:journals/aamas/BouveretL16,DBLP:conf/sigecom/HuangL21}]
				Suppose there is an algorithm that takes any ordered instance as input, runs in $T(n, m)$ time and returns an allocation where each agent $ i $'s disutility is at most $ V_n(\alpha_i)$.  
				Then, we have an algorithm that takes any instance as input, runs in $T(n, m) + O(nm \log m)$ time and returns an allocation with the same disutility guarantee. 
				%where each agent's disutility is at most $ V_n(\alpha_i)$. 
			\end{lemma}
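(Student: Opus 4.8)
The plan is to establish the standard reduction from arbitrary instances to \emph{ordered} instances, in which all agents share the same ranking of the objects, and then to show that any allocation for the ordered instance can be translated back to the original objects without increasing any agent's disutility. First I would, for each agent $i$, sort the object disutilities into non-increasing order and define the ordered valuation $\hat v_i$ by letting $\hat v_i(j)$ be the $j$-th largest value of $v_i$; computing all $n$ such sortings costs $O(nm\log m)$. Since sorting merely relabels objects, it preserves the multiset of single-object disutilities, so $\max_{e}\hat v_i(e)=\max_e v_i(e)=\alpha_i$ and hence $\hat v_i\in\cU(\alpha_i)$ with the same parameter $\alpha_i$. The profile $(\hat v_1,\ldots,\hat v_n)$ is therefore a legitimate ordered instance, and feeding it to the assumed algorithm yields, in time $T(n,m)$, an allocation $\hat{\mathbf A}=(\hat A_1,\ldots,\hat A_n)$ with $\hat v_i(\hat A_i)\le V_n(\alpha_i)$ for every $i$.

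The heart of the argument is the \emph{domination} step: translating $\hat{\mathbf A}$ into an allocation $\mathbf A$ of the true objects with $v_i(A_i)\le \hat v_i(\hat A_i)$ for all $i$ simultaneously. I would read off from $\hat{\mathbf A}$ the picking sequence $\pi:[m]\to[n]$ defined by $\pi(j)=i$ whenever position $j\in\hat A_i$, so that agent $i$ owns exactly the positions in $\hat A_i$. Then I process the positions in decreasing order $j=m,m-1,\ldots,1$, and at position $j$ let agent $\pi(j)$ remove from the unassigned pool the object of smallest $v_{\pi(j)}$-disutility, adding it to $A_{\pi(j)}$.

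To verify the guarantee, fix agent $i$ and write her owned positions as $j_1<\cdots<j_t$. When position $j_s$ is processed, exactly $m-j_s$ objects have already been removed, so the pool contains $j_s$ objects; since at most $j_s-1$ objects can have $v_i$-value strictly above the $j_s$-th largest value $\hat v_i(j_s)$, at least one pooled object has value at most $\hat v_i(j_s)$, whence the minimiser agent $i$ picks has disutility at most $\hat v_i(j_s)$. Summing over $s$ gives $v_i(A_i)\le\sum_{s}\hat v_i(j_s)=\hat v_i(\hat A_i)\le V_n(\alpha_i)$, as required. This order-statistics counting is the step that must be gotten exactly right, and it is the main obstacle: in particular the decreasing processing order is essential for bads, since the mirror-image order used in the goods setting would bound each pick by the wrong order statistic and yield an incomparable sum.

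Finally, the translation visits each position once, and each pick can be served from a per-agent list sorted once at the outset, so it runs in $O(nm)$ time; combined with the initial sortings the total overhead is $O(nm\log m)$, giving the claimed running time $T(n,m)+O(nm\log m)$ and the same disutility guarantee on the original instance. The routine parts (that sorting preserves $\alpha_i$ and the domain membership $\hat v_i\in\cU(\alpha_i)$) I would dispatch quickly, reserving the care for the counting inequality above.
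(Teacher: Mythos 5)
Your proof is correct, and it is essentially the standard ordered-instance (IDO) reduction from the cited works (Bouveret--Lemaitre; Huang--Lu), which the paper invokes without reproducing: sort each agent's disutilities, solve the ordered instance, then translate back via the picking sequence that processes positions $j=m,\ldots,1$ with each owner taking her least-disutility remaining object, so that the pool of size $j_s$ always contains an object of disutility at most $\hat v_i(j_s)$. Your counting argument and the $T(n,m)+O(nm\log m)$ accounting match the argument the citations supply, so there is nothing to correct.
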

			
			Our approach is similar to that in \cite{DBLP:conf/wine/MarkakisP11}, but the detailed
			proof differs.
			Our algorithm runs in recursions. In each recursion, the algorithm allocates a bundle of objects to one agent in a moving-knife fashion. 
			Each time, each of the remaining agents moves her “knife” one object towards the objects with smaller disutilities, 
			until for every agent $i$ the total disutility of the objects before her "knife" is larger than $V_n(\alpha_i)$. 
			After that, one of the last agents (denoted by agent $k$) for whom the total utility of the objects before her knife is larger than $V_n(\alpha_k)$ receives the objects except the one right before her knife. 
			If there remains only one agent who has not received a bundle, she will get all the remaining objects. 
			Otherwise, all the remaining agents enter the next recursion with their disutility functions being normalised such that for each of them the total disutility of the remaining objects is 1. 
			The formal description of our algorithm is presented in Algorithm \ref{alg:hetero}. 
			
			\begin{algorithm}[htbp]
				\caption{Algorithm for heterogeneous disutilities}
				\label{alg:hetero}
				\begin{algorithmic}[1]
					\REQUIRE An ordered instance with agents $N$, objects $M$ and disutility functions $\{v_i\}_{i \in N}$. 
					\ENSURE An allocation $\mathbf{A} = \{A_1, \ldots, A_n\}$ with $v_i(A_i) \le V_n(\alpha _ i )$ for every $i \in N$.
					\STATE Initialize $ S _ i  = \emptyset$ for every $ i \in N$.
					\WHILE{there exists an agent $j$ with $v _ j  ( S _ j ) \leq V _ n (\alpha _ j )$}
					\FOR {every $i \in N$}
					\STATE $ S _ i  \leftarrow S _ i  \cup\{\text{the object  in $M \setminus S_i$ with the largest disutility for agent $i$ (tie breaks arbitrarily)}\}$. 
					\ENDFOR
					\ENDWHILE
					\STATE Pick the agent $k \in N$ with $ v_ k ( S _ k \setminus \{ \widetilde{e} \}) \leq V _ n (\alpha _ k )$ where $\widetilde{e} $ is the last object that $k$ added into $S_k$ (tie breaks arbitrarily). 
					\STATE $A_k \leftarrow S_k \setminus \{ \widetilde{e} \}$. 
					\IF{ $ | {N} | = 2 $}
					\STATE Allocate $M \setminus A_k$ to the remaining agent.
					\ELSE
					\STATE Construct a new disutility function $v'_i$ for every $i \in N \setminus \{k\}$ by setting $v'_i(e) = \frac{v_i(e)}{1-v_i(A_k)}$ for every $e \in M \setminus A_k$. 
					\STATE Run Algorithm \ref{alg:hetero}($N \setminus \{k\}$, $M \setminus A_k$, $\{v'_i\}_{i \in N \setminus \{k\}}$). 
					\ENDIF
				\end{algorithmic}
			\end{algorithm}
			
			Then we are going to prove Theorem \ref{thm:heterogeneous}.
			Without loss of generality, let $1, \ldots, n$ be the order in which agents receive bundles in Algorithm \ref{alg:hetero}. 
			Denote $C_i = v_i(A_1)$ for every $N \setminus \{1\}$, the following lemma gives a lower bound of $C_i$. 
			
			\begin{lemma} \label{lem:Ci}
				For any agent $i \in N \setminus \{1\}$ with $\alpha_i \in NI(n,k) \cup I(n,k)$ for some $k \ge 0$,
				we have 
				\[
				C _ i  \ge  \frac{1 - V _ n (\alpha _ i )}{ n- 1 }.
				\]
			\end{lemma}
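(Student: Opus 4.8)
The plan is to lean entirely on the reduction to ordered instances (the preceding lemma): because all agents share the same ranking of the objects, at every stage of the while loop each agent's set $S_i$ is the same prefix $\{e_1,\dots,e_t\}$ of the common decreasing order $v_i(e_1)\ge v_i(e_2)\ge\cdots$. Let $t^*$ be the number of iterations the loop performs in the first recursion. In the relabelling, agent $1$ is the agent picked in that recursion, so $A_1=\{e_1,\dots,e_{t^*-1}\}$ is the set of the $t^*-1$ largest objects and the last-added object is $\widetilde e=e_{t^*}$. The loop terminates only once \emph{every} agent strictly exceeds her threshold, so for the fixed agent $i\in N\setminus\{1\}$ I obtain the one clean inequality $v_i(A_1\cup\{e_{t^*}\})>V_n(\alpha_i)$; equivalently, writing $x=v_i(e_{t^*})$, one has $C_i+x>V_n(\alpha_i)$. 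This is the only property of the algorithm I will use.

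Next I would extract two lower bounds on $C_i$. The first is a counting bound on $|A_1|$: since $v_i(e)\le\alpha_i$ for every object $e$, the inequality $v_i(A_1\cup\{e_{t^*}\})>V_n(\alpha_i)$ gives $|A_1\cup\{e_{t^*}\}|\,\alpha_i>V_n(\alpha_i)$. On both $NI(n,k)$ and $I(n,k)$ one checks $V_n(\alpha_i)\ge(k+1)\alpha_i$ — with equality on $I(n,k)$, and strictly on $NI(n,k)$ using the right endpoint $\alpha_i<\tfrac{k+2}{(k+1)((k+1)n+1)}$ together with $V_n(\alpha_i)=\tfrac{k+2}{(k+1)n+1}$. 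Hence $|A_1\cup\{e_{t^*}\}|>k+1$, so $|A_1|\ge k+1$. As the objects are in decreasing order, every object in $A_1$ has disutility at least $v_i(e_{t^*})=x$, which yields $C_i\ge(k+1)x$. The second bound is just the threshold inequality $C_i>V_n(\alpha_i)-x$.

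I would then combine them. For the actual value of $x$ we have $C_i\ge\max\{(k+1)x,\,V_n(\alpha_i)-x\}$, and the right-hand side, as a function of $x\ge0$, is minimised where the two linear pieces meet, namely at $x=\tfrac{V_n(\alpha_i)}{k+2}$, giving the $x$-free bound $C_i\ge\tfrac{k+1}{k+2}V_n(\alpha_i)$. It then remains to verify $\tfrac{k+1}{k+2}V_n(\alpha_i)\ge\tfrac{1-V_n(\alpha_i)}{n-1}$, which rearranges to the single inequality $V_n(\alpha_i)\ge\tfrac{k+2}{(k+1)n+1}$. On $NI(n,k)$ this is an equality since $V_n(\alpha_i)=\tfrac{k+2}{(k+1)n+1}$; on $I(n,k)$ it follows from $V_n(\alpha_i)=(k+1)\alpha_i$ and the left endpoint $\alpha_i\ge\tfrac{k+2}{(k+1)((k+1)n+1)}$. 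This closes the argument.

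The endpoint inequalities for $V_n$ are routine algebra, so I expect the only substantive step to be the combination of the counting bound $C_i\ge(k+1)x$ with the threshold bound $C_i>V_n(\alpha_i)-x$: the point is that whatever the marginal object $e_{t^*}$ is worth to agent $i$, the bundle $A_1$ must already be heavy for her — either because it contains at least $k+1$ objects each at least as costly as $e_{t^*}$, or because $e_{t^*}$ is cheap and the threshold was nearly attained without it. A minor subtlety worth recording is the degenerate case $A_1=\varnothing$; the counting bound $|A_1|\ge k+1\ge1$ rules it out, so the prefix structure and the object $e_{t^*}$ are genuinely present.
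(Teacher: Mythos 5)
Your proof is correct and takes essentially the same approach as the paper's: both exploit the ordered-prefix structure and the loop-termination condition to reach the same intermediate bound $C_i \ge \frac{k+1}{k+2}\,V_n(\alpha_i)$, and then finish with the identical two-case verification that this quantity is at least $\frac{1-V_n(\alpha_i)}{n-1}$, equivalently $V_n(\alpha_i) \ge \frac{k+2}{(k+1)n+1}$. The only difference is bookkeeping: the paper anchors on agent $i$'s own threshold-crossing index $q$ and bounds the marginal object by averaging over the first $q$ objects ($v_i(e_{q+1}) \le C_i/(k+1)$), while you anchor on the global termination index $t^*$ and obtain the same bound from $|A_1| \ge k+1$ together with monotonicity of the common order.
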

			\begin{proof}
				Denote by $q$ the index such that $\sum_{e=1}^q v_i(e) \le V_n(\alpha_i)$ and $\sum_{e=1}^{q+1} v_i(e) > V_n(\alpha_i)$, whose existence is guaranteed since $v_i(M) > V_n(\alpha_i)$.
				Since $V_n(\alpha_i) \ge (k+1)\alpha_i$ (this can be easily verified from the definition of $V_n(\alpha)$ and can also be seen from Fig. \ref{fig:result1:hete}) and $v_i(e) \le \alpha_i$ for every $e \in M$, $q \ge k+1$. 
				Otherwise, $\sum_{e=1}^{q+1} v_i(e) \le (k+1)\alpha_i \le V_n(\alpha_i)$, which contradicts the definition of $q$. 
				According to Algorithm \ref{alg:hetero}, $C_i \ge \sum_{e=1}^q v_i(e)$. 
				Since only ordered instances are considered, $v_i(\{q+1\}) \le v_i(\{q\}) \le \frac{C_i}{k+1}$, which gives
				\[
				C_i + \frac{C_i}{k+1} \ge \sum_{e=1}^{q+1} v_i(e) > V_n(\alpha_i).
				\]
				Therefore, $C_i > \frac{k+1}{k+2} \cdot V_n(\alpha_i)$. 
				We consider the following two cases regarding the ranges of $\alpha_i$. 
				
				\medskip
				
				\textbf{Case 1}: $\alpha_i \in I(n,k)$. In this case, $\frac{k+2}{(k+1)((k+1) n + 1)} \le \alpha_i \le \frac{1}{kn+1}$ and $V_n(\alpha_i) = (k+1)\alpha_i$.
				Then, %we have
				\[
				C_i > \frac{k+1}{k+2}\cdot V_n(\alpha) \ge \frac{1 - V_n(\alpha)}{n-1},
				\]
				where the last inequality holds since $\alpha_i \ge \frac{k+2}{(k+1)((k+1)n + 1)}$. 
				
				\medskip
				
				\textbf{Case 2}: $\alpha_i \in NI(n,k)$. In this case, $V_n(\alpha_i) = \frac{k+2}{(k+1)n + 1}$, which gives
				\[
				C_i > \frac{k+1}{k+2}\cdot V_n(\alpha) = \frac{1 - V_n(\alpha)}{n-1},
				\]
				which completes the proof.
			\end{proof}
			
			Interestingly, the following lemma shows the connection between the ranges of $\alpha_i$ and $\frac{\alpha_i}{1 - \frac{1-V_n(\alpha_i)}{n-1}}$. 
			\begin{lemma} \label{lem:n2n-1}
				For any $\alpha_i \in NI(n,k) \cup I (n , k ) $ for some $k \ge 0$,
				we have
				$$
				\frac{\alpha_i}{1 - \frac{1-V_n(\alpha_i)}{n-1}} \in \left\{\begin{array}{ll} 
					I(n - 1, k), & \text { if } \alpha _ i \in I(n,k) \\
					NI(n-1, k), & \text { if } \alpha _ i \in NI(n,k)
				\end{array}\right.
				$$
			\end{lemma}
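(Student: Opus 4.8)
The plan is to reduce the quantity to a single rational function of $\alpha_i$ and then, in each of the two cases, verify that this function maps the source interval exactly onto the target interval, using monotonicity together with the images of the endpoints. Clearing the inner fraction, observe that
$$
\beta(\alpha_i) := \frac{\alpha_i}{1 - \frac{1-V_n(\alpha_i)}{n-1}} = \frac{(n-1)\alpha_i}{n-2+V_n(\alpha_i)},
$$
so the lemma amounts to substituting the two-piece formula for $V_n$ and tracking endpoints.

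First I would treat the case $\alpha_i \in I(n,k)$, where $V_n(\alpha_i) = (k+1)\alpha_i$ and the expression becomes $\frac{(n-1)\alpha_i}{n-2+(k+1)\alpha_i}$. This is a M\"obius transformation in $\alpha_i$ with derivative $\frac{(n-1)(n-2)}{(n-2+(k+1)\alpha_i)^2} \ge 0$, hence weakly increasing (and constant when $n=2$), so it suffices to map the two endpoints of $I(n,k)$. Substituting the right endpoint $\alpha_i = \frac{1}{kn+1}$ and using $(n-2)(kn+1)+(k+1) = (n-1)\bigl(k(n-1)+1\bigr)$ gives $\frac{1}{k(n-1)+1}$, the right endpoint of $I(n-1,k)$; substituting the left endpoint $\alpha_i = \frac{k+2}{(k+1)((k+1)n+1)}$ and using the factorization $(k+1)n^2-(2k+1)n+k = (n-1)\bigl((k+1)n-k\bigr)$ gives $\frac{k+2}{(k+1)((k+1)n-k)}$, which is the left endpoint of $I(n-1,k)$ after rewriting $(k+1)(n-1)+1 = (k+1)n-k$. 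Continuity and monotonicity then yield $\beta\bigl(I(n,k)\bigr) = I(n-1,k)$, with the closed endpoints preserved.

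Next I would handle $\alpha_i \in NI(n,k)$, where $V_n(\alpha_i) = \frac{k+2}{(k+1)n+1}$ is constant; hence the denominator $n-2+V_n(\alpha_i)$ is constant and $\beta$ is \emph{linear} in $\alpha_i$. The same identity $(n-2)((k+1)n+1)+(k+2) = (n-1)\bigl((k+1)n-k\bigr)$ collapses the expression to $\alpha_i\cdot\frac{(k+1)n+1}{(k+1)n-k}$. The open endpoints $\frac{1}{(k+1)n+1}$ and $\frac{k+2}{(k+1)((k+1)n+1)}$ of $NI(n,k)$ then map respectively to $\frac{1}{(k+1)n-k}$ and $\frac{k+2}{(k+1)((k+1)n-k)}$, which are exactly the open endpoints of $NI(n-1,k)$; strict monotonicity preserves openness, so $\beta\bigl(NI(n,k)\bigr) = NI(n-1,k)$.

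The only real difficulty is algebraic: the whole argument hinges on spotting the factorization $(k+1)n^2-(2k+1)n+k = (n-1)\bigl((k+1)n-k\bigr)$ and its companion $(n-2)((k+1)n+1)+(k+2) = (n-1)\bigl((k+1)n-k\bigr)$, since it is the cancellation of this spurious $(n-1)$ factor that aligns the endpoints; once these identities are in place, both cases are routine substitutions and no idea beyond monotonicity of a one-dimensional map is required. I would also check the degenerate case $n=2$ directly: there the $I$-branch collapses to the single point $\frac{1}{k+1}$, which is the degenerate closed interval $I(1,k)$, while the $NI$-branch remains a genuine open interval, both consistent with the claim.
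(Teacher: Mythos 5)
Your proposal is correct and follows essentially the same route as the paper's proof: rewrite the expression as $\frac{(n-1)\alpha_i}{n-2+V_n(\alpha_i)}$, split into the $I(n,k)$ and $NI(n,k)$ cases, substitute the piecewise formula for $V_n$, and pin down the image by plugging in the interval endpoints (the paper uses the same endpoint substitutions and the same algebraic simplifications, relying implicitly on the monotonicity you make explicit). Your added derivative computation, the factorization identities, and the $n=2$ degeneracy check are just more detailed bookkeeping of the same argument.
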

			\begin{proof}
				We consider the following two cases regarding the ranges of $\alpha_i$.
				
				\medskip

				\textbf{Case 1}: $\alpha_i \in I(n,k)$. In this case, $\frac{k+2}{(k+1)((k+1) n + 1)} \le \alpha_i \le \frac{1}{kn+1}$ and $V_n(\alpha_i) = (k+1)\alpha_i$.
				Then, we have
				\[
				\frac{\alpha_i}{1 - \frac{1-V_n(\alpha_i)}{n-1}} = \frac{(n-1)\alpha_i}{n-2 + (k+1)\alpha_i} \leq \frac{n-1}{(n-2)(kn+1) + k+1} = \frac{1}{k(n-1)+1},
				\]
				where the inequality is because $\alpha_i \le \frac{1}{kn+1}$. 
				Besides, 
				\begin{align*}
					\frac{\alpha_i}{1 - \frac{1-V_n(\alpha_i)}{n-1}} = \frac{(n-1)\alpha_i}{n-2 + (k+1)\alpha_i} &\geq \frac{(k+2)(n-1)}{ (k+1)((n-2)(kn+n+1) + k + 2)} \\
					&= \frac{k+2}{(k+1)((k+1)(n-1) + 1)},
				\end{align*}
				where the inequality is because $\alpha_i \ge \frac{k+2}{(k+1)((k+1) n + 1)}$. 
				
				\medskip

				\textbf{Case 2}: $\alpha_i \in NI(n,k)$. In this case, $\frac{1}{(k+1)n + 1} < \alpha_i < \frac{k+2}{(k+1)((k+1)n + 1)}$ and $V_n(\alpha_i) = \frac{k+2}{(k+1)n + 1}$. 
				Then, we have 
				\[
				\frac{\alpha_i}{1-\frac{1-V_n(\alpha_i)}{n-1}} = \frac{((k+1)n + 1)\alpha_i}{(k+1)(n-1)+1} < \frac{k+2}{(k+1)((k+1)(n-1) + 1)},
				\]
				where the inequality is because $\alpha_i < \frac{k+2}{(k+1)((k+1)n + 1)}$. 
				Besides,
				\[
				\frac{\alpha_i}{1-\frac{1-V_n(\alpha_i)}{n-1}} = \frac{((k+1)n + 1)\alpha_i}{(k+1)(n-1)+1} > \frac{1}{(k+1)(n-1)+1}, 
				\]
				where the inequality is because $\alpha_i > \frac{1}{(k+1)n + 1}$. 
			\end{proof}

			\medskip

			\noindent \begin{proof}[Proof of Theorem \ref{thm:heterogeneous}]
				We prove Theorem \ref{thm:heterogeneous} by mathematical induction.
				When $n = 2$, it is easy to see the correctness of Theorem \ref{thm:heterogeneous} from Lemma \ref{lem:Ci} 
				since $v_1(A_1) \le V_2(\alpha_1)$ and $v_2(A_2) = 1 - v_2(A_1) \le 1 - (1 - V_2(\alpha_2)) = V_2(\alpha_2)$,
				We assume as our induction hypothesis that Theorem \ref{thm:heterogeneous} holds for $n-1$.  
				Then we aim to prove the correctness for $n$. 
				
				From Algorithm \ref{alg:hetero}, $v_1(A_1) \le V_{n}(\alpha_1)$ clearly holds for agent 1. 
				For any other agent $i \in N\setminus \{1\}$, denote $\widetilde{\alpha}_i = \max_{e \in M \setminus A_1}v'_i(e)$. 
				We know from Algorithm \ref{alg:hetero} that $\widetilde{\alpha}_i \leq \frac{\alpha_i}{1 - C_i}$ and from the induction hypothesis that $v'_i(A_i) \le V_{n-1}(\widetilde{\alpha}_i)$, which together give
				\[
				v_i(A_i) = (1-C_i)v'_i(A_i) \le (1-C_i)V_{n-1}(\widetilde{\alpha}_i) \le (1-C_i)V_{n-1}(\frac{\alpha_i}{1 - C_i}),
				\]
				where the last inequality holds by recalling that $V_{n-1}(\widetilde{\alpha}_i)$ is an non-decreasing function of $\widetilde{\alpha}_i$. 
				Therefore, it remains to show 
				\[
				(1-C_i)V_{n-1}(\frac{\alpha_i}{1-C_i}) \leq V_{n}(\alpha_i).
				\]
				
				Note that $(1-C_i)V_{n-1}(\frac{\alpha_i}{1- C_i})$ is an non-increasing function of $C_i$. 
				This is because when $\frac{\alpha_i}{1- C_i} \in I(n-1, k)$ for some $k$, 
				$(1-C_i)V_{n-1}(\frac{\alpha_i}{1- C_i}) = (1-C_i)(k+1)\frac{\alpha_i}{1- C_i} = (k+1)\alpha_i$, which is a constant with respect to $C_i$;
				when $\frac{\alpha_i}{ 1- C_i} \in NI(n-1, k)$ for some $k$, 
				$(1-C_i)V_{n-1}(\frac{\alpha_i}{1- C_i}) = (1-C_i)\frac{k+2}{(k+1)(n-1) + 1}$,
				a decreasing function of $C_i$. 
				It follows that
				\[
				(1-C_i)V_{n-1}(\frac{\alpha_i}{1-C_i}) \le (1-\frac{1-V_{n}(\alpha_i)}{n-1})V_{n-1}(\frac{\alpha_i}{1-\frac{1-V_{n}(\alpha_i)}{n-1}}) = V_n(\alpha_i),
				\]
				where the inequality is due to $C_i \geq \frac{1-V_{n}(\alpha_i)}{n-1}$ (according to Lemma \ref{lem:Ci}),
				and the equality can be verified by considering the following two cases regarding the ranges of $\alpha_i$,
				
				\medskip
				
				\textbf{Case 1}: $\alpha_i \in I(n,k)$. In this case, Lemma \ref{lem:n2n-1} gives $\frac{\alpha_i}{1-\frac{1-V_{n}(\alpha_i)}{n-1}} \in I(n-1, k)$. 
				Thus, we have
				\begin{align*}
					(1-\frac{1-V_{n}(\alpha_i)}{n-1})V_{n-1}(\frac{\alpha_i}{1-\frac{1-V_{n}(\alpha_i)}{n-1}}) &= (1-\frac{1-V_{n}(\alpha_i)}{n-1})\cdot (k+1)\frac{\alpha_i}{1-\frac{1-V_{n}(\alpha_i)}{n-1}} \\
					&= (k+1)\alpha_i = V_n(\alpha_i).
				\end{align*}

				\medskip

				\textbf{Case 2}: $\alpha_i \in NI(n,k)$. In this case, $\frac{\alpha_i}{1-\frac{1-V_{n}(\alpha_i)}{n-1}} \in NI(n-1, k)$.
				Thus, we have
				\begin{align*}
					(1-\frac{1-V_{n}(\alpha_i)}{n-1})V_{n-1}(\frac{\alpha_i}{1-\frac{1-V_{n}(\alpha_i)}{n-1}}) &= (1-\frac{1-\frac{k+2}{(k+1)n + 1}}{n-1})\cdot \frac{k+2}{(k+1)(n-1) + 1} \\
					&= \frac{k+2}{(k+1)n+1} = V_n(\alpha_i). 
				\end{align*}
				Therefore, we complete the proof of Theorem \ref{thm:heterogeneous}.
			\end{proof}
			
			The instances
			provided in Section~\ref{sec::upper-bound-minmaxshare} and Appendix \ref{app:proofs:home}
			show the tightness of
			Theorem~\ref{thm:heterogeneous}.
			
			\section{Numerical Experiments}
			
			\label{sec:experiments}
			To demonstrate that $\Delta _{n}^{\oplus}(\alpha; m)$ can serve as a good alternative of MinMaxShare, we first evaluate the worst-case ratio of $\Delta _{n}^{\oplus}(\alpha; m)$ and $\Delta _{n}^{\circleddash}(\alpha; m)$ (recall that $\Delta _{n}^{\circleddash}(\alpha; m)$ is the best-case MinMaxShare over all disutilities in $\mathcal{V}(\alpha; m)$). 
			Denote by $r_n(\alpha; m) = \frac{\Delta _{n}^{\oplus}(\alpha; m)}{\Delta _{n}^{\circleddash}(\alpha; m)}$. 
			It is clear that $r_n(\alpha; m)$ is no smaller than the ratio between $\Delta _{n}^{\oplus}(\alpha; m)$ and the real MinMaxShare, and we have illustrated $r_n(\alpha; \infty)$ in Fig. \ref{fig:ratio:theoretical} for $n = 2, 10, 100$.
			As we can see, although the worst-case ratio can be close to 2, it only happens for sufficiently large $n$ and a small range of values of $\alpha$.
			For any $n$ and most values of $\alpha$, the ratio is better than $\frac{4}{3}$ and $\frac{11}{9}$, which are two fractions of the
			MinMaxShare that are known to be achievable. 
			% From the Fig. we can also see that the ratio is at most $2$, and when $n$ is small, the ratio is closer to 1. 
			% Moreover, for most $\alpha$ regardless of the value of $n$, $\Delta_n^{\oplus}(\alpha; m)$ outperforms those known fractions of MinMxShare. 
			Actually, it is not hard to verify that $r_n(\alpha; m) 
			\le \frac{2n}{n+1} < 2$ for all $\alpha$, and $r_n(\alpha; m) \le \frac{4}{3}$ for all $\alpha$ outside of $(\frac{4}{9n}, \frac{3}{2n+3})$; we provide simple proofs in the appendix. 
			Note that $\frac{3}{2n+3} - \frac{4}{9n} < \frac{7}{6n}$.
			
			\begin{claim}
				\label{claim:r}
				For any $n \ge 2$, $\alpha \in (0, 1]$ and $m \ge \lceil \frac{1}{\alpha} \rceil$, $r_n(\alpha; m) \le \frac{2n}{n+1}$. 
			\end{claim}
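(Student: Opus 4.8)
The plan is to decouple the numerator and denominator of $r_n(\alpha; m) = \Delta_n^{\oplus}(\alpha; m) / \Delta_n^{\circleddash}(\alpha; m)$ by bounding each one separately over $m$. Since both quantities are positive, for every admissible $m$ we have
\[
r_n(\alpha; m) \;\le\; \frac{\max_{m'}\Delta_n^{\oplus}(\alpha; m')}{\min_{m''}\Delta_n^{\circleddash}(\alpha; m'')} \;=\; \frac{\Delta_n^{\oplus}(\alpha)}{\min_{m''}\Delta_n^{\circleddash}(\alpha; m'')},
\]
where the equality invokes Corollary~\ref{coro:homogeneous}. Thus it suffices to lower-bound $\Delta_n^{\circleddash}(\alpha; m)$ uniformly in $m$ and to upper-bound $\Delta_n^{\oplus}(\alpha)$; this avoids any need to control both at the same $m$, which is the only potential source of friction.

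First I would establish the uniform lower bound $\Delta_n^{\circleddash}(\alpha; m) \ge \max\{\alpha, 1/n\}$. When $\alpha > 1/n$, Lemma~\ref{lem:bestMMS_m} gives $\Delta_n^{\circleddash}(\alpha; m) = \alpha$ for all $m$. When $\alpha \le 1/n$, the only branch that could a priori fall below $1/n$ is $\frac{1}{(k+1)n} < \alpha < \frac{1}{kn}$ with $m \le kn + n - 1$, where $\Delta_n^{\circleddash}(\alpha; m) = k\alpha + \frac{1 - kn\alpha}{m - kn}$; since $1 - kn\alpha > 0$, this is decreasing in $m$ and minimised at $m = kn + n - 1$, and a one-line computation shows that minimum equals $1/n$ exactly when $k\alpha \le 1/n$, which holds because $\alpha < \frac{1}{kn}$. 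Hence $\Delta_n^{\circleddash}(\alpha; m) \ge 1/n$ throughout $\alpha \le 1/n$.

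Next I would bound $\Delta_n^{\oplus}(\alpha) \le \frac{2}{n+1}$ on $\alpha \le 1/n$. On $I(n,k)$ one has $\Delta_n^{\oplus}(\alpha) = (k+1)\alpha \le \frac{k+1}{kn+1}$, and $\frac{k+1}{kn+1} \le \frac{2}{n+1}$ reduces to $(k-1)(n-1) \ge 0$, valid for $k \ge 1$; the remaining interval $I(n,0) = (\frac{2}{n+2}, 1]$ lies inside $\alpha > 1/n$ since $\frac{2}{n+2} \ge \frac{1}{n}$ for $n \ge 2$, so it never arises here. On $D(n,k)$ the unrestricted maximum is the large-$m$ branch $\frac{k+2}{k+1}\cdot\frac{1-\alpha}{n}$ (continuous with the $I$-branch at their shared endpoint), and using $\alpha > \frac{1}{(k+1)n+1}$ this is strictly below $\frac{k+2}{(k+1)n+1}$, while $\frac{k+2}{(k+1)n+1} \le \frac{2}{n+1}$ reduces to $k(n-1) \ge 0$. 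Combining the two bounds: for $\alpha \le 1/n$, $r_n(\alpha; m) \le n\,\Delta_n^{\oplus}(\alpha) \le \frac{2n}{n+1}$; for $\alpha > 1/n$ (forcing $k=0$), either $\alpha \in I(n,0)$ with $\Delta_n^{\oplus}(\alpha) = \alpha = \Delta_n^{\circleddash}(\alpha;m)$ giving ratio $1$, or $\alpha \in D(n,0)$ with $r_n \le \frac{2(1-\alpha)}{n\alpha} \le \frac{2n}{n+1}$, the last step following from $\frac{1-\alpha}{\alpha} < n-1 < \frac{n^2}{n+1}$.

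The lone exception $n = 2$, $k = 1$ I would dispatch directly from the explicit formulas in Theorem~\ref{thm:homogeneous}: on $(\frac{1}{5}, \frac{1}{3}]$ the previous paragraph already gives $\Delta_2^{\circleddash} \ge \frac{1}{2}$, and one checks $\Delta_2^{\oplus}(\alpha) \le \frac{2}{3}$ on each of the pieces $\frac{3-3\alpha}{4}$, $\alpha + \frac{2-2\alpha}{5}$, and $2\alpha$, whence $r_2 \le \frac{4}{3} = \frac{2n}{n+1}$. The only step requiring genuine care is the uniform lower bound on $\Delta_n^{\circleddash}$ over all $m$; everything else is elementary algebra. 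It is worth noting that the bound is tight, attained near $\alpha = \frac{1}{n+1}$ in the $k=1$ regime (where $\Delta_n^{\circleddash}$ touches $1/n$ and $\Delta_n^{\oplus} = 2\alpha$), so no slack is lost.
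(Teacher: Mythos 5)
Your proposal is correct and takes essentially the same route as the paper: reduce to the $m$-unrestricted quantities (the paper does this by noting $r_n(\alpha;m)$ is weakly increasing in $m$ via Lemmas \ref{lem:bestMMS_m} and \ref{lem:worstMMS_property}; your decoupling of numerator and denominator is the same reduction in different packaging), then verify the bound interval-by-interval from the closed forms of $\Delta_n^{\oplus}$ and $\Delta_n^{\circleddash}$, handling $n=2,k=1$ separately. The only blemish is the phrase ``minimum equals $1/n$ exactly when $k\alpha \le 1/n$,'' which should read ``is at least $1/n$'' (equality holds only at $\alpha = \frac{1}{kn}$, excluded from that branch); the surrounding computation and conclusion are right.
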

			
			\begin{claim}
				\label{claim:large}
				$r_n(\alpha; m) > \frac{4}{3}$ only when $\alpha \in (\frac{2}{9}, \frac{1}{3})$ if $n=3$, or $\alpha \in (\frac{1}{6}, \frac{3}{11})$ if $n=4$, or $\alpha \in (\frac{4}{45}, \frac{1}{9}) \cup (\frac{2}{15}, \frac{3}{13})$ if $n=5$, or $\alpha \in (\frac{4}{9n}, \frac{3}{2n+3})$ if $n\ge 6$.
			\end{claim}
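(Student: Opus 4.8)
The plan is to first collapse the dependence on $m$, reducing everything to the single-variable function $r_n(\alpha):=r_n(\alpha;\infty)$, and then to read off the location of the set $\{r_n>\tfrac43\}$ directly from the closed forms.

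\textbf{Reduction to $m=\infty$.} The first step is to show that, for each fixed $\alpha$, the ratio $r_n(\alpha;m)$ is weakly increasing in $m$. By Lemma~\ref{lem:worstMMS_property}(2) the numerator $\Delta_n^{\oplus}(\alpha;m)$ is weakly increasing in $m$. For the denominator, Lemma~\ref{lem:bestMMS_m} gives, on the regime $\frac{1}{(k+1)n}<\alpha<\frac{1}{kn}$, the value $k\alpha+\frac{1-kn\alpha}{m-kn}$ (decreasing in $m$, since $1-kn\alpha>0$) for $m\le kn+n-1$, dropping to the constant $\frac1n$ for $m\ge kn+n$; a one-line check shows the value at $m=kn+n-1$ is at least $\frac1n$, so $\Delta_n^{\circleddash}(\alpha;m)$ is weakly decreasing in $m$. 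A weakly increasing numerator over a weakly decreasing positive denominator makes $r_n(\alpha;m)$ weakly increasing in $m$, so $r_n(\alpha;m)\le r_n(\alpha;\infty)=\frac{\Delta_n^{\oplus}(\alpha)}{\max\{\alpha,1/n\}}$, using $\Delta_n^{\circleddash}(\alpha)=\max\{\alpha,1/n\}$. It therefore suffices to bound $r_n(\alpha)$, and since the claim is stated only for $n\ge 3$, the exceptional $n=2,k=1$ case of Theorem~\ref{thm:homogeneous} never arises and the clean closed form applies throughout.

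\textbf{Shape of $r_n(\alpha)$.} Substituting Theorem~\ref{thm:homogeneous} into $r_n(\alpha)$, on each $I(n,k)$ with $k\ge 1$ (where $\alpha\le\frac1n$, so the denominator is $\frac1n$) I get $r_n(\alpha)=n(k+1)\alpha$, increasing; on each $D(n,k)$ with $k\ge 1$, $r_n(\alpha)=\frac{(k+2)(1-\alpha)}{k+1}$, decreasing; on $D(n,0)$ it is $2(1-\alpha)$ for $\alpha\le\frac1n$ and $\frac{2(1-\alpha)}{n\alpha}$ for $\alpha>\frac1n$ (both decreasing, matching at $\alpha=\frac1n$); and on $I(n,0)$ it is identically $1$. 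Hence $r_n$ is piecewise monotone, with local maxima exactly at the boundaries $\alpha=\frac{1}{kn+1}$ between $I(n,k)$ and $D(n,k-1)$, where $r_n=p_k:=\frac{(k+1)n}{kn+1}$, and local minima at the boundaries $\alpha=\frac{k+2}{n(k+1)^2+k+2}$ between $D(n,k)$ and $I(n,k)$. A direct computation gives $p_k>\frac43\iff k<3-\frac4n$, so only $k=1$ clears $\frac43$ when $n\in\{3,4\}$ and only $k\in\{1,2\}$ when $n\ge 5$ (while $p_3<\frac43$ always); the single relevant valley value is $q_1=\frac{6n}{4n+3}$, and $q_1>\frac43\iff n>6$.

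\textbf{Assembling the intervals.} Finally I would locate $\{r_n>\frac43\}$ by solving $r_n(\alpha)=\frac43$ on each monotone piece adjacent to an exceeding peak and checking the solution lands in the correct sub-interval. The upper boundary always comes from the $\alpha>\frac1n$ branch of $D(n,0)$, where $\frac{2(1-\alpha)}{n\alpha}=\frac43$ gives $\alpha=\frac{3}{2n+3}$; the lower boundary for $n\ge 5$ comes from $I(n,2)$, where $3n\alpha=\frac43$ gives $\alpha=\frac{4}{9n}$; for $n\in\{3,4\}$ the region stops already on $I(n,1)$, where $2n\alpha=\frac43$ gives $\alpha=\frac{2}{3n}$. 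When $n\ge 7$ (and borderline $n=6$, where $q_1=\frac43$ at the single point $\frac{3}{4n+3}$) the valley between $p_1$ and $p_2$ stays at or above $\frac43$, so the two exceeding peaks merge into one interval $\left(\frac{4}{9n},\frac{3}{2n+3}\right)$; when $n=5$ the valley dips below $\frac43$ and the set splits as $\left(\frac{4}{9n},\frac19\right)\cup\left(\frac{2}{3n},\frac{3}{2n+3}\right)=\left(\frac{4}{45},\frac19\right)\cup\left(\frac{2}{15},\frac{3}{13}\right)$, the inner endpoints $\frac19$ and $\frac{2}{3n}$ arising from the $D(n,1)$ and $I(n,1)$ crossings. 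Plugging $n=3,4$ into $\left(\frac{2}{3n},\frac{3}{2n+3}\right)$ yields the remaining two displayed intervals. The work here is essentially bookkeeping: tracking which $p_k$ clear $\frac43$, verifying each crossing point lies in its claimed $D$- or $I$-interval, and handling the valley-merging threshold at $n=6$; the one genuinely structural point is the monotonicity-in-$m$ reduction of the first step, which is what lets the whole argument run on the $m=\infty$ formulas.
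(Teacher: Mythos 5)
Your proposal is correct and takes essentially the same route as the paper: reduce to the unrestricted case via monotonicity of $r_n(\alpha;m)$ in $m$ (the paper invokes exactly this, with the numerator/denominator monotonicity from Lemma~\ref{lem:worstMMS_property} and Lemma~\ref{lem:bestMMS_m}, at the start of the proof of Claim~\ref{claim:r} and again in Claim~\ref{claim:large}), then solve $r_n(\alpha)=\frac{4}{3}$ on the piecewise-monotone closed forms coming from Theorem~\ref{thm:homogeneous} and take unions. Your peaks-and-valleys framing ($p_k=\frac{(k+1)n}{kn+1}$, valley $q_1=\frac{6n}{4n+3}$, merging threshold at $n=6$) is just a tidier organization of the paper's piece-by-piece case analysis, and all your crossing points $\bigl(\frac{3}{2n+3},\ \frac{2}{3n},\ \frac{4}{9n},\ \frac{1}{9}\bigr)$ and interval checks agree with the paper's.
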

			
			From the formula of $r_n(\alpha; m)$, as well as Fig. \ref{fig:ratio:theoretical}, we have the following observations: 
			\begin{description}
				\item[Observation 1] As $n$ increases, the worst-case ratio of $r_n(\alpha; m)$, i.e., $\max_{\alpha} r_n(\alpha; m)$, increases.
				\item[Observation 2] As $n$ increases, large values of $r_n(\alpha; m)$ happen increasingly more rarely if $\alpha$ is randomly generated from $[0,1]$.
			\end{description}
			
			Next, we conduct numerical experiments with synthetic and real-world data to illustrate the real distances between $\Delta _{n}^{\oplus }(\alpha; m)$ and the MinMaxShare of specific disutility functions, which also validate the above two observations.

			\subsection{Experiments with Synthetic Data}

			In this section, we randomly generate a number of disutility functions, and for each of them, we compute the ratio between the corresponding Hill's share and the MinMaxShare. 
			% and the following, we evaluate the ratio of $\Delta_n^{\oplus}(\alpha; m)$ and the MinMaxShare in synthetic data and real-world data, respectively. 
			In particular, 
			%we establish a program which takes $n$ and $m$ as input to generate the synthetic data and compute the ratio. 
			%Specifically, 
			for each given $n$ and $m$, we randomly generate 100 instances; for each instance, we randomly generate $m-1$ numbers in $[0, 1]$. 
			These $m-1$ numbers separate the interval $[0, 1]$ into $m$ contiguous segments, and the lengths of these segments are used as the disutilities of the $m$ objects. 
			% In this way, the distribution of the MinMaxShares of the instances has a relatively large variance. 
			%After generating the values of the items, we find 
			Then we compute the $\Delta_n^{\oplus}(\alpha; m)$ value using the maximum of these values and 
			% find the maximum of these values (i.e., $\alpha$) and the $\Delta_n^{\oplus}(\alpha; m)$ value, and
			the MinMaxShare.
			For each instance, we record the ratio of these two quantities.
			% by enumerating all possible allocations of the $m$ items, and calculate the ratio of $\Delta_n^{\oplus}(\alpha; m)$ and the MinMaxShare. 
			
			The results are summarized in Fig. \ref{fig:ratio:random}. 
			We slice the ratios into small intervals, each of which has a length of
			0.1, and count the number of instances falling into each interval for each setting. 
			The figure validates the previous two observations:
			when $n=2$ and $3$, the largest ratio can only reach interval $[1.3, 1.4)$ and $[1.4, 1.5)$, but when $n\ge 4$, it reaches $[1.5, 1.6)$;
			however, looking at the number of instances, for larger $n$, fewer and fewer instances fall into these large intervals, and instead, the number of instances in $[1.0,1.1)$ significantly dominates the other intervals. 
			% Note that when $n=6$ and $7$, the worst-case ratio is greater than $1.7$, but none of these random instances 
			Specifically, when $n=6$ and $7$, $[1.0,1.1)$ contains over $80\%$ of all random instances, and none of them reaches a ratio beyond 1.6,
			while the worst-case ratio can be greater than 1.7.
			
			\begin{figure}
				\centering
				\subfigure[$n=2$, $m=8, 9, 10$]{
					\includegraphics[width=0.45\linewidth]{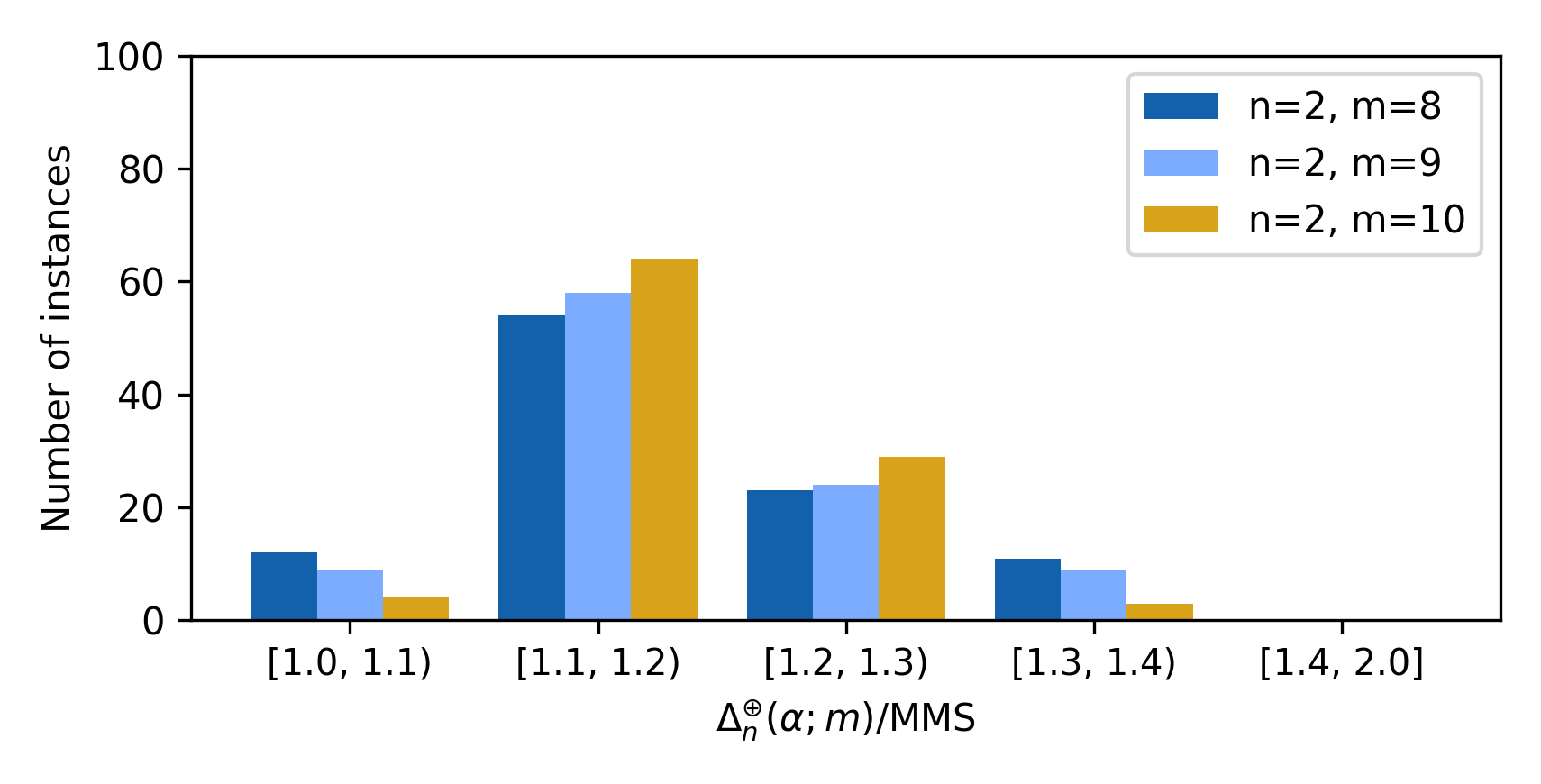}
				}
				\quad
				\subfigure[$n=3$, $m=8, 9, 10$]{
					\includegraphics[width=0.45\linewidth]{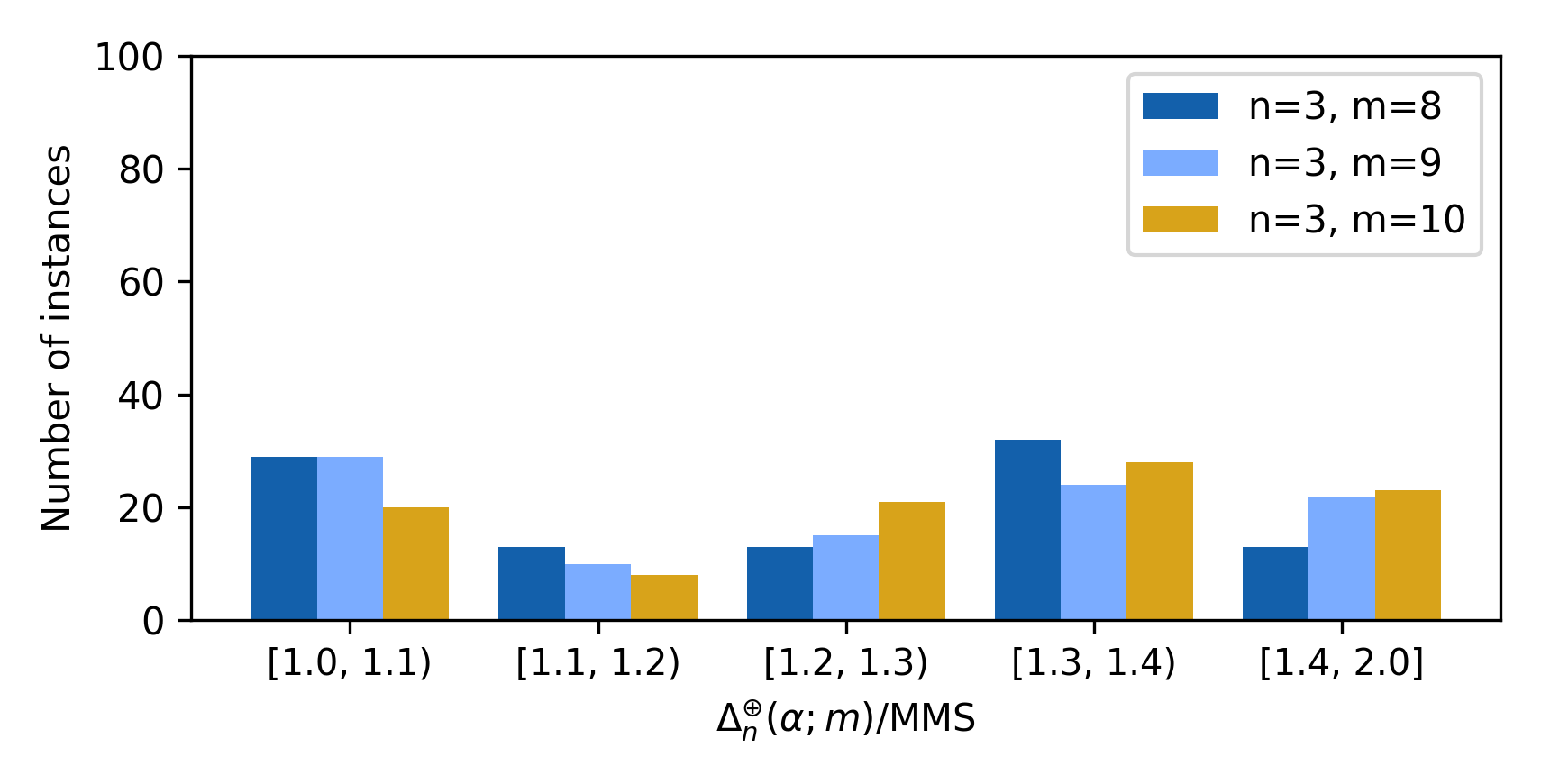}
				}
				\quad
				\subfigure[$n=4$, $m=8, 9, 10$]{
					\includegraphics[width=0.45\linewidth]{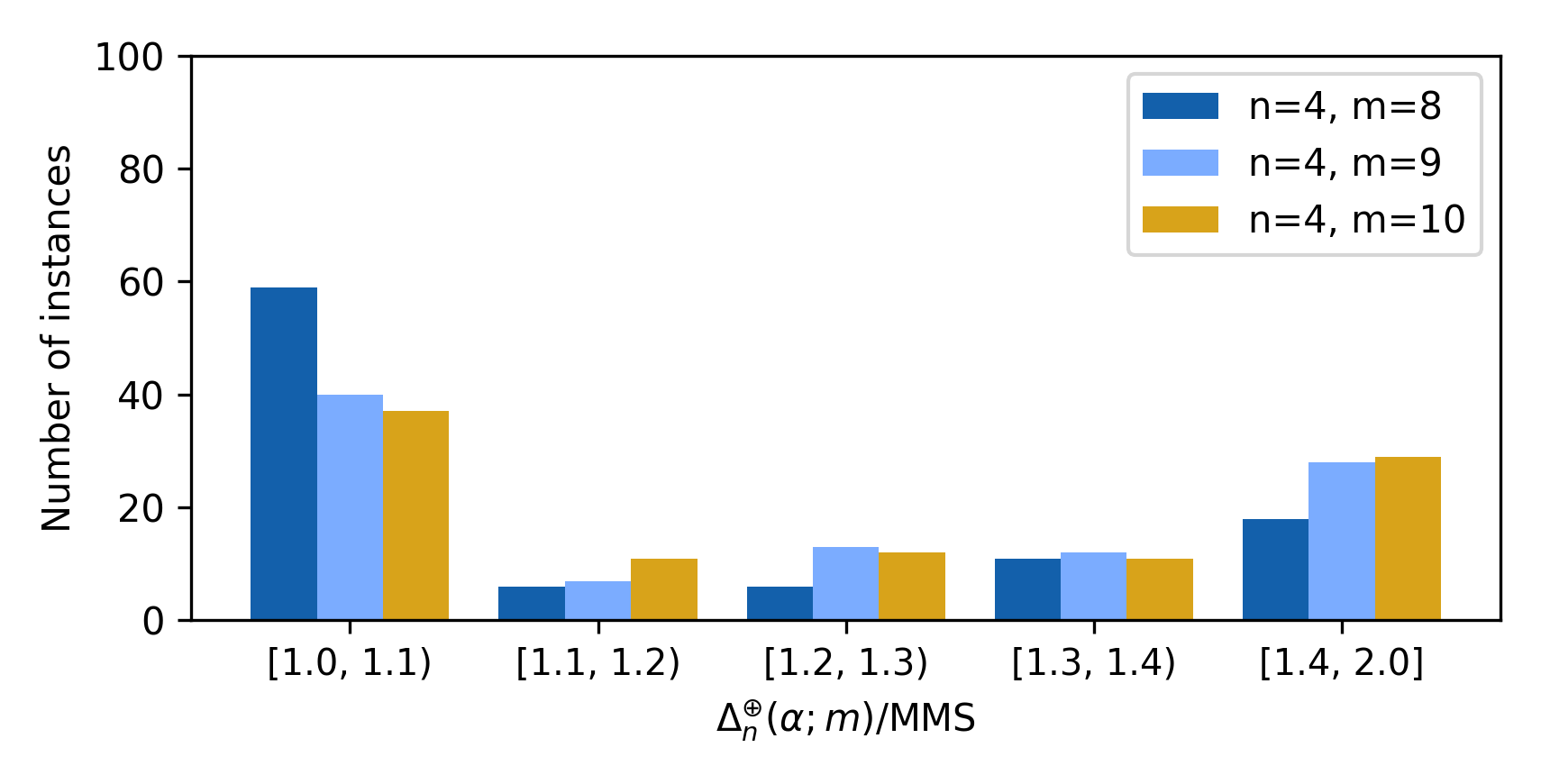}
				}
				\quad
				\subfigure[$n=5$, $m=8, 9, 10$]{
					\includegraphics[width=0.45\linewidth]{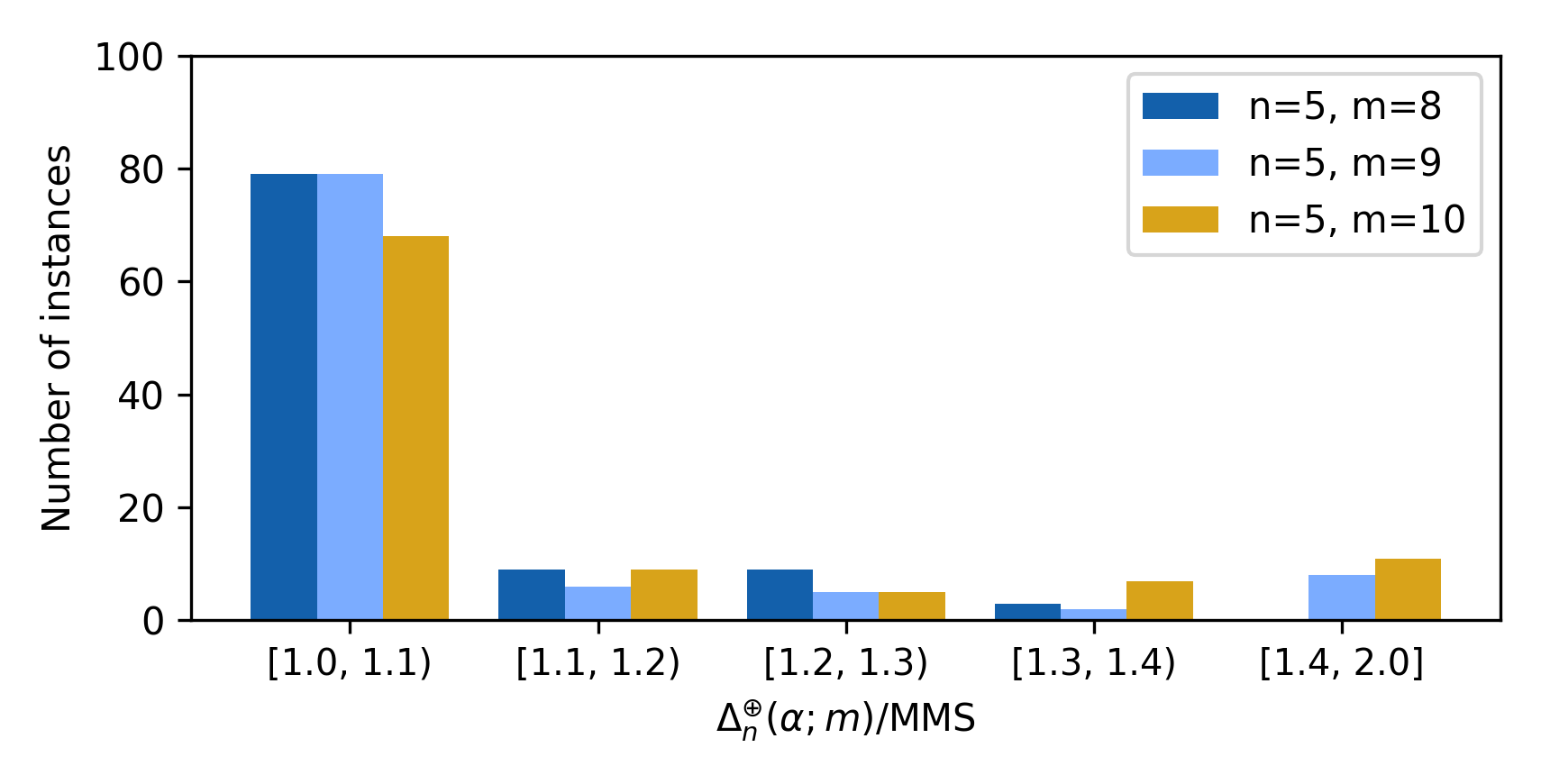}
				}
				\quad
				\subfigure[$n=6$, $m=8, 9, 10$]{
					\includegraphics[width=0.45\linewidth]{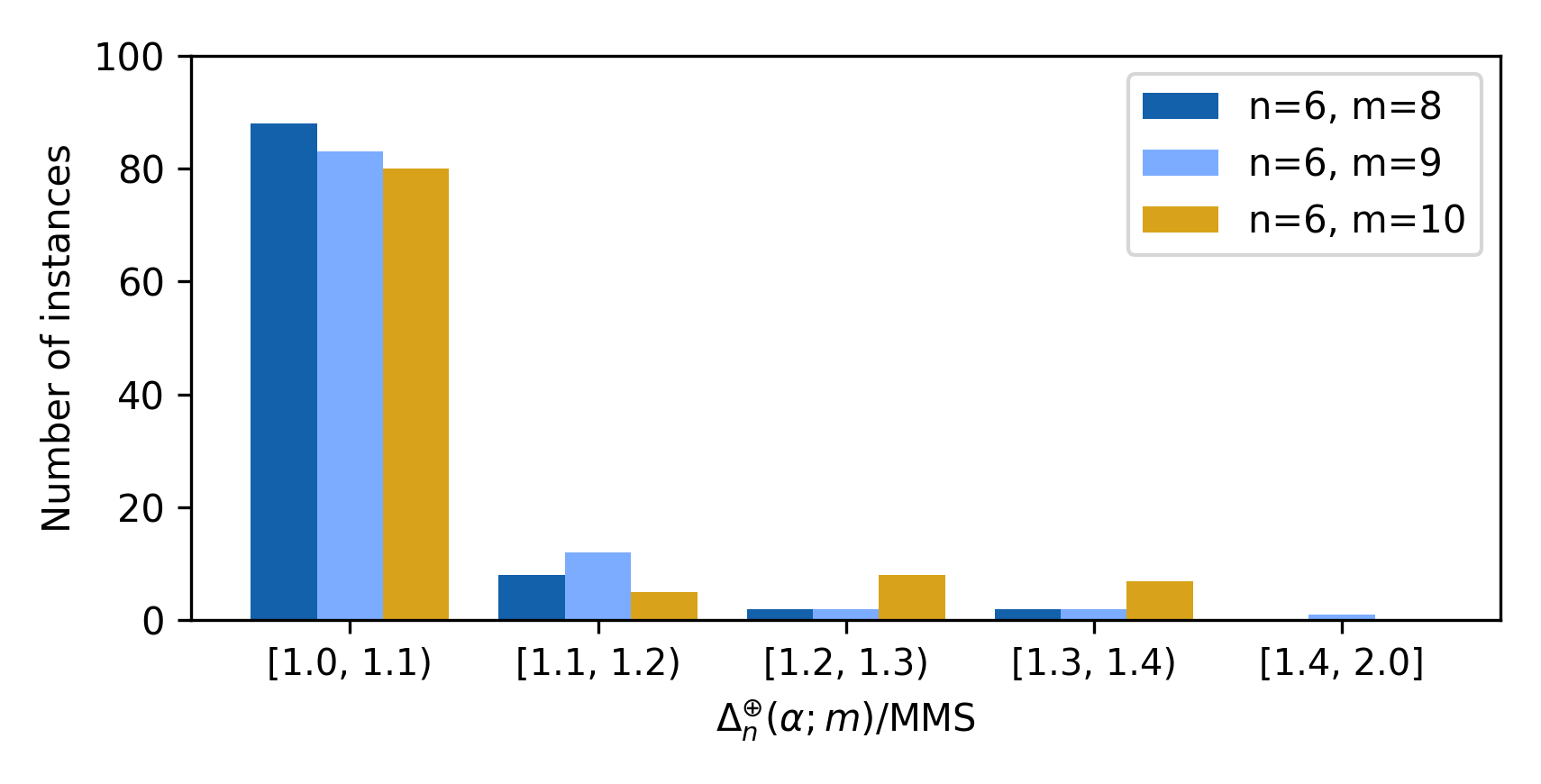}
				}
				\quad
				\subfigure[$n=7$, $m=8, 9, 10$]{
					\includegraphics[width=0.45\linewidth]{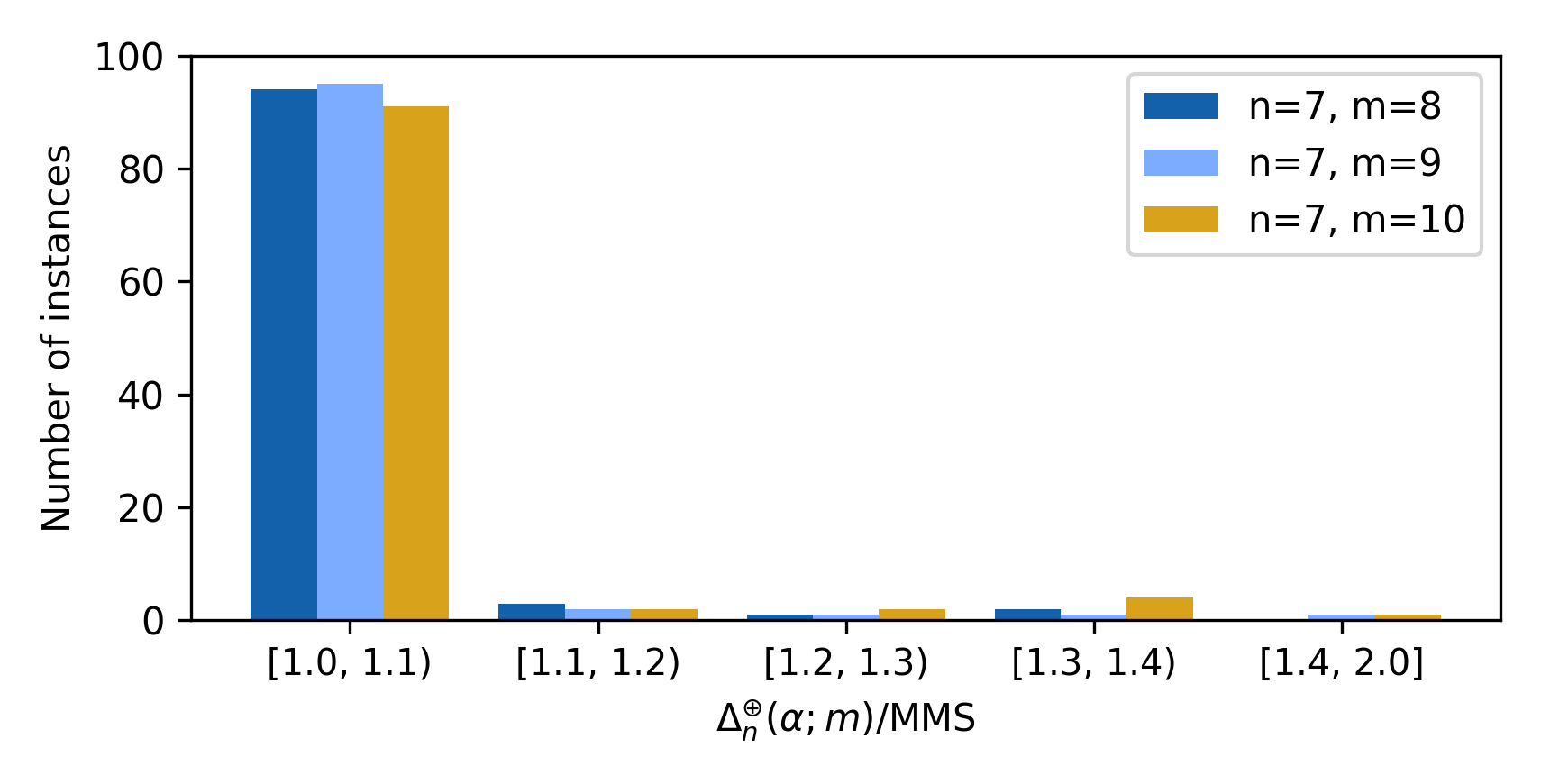}
				}
				\caption{Ratios in random data.}
				\label{fig:ratio:random}
			\end{figure}
			
			In the appendix, we conduct more experiments by fixing $n=2$ and increasing the value of $m$ and report the change in the distribution of the ratios.

			\subsection{Experiments with Real-World Data}
			The real-world data set is collected from the Spliddit platform (spliddit.org) -- a well-known platform that provides implementations of fair allocation algorithms for various practical problems \citep{DBLP:journals/sigecom/GoldmanP14}. 
			The data set contains 8,409 instances created between October 2014 and May 2020, involving 22,530 agents and 42,469 objects. 
			We randomly select 10,000 disutility functions from the data, where the largest value of $n$ is $14$. After normalising all the disutility functions, 
			for each of them, we record the ratio of the corresponding Hill's share and the MinMaxShare. 
			The results are shown in Fig. \ref{fig:ratio:spliddit}.
			As we can see, very few instances have ratios higher than 1.4, and over $65\%$ of the instances have ratios within $[1.0,1.1)$.
			Actually, there are only 173 ($= 1.73\%$) and 26 ($= 0.26\%$) instances falling into $[1.6,1.7)$ and $[1.7, 1.8)$ respectively, and none is beyond 1.8.
			Note that in the 10,000 disutility functions, there are only 14 instances with $n\ge 9$, which further amplifies the rare happening of large ratios.

			\begin{figure}[H]
				\centering
				\includegraphics[width=0.85\textwidth]{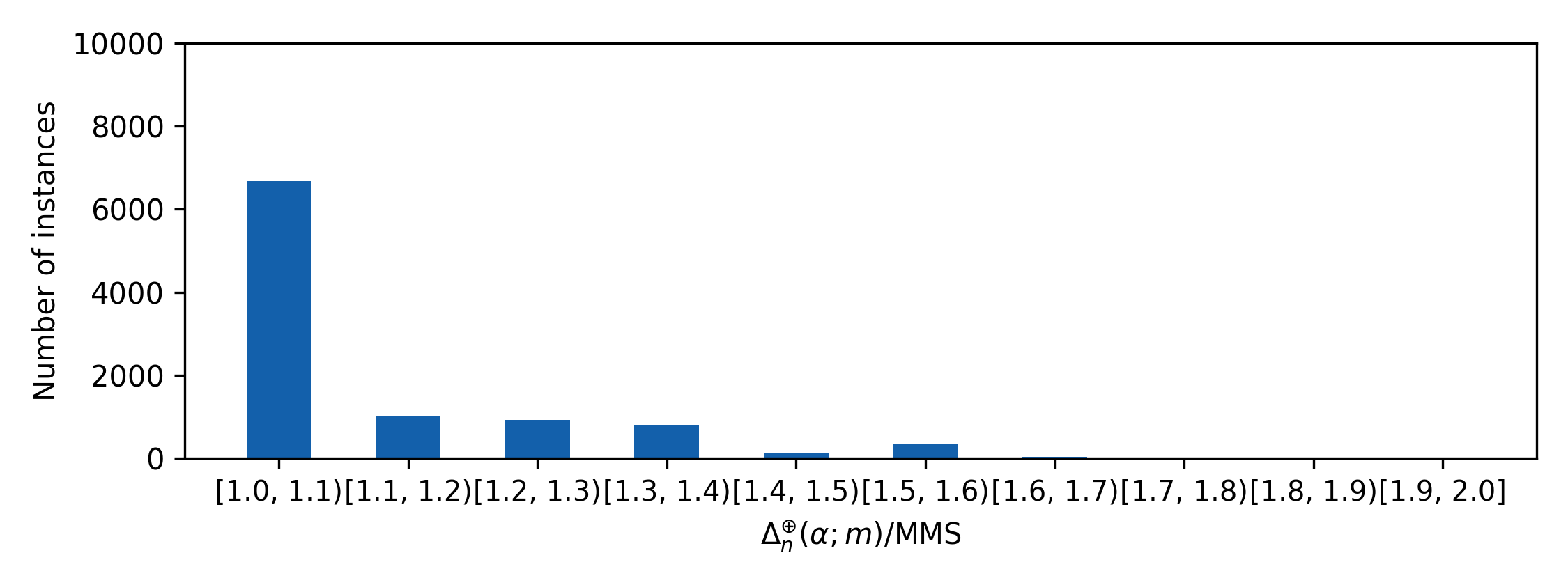}
				\caption{Ratios in Spliddit data.}
				\label{fig:ratio:spliddit}
			\end{figure}
			
			% From the figure, we can see that most of the ratios of the instances are between 1.0 and 1.1, and almost all the ratios are below 1.4. 
			% The results illustrate that in real-world scenarios, $\Delta_n^{\oplus}(\alpha; m)$ can be used as a well-pleasing bound of MinMaxShare. 
			% \yu{what else can we get from the figure?}

			\section{Conclusion}
			
			In this work, we give the tight characterisation of Hill's share for allocating indivisible bads, i.e., the exact upper bound of the MinMaxShare of disutility functions with the same largest single-object value. 
			Hill's share exhibits several advantages including elementary computation, being close to the MinMaxShare, and displaying the effect of an agent's disutility in her share of all objects. 
			More importantly, the monotonic cover of Hill's share serves as a canonical guarantee; as far as we know, no other similarly simple guarantee for the allocation of bads has been identified.
			There are some open problems. 
			Hill's guarantee is tight for the domain of disutility functions whose largest single-object disutility is {\em no greater than} a given parameter, but we do not know whether it is tight when the domain only contains the disutility functions whose largest single-object disutility {\em equals} this parameter.
			The same problem is also open for the mirror problem of allocating goods, for which the tight characterisation of Hill's share is also unknown (when $n \ge 3$).
			% we have proved the worst-case utility guarantee of the min-max allocation of indivisible chores. 
			Our work also uncovers some other related research problems, such as the algorithmic problem of finding a Pareto optimal allocation satisfying Hill's share and the game-theoretic problem of designing truthful mechanisms to incentivize the agents to report their disutility functions honestly while achieving (approximations of) Hill's share.

			\bibliography{EC_mylibrary}

			\newpage
			\appendix
			\section*{Appendix}
			
			\section{Missing Proofs in Section \ref{sec:pre}}
			\subsection{Proof of Lemma \ref{lem:bestMMS_m}}
			For each case, we show that $\MMS_{n}(v) \ge \Delta _{n}^{\circleddash}(\alpha; m)$ for any $v \in \mathcal{V}(\alpha ;m)$, 
			and design a disutility function such that the
			MinMaxShare is exactly $\Delta _{n}^{\circleddash}(\alpha; m)$. 
			By the definition of $\mathcal{V}(\alpha ;m)$, there exists an object with disutility $\alpha$, thus $\MMS_{n}(v) \ge \alpha$ for any $v \in \mathcal{V}(\alpha ;m)$. 
			Moreover, when $\alpha > 1/n$, there exists a disutility function such that the MinMaxShare is exactly $\alpha$. 
			Specifically, $v_1$ contains $\lceil \frac{1}{\alpha} \rceil$ objects, 
			$\lfloor \frac{1}{\alpha} \rfloor$ with disutility $\alpha$ 
			and one with disutility $(1 - \lfloor \frac{1}{\alpha}\rfloor \cdot \alpha) < \alpha$ (if 1 is indivisible by $\alpha$). 
			$\MMS_{n}(v_1) = \alpha$ follows from the fact that $v_1$ contains at most $n$ objects. 
			
			By the definition of MinMaxShare, $\MMS_{n}(v) \ge \frac{1}{n}$, where the equality is achieved when the total disutility of $M$ can be evenly distributed among the $n$-partition. 
			When $1/n$ is divisible by $\alpha$ (i.e., $\alpha = \frac{1}{kn}$ for some positive integer $k$), 
			or $1/n$ is not divisible by $\alpha$ 
			(i.e., $\frac{1}{(k+1)n} < \alpha < \frac{1}{kn}$) and the number of objects $m$ is at least $kn+n$, 
			there exists an disutility function such that the MinMaxShare is exactly $1/n$. 
			For the former, the disutility function $v_2$ contains $1/\alpha = kn$ objects with disutility $\alpha$. 
			Clearly, each bundle in the best $n$-partition contains $k$ objects with disutility $\alpha$ and $\MMS_{n}(v_2) = 1/n$. 
			For the latter, intuitively, the total disutility of $M$ can also be evenly distributed by letting each bundle contain $\lfloor \frac{1}{n\alpha} \rfloor = k$ objects with disutility $\alpha$ and one object with disutility $\frac{1}{n} - k\alpha < \alpha$. 
			In total, $kn+n \le m$ objects are needed.  
			Hence, the disutility function $v_3$ contains $kn$ objects with disutility $\alpha$, $n$ objects with disutility $\frac{1}{n} - k\alpha$ and $m - kn - n$ objects with disutility 0, and $\MMS_{n}(v_3) = 1/n$. 
			
			However, when $1/n$ is indivisible by $\alpha$ but the number of objects $m$ is limited to $kn+n-1$,
			$1/n$ cannot be achieved since some bundles in any $n$-partition contain no more than $k$ objects,
			and the disutilities of these bundles are at most $k\alpha < 1/n$. 
			For this case, we show that $\MMS_{n}(v) \ge k\alpha + \frac{1-kn\alpha}{m-kn}$ for any $v \in \mathcal{V}(\alpha ;m)$. 
			Let $x$ be the number of bundles in the $n$-partition that contain no more than $k$ objects, it follows that $x \ge kn+n-m$. 
			Since the disutility of each of these bundles is at most $k\alpha$, the average disutility of the other bundles is at least 
			\[
			\frac{1-k\alpha x}{n-x} \ge \frac{1-(kn+n-m)\cdot k\alpha}{m-kn} = k\alpha + \frac{1-kn\alpha}{m-kn} > k\alpha
			\]
			where the leftmost-hand side is an increasing function of $x$ since $k\alpha < 1/n$, and the last inequality is because $m \ge \lceil \frac{1}{\alpha} \rceil > kn$. 
			Therefore, the largest disutility of any $n$-partition is at least $k\alpha + \frac{1-kn\alpha}{m-kn}$; that is, $\MMS_{n}(v) \ge k\alpha + \frac{1-kn\alpha}{m-kn}$ for any $v \in \mathcal{V}(\alpha ;m)$. 
			Let $v_4$ contain $kn$ objects with disutility $\alpha$ and $m-kn$ objects with disutility $\frac{1-kn\alpha}{m-kn} < \alpha$.  
			Clearly, the worst bundle of the best $n$-partition contains $k$ objects with disutility $\alpha$ and one object with disutility $\frac{1-kn\alpha}{m-kn}$, thus $\MMS_{n}(v_4) = k\alpha + \frac{1-kn\alpha}{m-kn}$.
			
			\subsection{Proof of Lemma \ref{lem:worstMMS_property}}
			That $\Delta _{n}^{\oplus }(\alpha ;m)$
			decreases in $n$ is clear by comparing the MinMaxShares of an arbitrary $n$-partition and the $(n+1)$-partition obtained by adding one empty share.
			The monotonicity in $m$ (i.e., $\Delta _{n}^{\oplus }(\alpha ;m) \le \Delta _{n}^{\oplus }(\alpha ;m+1)$) follows that every disutility in $\mathcal{V}(\alpha; m)$ can be transformed to one in $\mathcal{V}(\alpha; m+1)$ by adding an object with disutility 0, without changing the MinMaxShare. 
			
			We then show when $m \ge \lceil \frac{2}{\alpha} \rceil - 1$, $\Delta _{n}^{\oplus }(\alpha ;m) \ge \Delta _{n}^{\oplus }(\alpha ;m+1)$, thus $\Delta _{n}^{\oplus }(\alpha ;m)$ remains constant. 
			To achieve this, we first claim that when $m \ge \lceil \frac{2}{\alpha} \rceil - 1$, for any $v \in \mathcal{V}(\alpha ;m+1)$ and any allocation $(A_1, \ldots, A_n)$, there exists one bundle such that the total disutility of two of its objects is no more than $\alpha$. 
			Otherwise, for any bundle $A_k$, the total disutility of any two objects is larger than $\alpha$, which means that $v(A_k) > \frac{|A_k|}{2}\alpha$. 
			Upon summing up the lower bounds over all bundles, $1 = \sum_{k\in N}v(A_k) > \frac{\alpha}{2}\cdot \lceil \frac{2}{\alpha} \rceil \ge 1$, a contradiction. 
			
			Now we pick any disutility $v \in \mathcal{V}(\alpha; m+1)$, and let $(A_1, \ldots, A_n)$ be the allocation that gives the MinMaxShare of $v$. 
			By the claim, there exists a bundle (w.l.o.g., $A_1$) such that two objects $e_1, e_2 \in A_1$ satisfy $v(e_1) + v(e_2) \le \alpha$. 
			We derive a disutility $v' \in \mathcal{V}(\alpha; m)$ by merging $e_1$ and $e_2$ into one object $e$, and show that $\MMS_n(v) = \MMS_n(v')$. 
			On one hand, let $A'_1 = A_1 \setminus \{e_1, e_2\} \cup \{e\}$, since $(A'_1, \ldots, A_n)$ is an allocation regarding $v'$ with the largest disutility being $\MMS_n(v)$, it follows that $\MMS_n(v) \ge \MMS_n(v')$. 
			On the other hand, by decomposing $e$ into $e_1$ and $e_2$, we can convert any allocation regarding $v'$ to an allocation regarding $v$ without changing the largest disutility, thus $\MMS_n(v) \le \MMS_n(v')$. 
			
			Therefore, when $m \ge \lceil \frac{2}{\alpha} \rceil - 1$, every disutility in $\mathcal{V}(\alpha; m+1)$ can be transformed to one in $\mathcal{V}(\alpha; m)$ without changing the MinMaxShare, which gives $\Delta _{n}^{\oplus }(\alpha ;m) \ge \Delta _{n}^{\oplus }(\alpha ;m+1)$. 
			By combining the monotonicity in $m$, $\Delta _{n}^{\oplus }(\alpha ;m)$ remains constant when $m \ge \lceil \frac{2}{\alpha} \rceil - 1$. 
			
			\section{Missing Proofs in Section ~\ref{sec::upper-bound-minmaxshare}}
			\label{app:proofs:home}

			\subsection{Case 3: $n = 2$ and $k = 1$ for unrestricted $m$}
			\label{subsubsec:unrestricted:n=2-k=1}
			% At last, we consider the special case when $n = 2$ and simultaneously $k = 1$. 
			
			We now prove Corollary \ref{coro:homogeneous} for the case of $n = 2$ and $k = 1$, i.e., $\alpha \in D(2,1)\cup I(2,1)$.
			
			\medskip
			
			\textbf{Subcase 3.1: $\alpha \in (\frac{1}{5}, \frac{7}{27}]$} 
			
			\medskip
			
			When $\alpha \in (\frac{1}{5}, \frac{7}{27}]$, $v(A_1) > \Delta_2^{\oplus}(\alpha) = \frac{3-3\alpha}{4}$. 
			If $E_{\alpha} \cap A_2 \neq \emptyset$, $A_2$ contains some objects with disutility $\alpha$. 
			Recall that Claim \ref{claim:subset} holds as long as $k = 1$, thus there exists $S \subseteq A_1$ such that $\frac{1}{3}v(A_1) \le v(S) < v(A_1) - \alpha$. 
			Denote by $e^*$ one object in $A_2$ with disutility $\alpha$, Claim \ref{clm:v_Aj} gives $v(A_2 \setminus \{e^*\}) \ge v(S) \ge \frac{1}{3}v(A_1)$. 
			As a result, 
			\[
			1 = v(A_1) + v(A_2) \ge v(A_1) + \frac{1}{3}v(A_1) + \alpha,
			\]
			which gives $v(A_1) \le \frac{3-3\alpha}{4}$, thus contradicting the assumption that $v(A_1) > \Delta_2^{\oplus}(\alpha)$. 
			
			Therefore, $E_{\alpha} \cap A_2 = \emptyset$, 
			which means that all the objects with disutility $\alpha$ are in $A_1$ and for any $e \in A_2$, $v(e) < \alpha$. 
			We first derive an upper bound and a lower bound of the maximum disutility of the objects in $A_2$. 
			Denote by $e^*$ one object in $A_1$ with $v(e^*)=\alpha < v(A_1)$, since $v(A_1) - v(A_2) = 2v(A_1) - 1 > \frac{1 - 3\alpha}{2}$, Claim \ref{clm:diff} gives
			\[
			\max_{e\in A_2}v(e) \le v(e^*) - (v(A_1) - v(A_2)) < \frac{5\alpha-1}{2}. 
			\]
			Notice that $\frac{1-3\alpha}{2} > \frac{\alpha}{3}$ since $\alpha \le \frac{7}{27} < \frac{3}{11}$, $v(A_1) - v(A_2) > \frac{\alpha}{3}$. 
			Then for every $S \subseteq A_2$ with $v(S) < \alpha$, Claim \ref{clm:diff} actually gives a tighter bound of $v(S)$, i.e., $v(S) \le v(e^*) - (v(A_1) - v(A_2)) < \frac{2}{3}\alpha$. 
			This also implies that for every $S' \subseteq A_2$ with $v(S') \ge \frac{2}{3}\alpha$, $v(S') \ge \alpha$ actually holds. 
			Let $S^* = \arg\min_{S\subseteq A_2, v(S) \ge \frac{2}{3}\alpha}v(S)$ whose existence is guaranteed since Claim \ref{clm:v_Aj} gives $v(A_2) \ge v(e^*) = \alpha$, thus, $v(S^*) \ge \alpha$. 
			Then from the definition of $S^*$, $v(e) \ge v(S^*) - \frac{2}{3}\alpha \ge \frac{1}{3}\alpha$ holds for any $e \in A_2$,
			which implies 
			\[
			\max_{e \in A_2}v(e) \ge \frac{\alpha}{2}.
			\]
			Otherwise (i.e., $\max_{e \in A_2}v(e) < \frac{\alpha}{2}$), the total disutility of any two objects in $A_2$ is at least $\frac{2}{3} \alpha$ and smaller than $\alpha$, which is a contradiction to Claim \ref{clm:diff}. 
			
			We then show that $|A_1|$ is exactly 3. 
			Otherwise (i.e., $|A_1| \ge 4$), there exists $S \subseteq A_1$ such that $v(A_1) > v(S) \ge \alpha + \frac{2}{3}(v(A_1)-\alpha)$.
			Then Claim \ref{clm:v_Aj} gives $v(A_2) \ge v(S) \ge \alpha + \frac{2}{3}(v(A_1)-\alpha)$. 
			Summing up the lower bounds of $v(A_1)$ and $v(A_2)$ leads to a contradiction as below
			\[
			1 = v(A_1) + v(A_2) \ge \frac{5}{3}v(A_1) + \frac{1}{3}\alpha > \frac{15-11\alpha}{12} > 1,
			\]
			where the last inequality is because $\alpha \le \frac{7}{27} < \frac{3}{11}$. 
			Therefore, we can denote $A_1 = \{e_1^1, e_2^1, e_3^1\}$ and assume without loss of generality that $v(e_1^1)=\alpha \ge v(e_2^1)=x \ge v(e_3^1)=y$. 
			We then derive the lower bounds of $x$ and $y$, and reveal the contradiction in this subcase. 
			Since $x \ge y$, the following formula holds,
			\[
			x \ge \frac{x+y}{2} = \frac{v(A_1) - \alpha}{2} > \frac{3 - 7\alpha}{8} \ge \frac{5\alpha-1}{2} > \max_{e\in A_2}v(e),
			\]
			where the second last inequality is because $\alpha \le \frac{7}{27}$. 
			Then Claim \ref{clm:diff} gives the following lower bound of $x$,
			\[
			x \ge \max_{e\in A_2}v(e) + (v(A_1) - v(A_2)) > \frac{\alpha}{2} + \frac{1-3\alpha}{2} = \frac{1-2\alpha}{2}. 
			\]
			Claim \ref{clm:diff} also gives $y \ge v(A_1) - v(A_2)$. 
			Notice that 
			\[
			2 \cdot (v(A_1) - v(A_2)) > \frac{2 - 6\alpha}{2} > \alpha - \frac{1 - 3\alpha}{2} > x - (v(A_1) - v(A_2)) ,
			\]
			where the second inequality is because $\alpha \le \frac{7}{27} < \frac{3}{11}$, we have the following lower bound of $y$
			\[
			y > \frac{1}{2} \cdot (x - (v(A_1) - v(A_2))) \ge \frac{1}{2} \cdot \max_{e\in A_2}v(e) \ge \frac{\alpha}{4}. 
			\]
			Therefore, $v(A_1) = \alpha + x + y > \alpha + \frac{1-2\alpha}{2} + \frac{\alpha}{4} = \frac{2+\alpha}{4}$, which gives $v(A_1) - v(A_2) = 2v(A_1)-1 > \frac{\alpha}{2}$. 
			However, according to Claim \ref{clm:diff}, $v(A_1) - v(A_2) \le \alpha - \max_{e\in A_2}v(e) \le \frac{\alpha}{2}$, thus constituting a contradiction. 
			
			For the other direction, the disutility function for this subcase contains one object with disutility $\alpha$ and four objects with disutility $\frac{1-\alpha}{4}$. 
			Since $\frac{1}{5} < \alpha \le \frac{7}{27}$, it follows that $\frac{1-\alpha}{4} < \alpha < 2\cdot \frac{1-\alpha}{4}$, where the last inequality is because $\alpha \le \frac{7}{27} < \frac{1}{3}$. 
			Clearly, the MinMaxShare of this disutility function is $3\cdot \frac{1-\alpha}{4}$.
			
			\medskip
			
			\textbf{Subcase 3.2: $\alpha \in (\frac{7}{27}, \frac{2}{7}]$} 
			
			\medskip
			
			When $\alpha \in (\frac{7}{27}, \frac{2}{7}]$, $v(A_1) > \Delta_2^{\oplus}(\alpha) = \frac{2+3\alpha}{5}$. 
			If $E_\alpha \cap A_2 \neq \emptyset$, the proof is similar to that for the counterpart of Subcase 3.1. 
			That is, we also have $v(A_1) \le \frac{3-3\alpha}{4}$, which contradicts $v(A_1) > \Delta_n^{\oplus}(\alpha)$ since $\frac{3-3\alpha}{4} < \frac{2+3\alpha}{5}$ when $\alpha > \frac{7}{27}$. 
			
			Therefore, we can focus on $E_\alpha \cap A_2 = \emptyset$. 
			We first derive an upper bound and a lower bound of the maximum disutility of the objects in $A_2$, 
			which is similar to the counterpart of Subcase 3.1.
			Denote by $e^*$ one object in $A_1$ with $v(e^*)=\alpha < v(A_1)$, since $v(A_1) - v(A_2) = 2v(A_1) - 1 > \frac{6\alpha - 1}{5}$, Claim \ref{clm:diff} gives
			\[
			\max_{e\in A_2}v(e) \le v(e^*) - (v(A_1) - v(A_2)) < \frac{1-\alpha}{5}. 
			\]
			Notice that $\frac{6\alpha - 1}{5} > \frac{\alpha}{3}$ since $\alpha > \frac{7}{27} > \frac{3}{13}$, $v(A_1) - v(A_2) > \frac{\alpha}{3}$. 
			Then for every $S \subseteq A_2$ with $v(S) < \alpha$, Claim \ref{clm:diff} actually gives a tighter bound of $v(S)$, i.e., $v(S) \le v(e^*) - v(A_1) - v(A_2) < \frac{2}{3}\alpha$. 
			This also implies that for every $S' \subseteq A_2$ with $v(S') \ge \frac{2}{3}\alpha$, $v(S') \ge \alpha$ actually holds. 
			Let $S^* = \arg\min_{S\subseteq A_2, v(S) \ge \frac{2}{3}\alpha}v(S)$ whose existence is guaranteed since Claim \ref{clm:v_Aj} gives $v(A_2) \ge v(e^*) = \alpha$, thus, $v(S^*) \ge \alpha$. 
			Then from the definition of $S^*$, $v(e) \ge v(S^*) - \frac{2}{3}\alpha \ge \frac{1}{3}\alpha$ holds for any $e \in A_2$,
			which implies 
			\[
			\max_{e \in A_2}v(e) \ge \frac{\alpha}{2}.
			\]
			Otherwise (i.e., $\max_{e \in A_2}v(e) < \frac{\alpha}{2}$), the total disutility of any two objects in $A_2$ is at least $\frac{2}{3} \alpha$ and smaller than $\alpha$, which is a contradiction to Claim \ref{clm:diff}. 
			
			Observe that $A_1$ contains exactly one object with disutility $\alpha$. 
			Otherwise (i.e., $A_1$ contains at least two objects with disutility $\alpha$), Claim \ref{clm:v_Aj} gives $v(A_2) \ge 2\alpha$ which leads to the following contradiction
			\[
			1 = v(A_1) + v(A_2) > \frac{2+3\alpha}{5} + 2\alpha > 1,
			\]
			where the last inequality is because $\alpha > \frac{7}{27} > \frac{3}{13}$. 
			Recall that $|A_1| \ge 3$, $A_1$ contains at least two objects with disutility smaller than $\alpha$. 
			For each of such objects, we call it a \textit{medium object} if its disutility is larger than $\max_{e\in A_2}v(e)$. Otherwise, we call it a \textit{small object}. 
			Then Claim \ref{clm:diff} gives the following lower bound of the disutility of any medium object $e$
			\begin{align*}
				v(e) &\ge \max_{e\in A_2}v(e) + (v(A_1) - v(A_2)) \\
				&= \max_{e\in A_2}v(e) + (2v(A_1) - 1) > \frac{\alpha}{2} + \frac{6\alpha-1}{5} = \frac{17\alpha-2}{10}, 
			\end{align*}
			as well as the following lower bound of the disutility of any small object $e'$
			\[
			v(e') \ge v(A_1) - v(A_2) = 2v(A_1) - 1 > \frac{6\alpha-1}{5}. 
			\]
			
			We then reveal the contradiction by considering possible combinations of objects in $A_1$ and showing that no possible combination exists. 
			
			\medskip
			
			\textit{Combination 1}: besides the object with disutility $\alpha$, $A_1$ also contains at least 3 small objects. 
			Thus, $v(A_1) > \alpha + 3\cdot \frac{6\alpha-1}{5} = \frac{23\alpha-3}{5}$. 
			Then a lower bound of the difference between $v(A_1)$ and $v(A_2)$ is 
			\[
			v(A_1) - v(A_2) = 2v(A_1) - 1 > \frac{46\alpha - 11}{5} > \frac{\alpha}{2},
			\]
			where the last inequality is because $\alpha > \frac{7}{27} > \frac{22}{87}$. 
			However, according to Claim \ref{clm:diff}, $v(A_1) - v(A_2) \le \alpha - \max_{e\in A_2}v(e) \le \frac{\alpha}{2}$, which is a contradiction. 
			Note that this also implies that except the object with disutility $\alpha$, the total disutility of the other objects must be smaller than $3\cdot \frac{6\alpha-1}{5}$. 
			Since the total disutility of one medium object and one small object is larger than
			\[
			\frac{17\alpha-2}{10} + \frac{6\alpha-1}{5} = \frac{29\alpha-4}{10} \ge \frac{18\alpha-3}{5} = 3\cdot \frac{6\alpha-1}{5},
			\]
			where the inequality is because $\alpha \le \frac{2}{7}$, the only combination that remains to consider is that $A_1$ contains 2 small objects besides the object with disutility $\alpha$. 
			
			\smallskip
			
			\textit{Combination 2}: besides the object with disutility $\alpha$, $A_1$ contains 2 small objects. 
			From the definition of small object, $v(e') \le \max_{e\in A_2}v(e) < \frac{1-\alpha}{5}$ holds for any small object $e' \in A_1$.
			Thus, $v(A_1) < \alpha + 2\cdot \frac{1-\alpha}{5} = \frac{2+3\alpha}{5}$, which is a contradiction to the assumption that $v(A_1) > \Delta_2^{\oplus}(\alpha)$. 
			
			For the other direction, the disutility function for this subcase contains one object with disutility $\alpha$ and five objects with disutility $\frac{1-\alpha}{5}$. 
			Since $\frac{1}{6} < \frac{7}{27} < \alpha \le \frac{2}{7}$, it follows that $\frac{1-\alpha}{5} < \alpha \le 2\cdot \frac{1-\alpha}{5}$. 
			Clearly, the MinMaxShare of this disutility function is $\alpha + 2\cdot \frac{1-\alpha}{5}$. 
			
			\medskip
			
			\textbf{Subcase 3.3: $\alpha \in (\frac{2}{7}, \frac{1}{3}]$} 
			
			\medskip
			
			When $\alpha \in (\frac{2}{7}, \frac{1}{3}]$, $v(A_1) > \Delta_2^{\oplus}(\alpha) = 2\alpha$. 
			If $E_\alpha \cap A_2 \neq \emptyset$, the proof is similar to those for the counterparts of Subcases 3.1 and 3.2. 
			That is, we also have $v(A_1) \le \frac{3-3\alpha}{4}$, which contradicts $v(A_1) > \Delta_2^{\oplus}(\alpha)$ since $\frac{3-3\alpha}{4} < 2\alpha$ when $\alpha > \frac{2}{7} > \frac{3}{11}$. 
			
			Then we focus on $E_\alpha \cap A_2 = \emptyset$. 
			Since $|A_1| \ge 3$, there exists $S \subseteq A_1$ such that $\alpha + \frac{1}{2}(v(A_1)-\alpha) \le v(S) < v(A_1)$. 
			From Claim \ref{clm:v_Aj}, we have a lower bound of $v(A_2)$, i.e., $v(A_2) \ge \alpha + \frac{1}{2}(v(A_1)-\alpha)$. 
			Summing up the lower bounds of $v(A_1)$ and $v(A_2)$ leads to a contradiction,
			\[
			1 = v(A_1) + v(A_2) \ge \frac{3}{2}v(A_1) + \frac{\alpha}{2} > \frac{7\alpha}{2} > 1,
			\]
			where the last inequality is because $\alpha > \frac{2}{7}$. 
			
			For the other direction, 
			the disutility function for this subcase contains three objects with disutility $\alpha$ and one object with disutility $1-3\alpha$ (if $\alpha < \frac{1}{3}$). 
			Since $\alpha > \frac{2}{7} > \frac{1}{4}$, it follows that $1-3\alpha < \alpha$. 
			Clearly, the MinMaxShare is $2\alpha$. 
			
			% \section{Proof of Theorem \ref{thm:homogeneous}}
			% \label{sec:proof:lem:with-m}
			\subsection{Proof of Theorem \ref{thm:homogeneous}}
			\label{sec:proof:lem:with-m}
			
			We now carefully discuss Hill’s share when $m$ is not sufficiently large, which completes the proof of Theorem~\ref{thm:homogeneous}.
			% In this section, we consider the setting when $m$ is restricted. 
			% This section presents the proof of Lemma~\ref{lem:with-m}.
			% Recall that the guarantee of the restricted $m$ could differ from that of the unrestricted $m$
			% only when the restricted $m$ is smaller than the number of items of the worst-case instance, presented in the above section. In other words,
			% % We prove Lemma \ref{lem:with-m} based on the results in the setting when $m$ is not restricted.
			% % Intuitively, the difference between the restricted and unrestricted settings mainly lies in the constructed valuations that show the tightness of the bounds $W_n(\alpha)$ and $W_n(\alpha; m)$. 
			% when $m$ is at least the number of items contained in the valuation of the unrestricted setting,
			% the guarantee remains unchanged, i.e., $W_n(\alpha; m) = W_n(\alpha)$. The reason is that 
			% we can add as many items with value 0 as needed. 
			% In the following, we provide the formal proof.
			% \redcomment{However, when $m$ is smaller than the number of items, the valuation cannot contain that many items. 
				% In that case, the values of the extra items may be distributed among the other items and the tight bound may slightly increases.}
			For the sake of contradiction, we assume that there exists a disutility $v \in \cV(\alpha;m)$ such that $\MMS_n(v) > \Delta_n^{\oplus}(\alpha; m)$,
			and let $\mathbf{A} = (A_1, \ldots, A_n)$ be a lexicographical MinMax allocation of $v$.
			Without loss of generality, assume $v(A_1) \ge \cdots \ge v(A_n)$. We now split the proof into several cases based on the values of $n$ and $k$,
			and 
			% particularly 
			% focus on the situation 
			it suffices to compute the share for the case 
			where 
			$m$ is smaller than the number of objects in the worst-case disutility function in the unrestricted setting.

			\medskip
			
			\noindent{\bf Case 1: $n \neq 2$ or $k \neq 1$}
			
			\medskip
			
			We consider the subcases $\alpha \in D(n, k)$ and $\alpha \in I(n, k)$, separately. 
			
			\medskip
			
			\textbf{Subcase 1.1: $\alpha \in D(n, k)$ } 
			
			\medskip
			
			Recall that when $\alpha \in D(n, k)$ with $n \neq 2$ or $k \neq 1$, the disutility function constructed in the setting when $m$ is not restricted contains $kn + n + 1$ objects (see Tables \ref{tab:k=0-D} and \ref{tab:general-D}). 
			Therefore, if $m \ge kn + n + 1$, the tight bound remains unchanged.  
			
			Thus we can focus on $m \le kn + n$. 
			Since $v(A_1) > \Delta_n^{\oplus}(\alpha; m) = (k+1)\alpha$, by Claim \ref{clm:v_Aj}, $v(A_j) \ge v(A_1) - \alpha > k\alpha$ for any $j \in N \setminus \{1\}$. 
			Moreover, since the disutility of any object is at most $\alpha$, $A_1$ contains at least $k+2$ objects and $A_j$ contains at least $k+1$ ones, 
			i.e., $|A_1| \ge k+2$ and $|A_j| \ge k+1$. 
			Accordingly, the total number of objects is at least $k+2 + (n-1)(k+1) = kn+n+1 > m$, a contradiction.  
			The disutility function that shows tightness (see Table \ref{tab:restricted:general-D}) contains $\lceil \frac{1}{\alpha} \rceil - 1$ objects with disutility $\alpha$, 
			one object with disutility $ 1 - (\lceil \frac{1}{\alpha} \rceil - 1) \alpha $,
			and $ m - \lceil \frac{1}{\alpha} \rceil$ objects with disutility 0.
			This disutility function is valid since $m \ge \lceil \frac{1}{\alpha} \rceil$. 
			Since $\alpha \in D(n, k)$, $\frac{1}{\alpha} \ge \frac{n(k+1 )^2  + k + 2 }{k+2} \ge kn + 1$, where the last inequality is because $n\ge0$. 
			Therefore, the disutility function contains at least $kn+1$ objects with disutility $\alpha$.
			By the pigeonhole principle, the MinMaxShare is at least $(k+1)\alpha$.

			\begin{table}[htbp]
				\centering
				\begin{tabular}{c|c}
					\hline
					Object Disutility & Quantity \\
					\hline
					$\alpha$ & $\lceil \frac{1}{\alpha} \rceil - 1$ \\
					$ 1 - (\lceil \frac{1}{\alpha} \rceil - 1) \alpha $ & 1 \\
					0 & $ m - \lceil \frac{1}{\alpha} \rceil$ \\
					\hline
				\end{tabular}
				\caption{Disutility function for subcase $\alpha \in D(n, k)$ with $n \neq 2$ or $k \neq 1$, and $m \le kn+n$.}
				\label{tab:restricted:general-D}
			\end{table}
			
			\textbf{Subcase 1.2: $\alpha \in I(n, k)$ } 
			
			\medskip
			
			The bound for $\alpha \in I(n, k)$ remains unchanged regardless of the value of $m$, 
			since there always exists a disutility function whose MinMaxShare is at least $\Delta_n^{\oplus}(\alpha; m) = (k+1)\alpha$. 
			Specifically, the disutility function (see Table \ref{tab:restricted:general-D}) also contains $\lceil \frac{1}{\alpha} \rceil - 1$ objects with disutility $\alpha$,
			one object with disutility $ 1 - (\lceil \frac{1}{\alpha} \rceil - 1) \alpha $, 
			and $ m - \lceil \frac{1}{\alpha} \rceil$ objects with disutility 0.
			Since $\alpha \in I(n,k)$, $\frac{1}{\alpha} \ge kn+1$, which means that there are at least $kn+1$ objects with disutility $\alpha$. 
			By the pigeonhole principle, the MinMaxShare is at least $(k+1)\alpha$. 
			
			\medskip
			
			\noindent{\bf Case 2: $n = 2$ and $k = 1$}
			
			\medskip
			
			Recall that when $n=2$ and $k=1$, $\alpha \in (\frac{1}{5}, \frac{1}{3}]$, thus $m \ge \lceil \frac{1}{\alpha} \rceil \ge 3$.  
			When $m = 3$, $\alpha$ can only be $\frac{1}{3}$. 
			The tight bound remains unchanged (i.e., $\Delta_2^{\oplus}(\frac{1}{3}; 3) = \Delta_2^{\oplus}(\frac{1}{3})$), since the disutility function constructed in the unrestricted setting (i.e., Subcase 3.3 in Subsection \ref{subsubsec:unrestricted:n=2-k=1}) contains 3 objects when $\alpha = \frac{1}{3}$. 
			
			When $m = 4$, $\alpha \in [\frac{1}{4}, \frac{1}{3})$. 
			Since $v(A_1) > \Delta_2^{\oplus}(\alpha; 4) = 2\alpha$, by Claim \ref{clm:v_Aj}, $v(A_2) > \alpha$. 
			Therefore, $A_1$ contains at least 3 objects and $A_2$ contains at least 2 objects, a contradiction to $m = 4$. 
			For the tightness, the disutility function contains $\lceil \frac{1}{\alpha} \rceil - 1$ objects with disutility $\alpha$,
			and one object with disutility $ 1 - (\lceil \frac{1}{\alpha} \rceil - 1) \alpha $. 
			Sine $\frac{1}{\alpha} > 3$, by the pigeonhole principle, the MinMaxShare is at least $2\alpha$. 
			
			When $m = 5$, $\alpha \in (\frac{1}{5}, \frac{1}{3}]$. 
			If $\alpha \in (\frac{1}{5}, \frac{7}{27}]$ or $(\frac{2}{7}, \frac{1}{3}]$, 
			the disutility functions constructed in the unrestricted setting (i.e., Subcases 3.1 and 3.3 in Subsection \ref{subsubsec:unrestricted:n=2-k=1}) contain 5 and 4 objects respectively, thus the tight bounds do not change. 
			If $\alpha \in (\frac{7}{27}, \frac{2}{7}]$, since $v(A_1) > \Delta_2^{\oplus}(\alpha; 5) \ge 2\alpha$, by Claim \ref{clm:v_Aj}, $v(A_2) > \alpha$, thus $A_1$ contains at least 3 objects and $A_2$ contains at least 2 objects. 
			More accurately, since $m = 5$, $|A_1|$ is exactly 3 and $|A_2|$ is exactly 2. 
			Moreover, it can be verified that the largest disutility in $A_1$ is at most the smallest disutility in $A_2$. 
			Since otherwise, by exchanging one object in $A_1$ with a strictly larger disutility and one object in $A_2$ with a strictly smaller disutility, 
			one can get another allocation $\mathbf{A}' = (A'_1, A'_2)$ such that $v(A'_1) < v(A_1)$ and $v(A'_2) \le 2\alpha < v(A_1)$, which contradicts the assumption that $\mathbf{A}$ is a lexicographical MinMax allocation of $v$. 
			Let $A_2 = \{e_1, e_2\}$, it follows that $v(e_1) = \alpha$ and $v(e_2) \ge \frac{1}{3}\cdot v(A_1)$. 
			Therefore, 
			\[
			v(A_1 \cup A_2) \ge v(A_1) + \alpha + \frac{1}{3}v(A_1) = \frac{4}{3}\cdot v(A_1) + \alpha.
			\]
			If $\alpha \in (\frac{7}{27}, \frac{3}{11}]$, $v(A_1) > \Delta_2^{\oplus}(\alpha; 5) = \frac{3-3\alpha}{4}$, thus 
			\[
			v(A_1 \cup A_2) > \frac{4}{3} \cdot \frac{3-3\alpha}{4} + \alpha = 1,
			\]
			a contradiction. 
			If $\alpha \in (\frac{3}{11}, \frac{2}{7}]$, $v(A_1) > \Delta_2^{\oplus}(\alpha; 5) = 2\alpha$, also a contradiction since 
			\[
			v(A_1 \cup A_2) > \frac{11}{3}\alpha > 1. 
			\]
			The disutility function that shows tightness for $\alpha \in (\frac{7}{27}, \frac{3}{11}]$ is the same as that in Subcase 3.1 in Subsection \ref{subsubsec:unrestricted:n=2-k=1}, 
			i.e., one object with disutility $\alpha$ and four objects with disutility $\frac{1-\alpha}{4}$. 
			Again, since $\frac{1}{5} < \frac{7}{27} < \alpha \le \frac{3}{11} < \frac{1}{3}$, $\frac{1-\alpha}{4} < \alpha < 2\cdot \frac{1-\alpha}{4}$, 
			which gives that the MinMaxShare is $\frac{3-3\alpha}{4}$. 
			For $\alpha \in (\frac{3}{11}, \frac{2}{7}]$, the disutility function is the same as that in Subcase 3.3 in Subsection \ref{subsubsec:unrestricted:n=2-k=1}, 
			i.e., three objects with disutility $\alpha$ and one object with disutility $1-3\alpha$. 
			Since $\alpha > \frac{3}{11} > \frac{1}{4}$, $1-3\alpha < \alpha$, thus the MinMaxShare is $2\alpha$. 
			
			When $m \ge 6$, $\alpha \in (\frac{1}{5}, \frac{1}{3}]$. 
			Since the disutility functions constructed in the subcases of the unrestricted setting contain no more than 6 objects, the tight bounds remain unchanged. 
			% \appendix
			% \section{New proof}
			% \begin{theorem}\label{thm::revised-2}
				%     For any $0< \alpha <1$, $n\ge 2$, and $m\ge \lceil\frac{1}{\alpha} \rceil$,  $\Delta _{n}^{\oplus }(\alpha ;m) = W_n(\alpha; m)$.
				% \end{theorem}
			
			\section{Missing Materials in Section \ref{sec:experiments}}
			
			\subsection{Proof of Claim \ref{claim:r}}
			\label{subsec:proof_claim_r}
			Notice that by Lemma \ref{lem:bestMMS_m} and Lemma \ref{lem:worstMMS_property}, $r_n(\alpha; m)$ is weakly increasing in $m$. 
			Therefore, it suffices to prove the claim for the setting when $m$ is unrestricted, i.e., $r_n(\alpha) \le \frac{2n}{n+1}$. 
			We first consider the case where $n = 2$ and $k = 1$. 
			In this case, $\alpha \in (\frac{1}{5}, \frac{1}{3}]$. 
			Since $\alpha < \frac{1}{n} = \frac{1}{2}$, $\Delta _{2}^{\circleddash}(\alpha) = \frac{1}{2}$. 
			When $\alpha \in (\frac{1}{5}, \frac{7}{27}]$, $\Delta_2^{\oplus}(\alpha) = \frac{3 - 3\alpha}{4}$, thus $r_2(\alpha) = \frac{3 - 3\alpha}{2} < \frac{6}{5} < \frac{4}{3}$; 
			when $\alpha \in (\frac{7}{27}, \frac{2}{7}]$, $\Delta_2^{\oplus}(\alpha) = \frac{2 + 3\alpha}{5}$, thus $r_2(\alpha) = \frac{4 + 6\alpha}{5} \le \frac{8}{7} < \frac{4}{3}$; 
			when $\alpha \in ( \frac{2}{7}, \frac{1}{3}]$, $\Delta_2^{\oplus}(\alpha) = 2\alpha$, thus $r_2(\alpha) = 4\alpha \le \frac{4}{3}$. 
			
			We next consider the cases when $n \ge 3$ or $k \neq 1$. 
			When $\alpha > \frac{1}{n}$ which means $\alpha \in I(n, 0)$ or $\alpha \in (\frac{1}{n}, \frac{2}{n+2}] \in D(n, 0)$, $\Delta_n^{\circleddash}(\alpha) = \alpha$. 
			Thus, when $\alpha \in I(n, 0)$, $\Delta_n^{\oplus}(\alpha) = \alpha$ and $r_n(\alpha) = 1 < \frac{4}{3}\le \frac{2n}{n+1}$ since $n \ge 2$; 
			when $\alpha \in (\frac{1}{n}, \frac{2}{n+2}]$, $\Delta_n^{\oplus}(\alpha) = \frac{2\cdot(1-\alpha)}{n}$ and $r_n(\alpha) = \frac{2}{n}\cdot \frac{1-\alpha}{\alpha} < 2\cdot (1-\frac{1}{n}) < \frac{2n}{n+1}$. 
			When $\alpha \le \frac{1}{n}$, it follows that $\alpha \in (\frac{1}{n+1}, \frac{1}{n}] \in D(n, 0)$ or $\alpha \in I(n, k)$ with $k \ge 1$ or $\alpha \in D(n, k)$ with $k \ge 1$. 
			In these cases, $\Delta_n^{\circleddash}(\alpha) = \frac{1}{n}$. 
			When $\alpha \in (\frac{1}{n+1}, \frac{1}{n}]$, $\Delta_n^{\oplus}(\alpha) = \frac{2\cdot(1-\alpha)}{n}$ and $r_n(\alpha) = 2\cdot(1-\alpha) < \frac{2n}{n+1}$; 
			when $\alpha \in I(n, k) = (\frac{k+2}{n(k+1 )^2  + k + 2 }, \frac{1}{kn+1}]$ with $k \ge 1$, $\Delta_n^{\oplus}(\alpha) = (k+1)\alpha$ and $r_n(\alpha) = n(k+1)\cdot \alpha \le \frac{kn+n}{kn+1} \le \frac{2n}{n+1}$; 
			when $\alpha \in D(n, k) = (\frac{1}{kn+n+1}, \frac{k+2}{n(k+1 )^2  + k + 2 }]$ with $k \ge 1$, $\Delta_n^{\oplus}(\alpha) = \frac{k+2}{k+1}\cdot\frac{1-\alpha}{n}$ and $r_n(\alpha) = \frac{k+2}{k+1}\cdot (1-\alpha) < \frac{kn+2n}{kn+n+1} \le \frac{3n}{2n+1} < \frac{2n}{n+1}$. 
			
			\subsection{Proof of Claim \ref{claim:large}}
			Note that we actually derive the ranges of $\alpha$ that satisfy $r_n(\alpha; +\infty) > \frac{4}{3}$, which are necessary conditions for $r_n(\alpha; m) > \frac{4}{3}$ but may not be sufficient ones. 
			We use the formulas of $r_n(\alpha)$ derived in the proof of Claim \ref{claim:r} in Subsection \ref{subsec:proof_claim_r}, and only consider the following cases when $r_n(\alpha)$ may be larger than $\frac{4}{3}$. 
			\begin{itemize}
				\item When $\alpha \in (\frac{1}{n}, \frac{2}{n+2}]$, $r_n(\alpha) = \frac{2}{n}\cdot \frac{1-\alpha}{\alpha}$, which is larger than $\frac{4}{3}$ when $\alpha < \frac{3}{2n+3}$. 
				Since $\frac{3}{2n+3} > \frac{1}{n}$ only when $n \ge 4$ and $\frac{3}{2n+3} < \frac{2}{n+2}$, the range is $\alpha \in (\frac{1}{n}, \frac{3}{2n+3})$ with $n \ge 4$.
				\item When $\alpha \in (\frac{1}{n+1}, \frac{1}{n}]$, $r_n(\alpha) = 2\cdot(1-\alpha)$, which is larger than $\frac{4}{3}$ when $\alpha < \frac{1}{3}$. 
				Since $\frac{1}{n+1} < \frac{1}{3}$ only when $n\ge 3$ and $\frac{1}{n} \le \frac{1}{3}$ when $n \ge 3$, the range is $\alpha \in (\frac{1}{n+1}, \frac{1}{n})$ with $n \ge 3$. 
				\item When $\alpha \in I(n, k) = (\frac{k+2}{n(k+1 )^2  + k + 2 }, \frac{1}{kn+1}]$ with $k \ge 1$, $r_n(\alpha) = n(k+1)\cdot \alpha$, which is larger than $\frac{4}{3}$ when $\alpha > \frac{4}{3n(k+1)}$. 
				Note that $\frac{4}{3n(k+1)} < \frac{1}{kn+1}$ is equivalent to $(3-k)n > 4$, which can be satisfied only when $k = 1$ or $k = 2$. 
				When $k = 1$, $(3-k)n > 4$ gives $n \ge 3$, $\alpha > \frac{4}{3n(k+1)}$ is equivalent to $\alpha > \frac{2}{3n}$, and $\frac{k+2}{n(k+2)^2+k+2} = \frac{3}{4n+3}$. 
				Since $\frac{3}{4n+3} \ge \frac{2}{3n}$ when $n \ge 6$, the ranges are $\alpha \in (\frac{2}{3n}, \frac{1}{n+1})$ with $3 \le n \le 5$, and $\alpha \in (\frac{3}{4n+3}, \frac{1}{n+1})$ with $n \ge 6$. 
				When $k = 2$, $(3-k)n > 4$ gives $n \ge 5$, $\alpha > \frac{4}{3n(k+1)}$ is equivalent to $\alpha > \frac{4}{9n}$, and $\frac{k+2}{n(k+2)^2+k+2} = \frac{1}{4n+1}$. 
				Since $\frac{4}{9n} > \frac{1}{4n+1}$, the range is $\alpha\in (\frac{4}{9n}, \frac{1}{2n+1})$ with $n \ge 5$.
				\item When $\alpha \in D(n, k) =(\frac{1}{kn+n+1}, \frac{k+2}{n(k+1 )^2  + k + 2 }]$ with $k \ge 1$, $r_n(\alpha) = \frac{k+2}{k+1}\cdot (1-\alpha)$, which is larger than $\frac{4}{3}$ when $\alpha < \frac{2-k}{3k+6}$. 
				Note that $\frac{2-k}{3k+6} > 0$ only when $k = 1$. 
				Then, $\alpha \le \frac{2-k}{3k+6}$ is equivalent to $\alpha < \frac{1}{9}$, $\frac{1}{kn+n+1}=\frac{1}{2n+1}$ and $\frac{k+2}{n(k+1)^2+k+2} = \frac{3}{4n+3}$. 
				Since $\frac{3}{4n+3} \le \frac{1}{9}$ when $n \ge 6$ and $\frac{1}{9} > \frac{1}{2n+1}$ when $n \ge 5$, the ranges are $(\frac{1}{2n+1}, \frac{3}{4n+3})$ with $n \ge 6$, and $(\frac{1}{2n+1}, \frac{1}{9})$ with $n = 5$. 
			\end{itemize}
			By summarising the above ranges, we complete the proof.

			\subsection{More Experiments}
			
			We observe that in Fig. \ref{fig:ratio:random}, when $n=2$, the majority of random instances fall into the interval of $[1.1, 1.2)$, in contrast to the other values of $n$ that are concentrated within $[1.0, 1.1)$.
			This is in part because the ratio of $m$ over $n$ is larger than $n>2$, given each $m$.
			One may be curious that when $m$ becomes larger and larger to $n$, the majority may be close to the worst-case ratio. 
			Due to this curiosity, we further conduct the following experiment by setting $m = 15 \pm 1$ and $m= 20 \pm 1$, where $n$ is fixed at 2.
			The results are shown in Fig. \ref{fig:n=2:m=15-20}.
			As we can see, the instances get more concentrated within $[1.1, 1.2)$, and the number of instances whose ratios are above 1.2 get less and less.
			
			\begin{figure}[H]
				\centering
				\subfigure[$n=2$, $m=8, 9, 10$]{
					\includegraphics[width=0.30\linewidth]{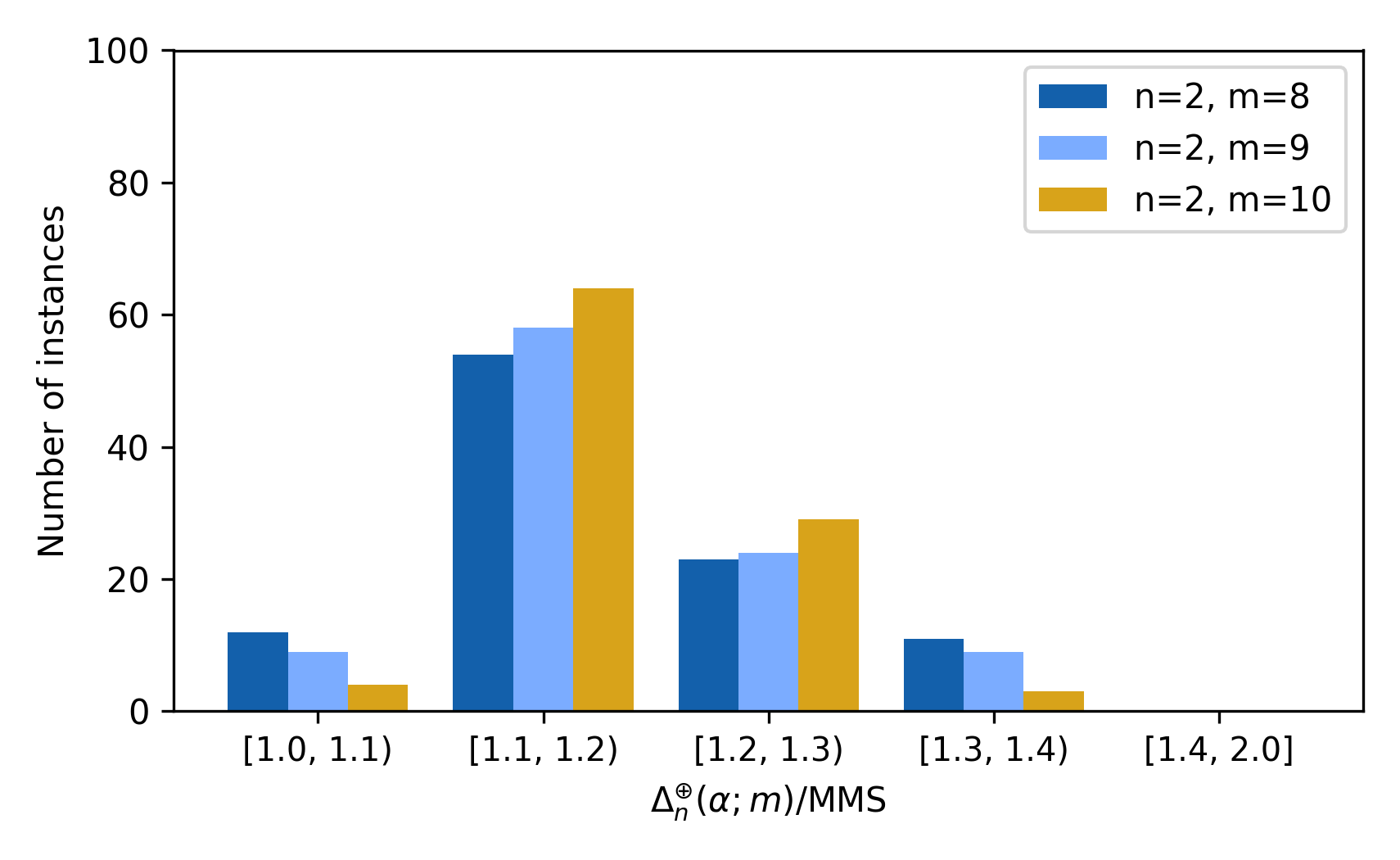}
				}
				\,
				\subfigure[$n=2$, $m=14, 15, 16$]{
					\includegraphics[width=0.30\linewidth]{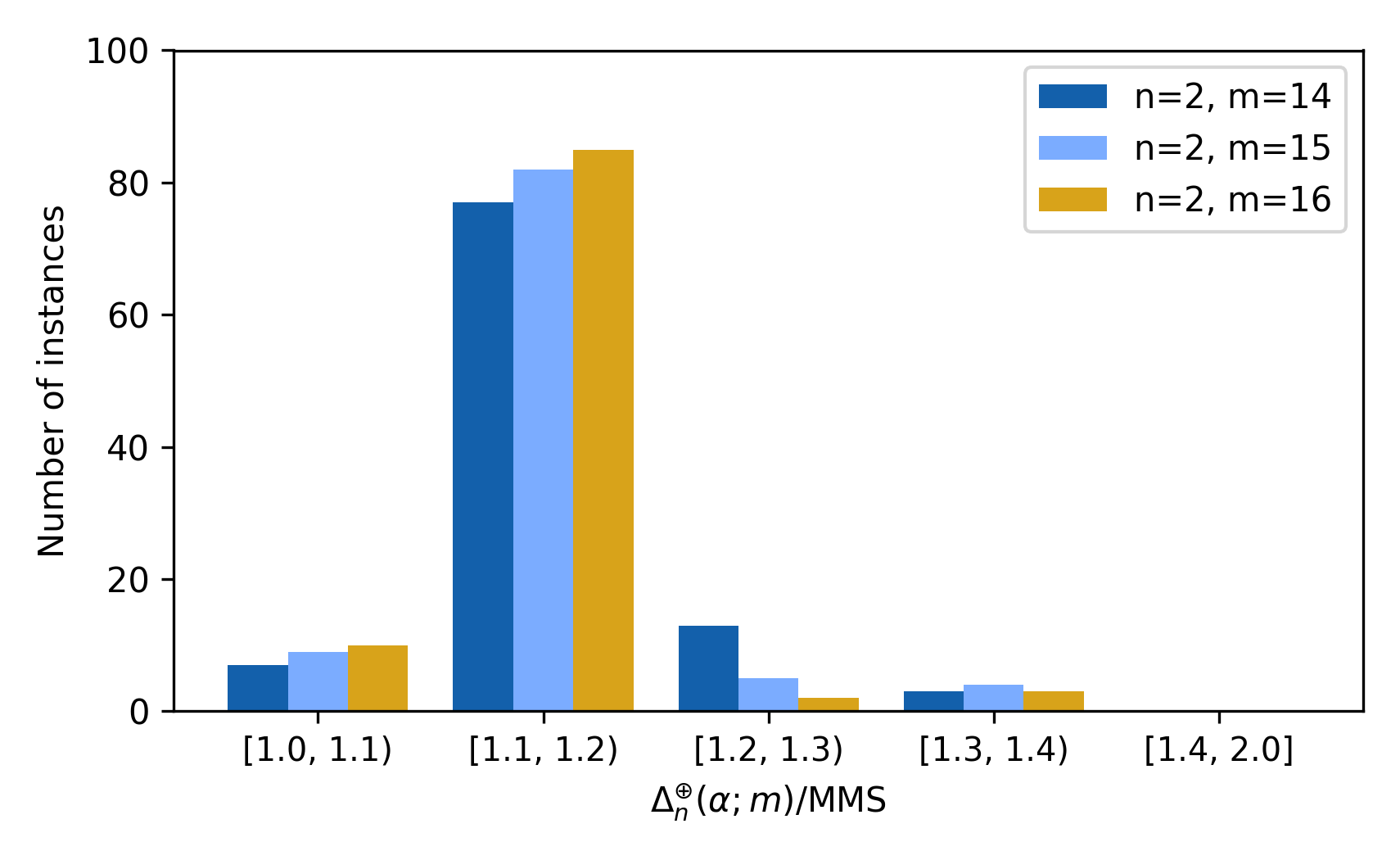}
				}
				\,
				\subfigure[$n=2$, $m=19, 20, 21$]{
					\includegraphics[width=0.30\linewidth]{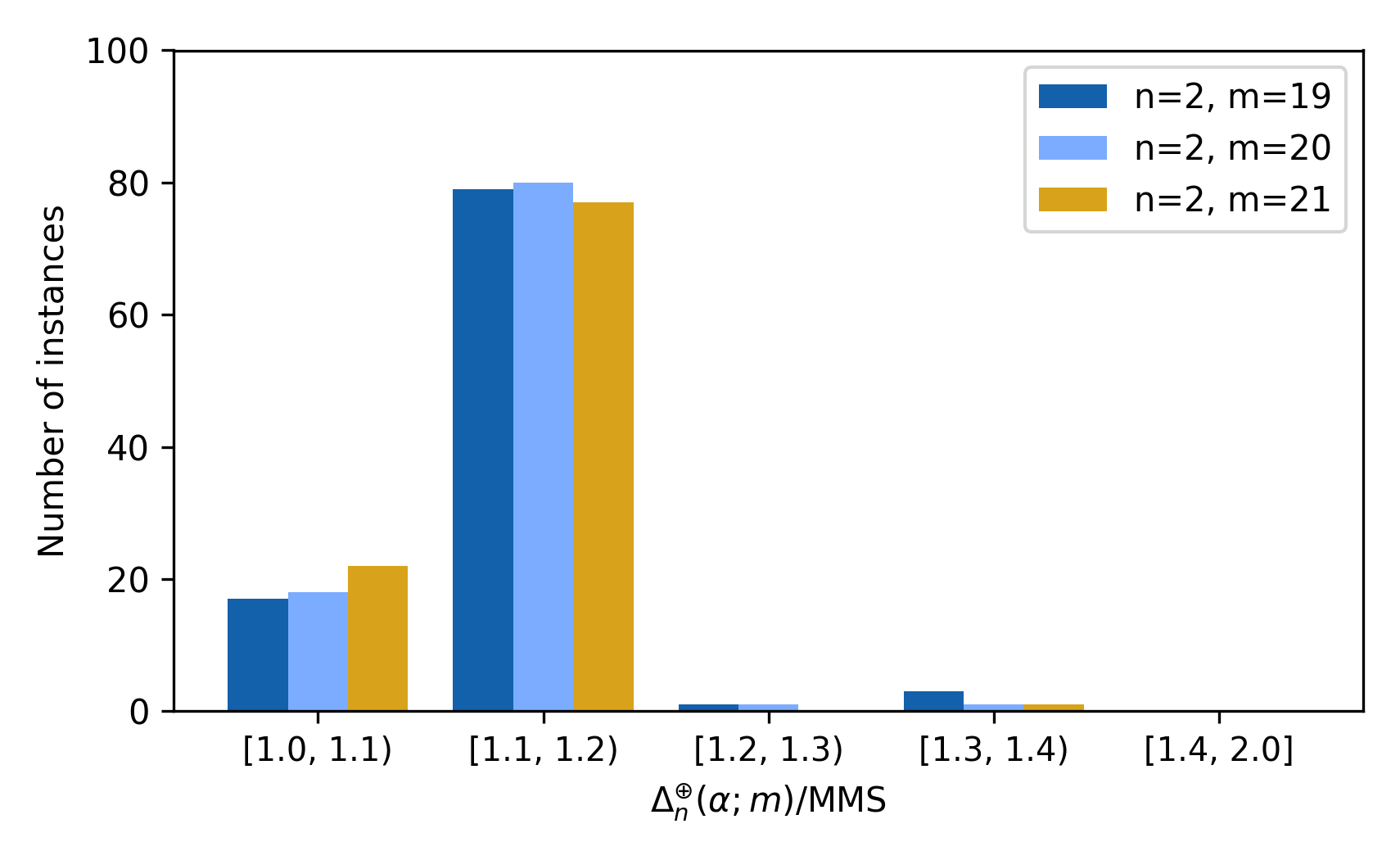}
				}
				\caption{Fixing $n = 2$ and increasing the value of $m$.}
				\label{fig:n=2:m=15-20}
			\end{figure} 
			
			% \begin{figure}[H]
				%     \centering
				%     \subfigure[$m=8$, $n=3, 4, 5$]{
					%         \includegraphics[width=0.31\linewidth]{EC-2023/}
					%     }
				%     \,
				%     \subfigure[$m=9$, $n=3, 4, 5$]{
					%         \includegraphics[width=0.31\linewidth]{EC-2023/}
					%     }
				%     \,
				%     \subfigure[$m=10$, $n=3, 4, 5$]{
					%         \includegraphics[width=0.31\linewidth]{EC-2023/}
					%     }
				%     \caption{Ratios in random data}
				% \end{figure}

	\end{document}